\documentclass[a4paper,11pt]{article} 
\usepackage[T1]{fontenc}                
\usepackage[utf8]{inputenc}             
\usepackage{amssymb,amsthm,amsmath}    
\usepackage{enumerate}
\usepackage{paralist}
\usepackage{authblk}
\usepackage{graphicx}
\usepackage{enumerate}
\usepackage{array}
\usepackage{tikz}
\usetikzlibrary{calc}
\usetikzlibrary{decorations.pathreplacing}

\newtheorem{theorem}{Theorem}
\newtheorem{lemma}{Lemma}
\newtheorem{corollary}{Corollary}
\newtheorem{prob}{Problem}
\newtheorem*{prob*}{Problem}

\newtheorem*{prop*}{Proposition}

\newtheorem*{observ*}{Observation}
\newcommand{\EXTRA}[1]{}
\newcommand{\eps}{\varepsilon}
\newcommand{\Z}{\mathbb Z}
\newcommand{\N}{\mathbb N}

\newcommand{\M}{\mathcal M}

\newcommand{\e}{\varepsilon}
\newcommand{\ov}[1]{\overline{#1}}

\newcommand{\B}{\mathalpha{\star}}    

\newcommand{\derive}[1]{\xrightarrow{#1}}
\newcommand{\zps}{\mathbb Z\text{PCP}_s}
\newcommand{\ops}{\omega\text{PCP}_s}
\newcommand{\fsps}{\text{PCP}_s}
\newcommand{\cv}[2]{\underset{#2}{#1}}

\newcommand{\halt}{H}
\newcommand{\set}[1]{\left\{#1\right\}}
\newcommand{\w}[2]{#1_{1}\cdots #1_{#2}}

\newcommand{\om}{\omega}
\newcommand{\Si}{\mathsf\Sigma}

\newcommand{\Ga}{\mathsf\Gamma}

\newcommand{\fa}{\forall}

\title{On the Complexity of the Bi-infinite Post Correspondence Problem}
\author{Olivier Finkel} 
\affil[1]{Institut de Math\'ematiques de Jussieu - Paris Rive Gauche\\
CNRS, Universit\'e Paris Cit\'e,  Sorbonne Universit\'e, Paris, France. \textit{finkel@math.univ-paris-diderot.fr}}
\author[2]{Vesa Halava}
\affil[2]{Department of Mathematics and Statistics, University of Turku, Finland. \textit{vesa.halava@utu.fi }} 

\begin{document}

\maketitle

\begin{abstract}

In the bi-infinite Post Correspondence Problem ($\Z$PCP), it is asked whether the same bi-infinite word can be constructed correspondingly from a given finite set of pairs of words. In this article, we study its complexity with respect to the arithmetical hierarchy and prove that it is in $\Si^0_2 \setminus (\Pi^0_1 \cup \Si^0_1)$ and, therefore, at the level 2 of the arithmetical hierarchy. For the proof, we present a sequence of reductions starting from the nonhalting of the Turing machine all the way to $\Z$PCP via infinite PCP, an $s$-shift infinite PCP and $s$-shift $\Z$PCP for all natural numbers $s$. In the process, we prove that the infinite PCP is undecidable for injective morphisms, and that the infinite injective PCP, $s$-shift infinite PCP, $s$-shift $\Z$PCP and the non-termination problem for (deterministic and reversible) semi-Thue systems are all $\Pi^0_1$-complete.

\vspace{1em}

\noindent
\textit{Keywords:} Bi-infinite words, Post Correspondence Problem, Undecidability, Bi-infinite PCP, $s$-shift infinite PCP, $s$-shift bi-infinite PCP, injective infinite PCP.
\end{abstract}

\section{Introduction}
In 1946 Emil Post defined and proved undecidable a problem that is today called \emph{Post Correspondence Problem} (PCP) in \cite{Po-46}. Instead of the original definition of the PCP with lists of words by Post, we shall give a definition using word morphisms:
\EXTRA{ 
with the following formulation:

\begin{prob}[Original PCP]\label{PCPlist}
Let $B$ be a finite alphabet. Given a finite set  of pairs of words over $B$, say $\{(u_1,v_1),(u_2,v_2),  \dots,$ $(u_n,v_n)\}$,  does there exist a nonempty finite sequence $i_1,\dots,i_k$ of indices such that 
\begin{equation}\label{solution_list}
u_{i_1}u_{i_2}\cdots u_{i_k} =v_{i_1}v_{i_2}\cdots v_{i_k}\, ?
\end{equation}
\end{prob}
We call the set of pairs $\{(u_i,v_i)\mid i=1,\dots, n \}$ an \emph{instance} of the PCP and a sequence with a property of \eqref{solution} a \emph{solution}. Note carefully that the alphabet $B$ in the above definition can be assumed binary (and actually already Post assumed so).    
The PCP can be expressed in terms of word monoids (also using morphisms) in the following way: recall that
}
Let  $A^*$ denote the set of all finite words over the alphabet $A$. A mapping $g\colon A^* \to B^*$ is a \emph{morphism} if $g(uv)=g(u)g(v)$ for all $u,v\in A^*$.

\begin{prob}[PCP]
Let $A=\{a_1,\dots, a_n\}$ be a finite alphabet. Given two morphisms $g,h\colon A^*\to B^*$,  does there exists a nonempty word $w=a_{i_1}\cdots a_{i_k}\in A^*$ such that 
\begin{equation}\label{solution}
g(w) =h(w)\, ?
\end{equation}
\end{prob}
The pair $(g,h)$ is called an \emph{instance} of the PCP, and the non-empty word $w$ is called a \emph{solution} of the instance $(g,h)$. Moreover, the \emph{size} of an instance $(g,h)$ of the PCP is the size $n$ of the alphabet $A$. Note carefully that the alphabet $B$ in the above definition can be assumed binary (and actually Post already defined the PCP with binary words).

The PCP and its many variants imply a bridge from combinatorial undecidable problems of computational systems and formal rewriting systems to decision problems in algebraic settings, and, therefore,  play an important role in the Theory of Computability. Also, the boundary between decidability and undecidability is revealed by the PCP and its variants. For example, it is known that the PCP is decidable for $n=2$, see \cite{EKR82,HHH02}, and undecidable for $n=5$, see \cite{Neary}. 
 
In this article we concentrate on variants of the PCP where instead of a finite sequence of indices we are asked for infinite or even bi-infinite sequences of indices such that the words built from the words of the corresponding pairs are equal.  Let us begin with the \emph{infinite Post Correspondence Problem} ($\omega$PCP):
\begin{prob}[$\omega$PCP]
Let $A=\{a_1,\dots, a_n\}$ and $B$ be two finite alphabets. Given two morphisms $g,h\colon A^*\to B^*$,  does there exist an infinite word $w=a_{i_1}a_{i_2}\dots$ such that 
\begin{equation*}
g(w) =h(w)\, ?
\end{equation*}
\end{prob}

It is known that the $\omega$PCP is  undecidable for $n=8$, see \cite{dongliu}, and decidable for instances with $n=2$, see \cite{HHK}. It is also known that the $\omega$PCP is not ''more complex'' than the PCP itself with respect to the arithmetical hierarchy as $\omega$PCP is $\Pi_1^0$-complete and the PCP is $\Sigma_1^0$-complete, see \cite{Finkel}. Indeed, this shows that $\omega$PCP is on the first level of the arithmetical hierarchy, as the PCP is.  

Our main motivation for this article is the complexity of the \emph{bi-infinite Post Correspondence Problem} ($\Z$PCP): 
\begin{prob}[$\Z$PCP]
Let $A=\{a_1,\dots, a_n\}$ and $B$ be two  finite alphabets. Given two morphisms $g,h\colon A^*\to B^*$,  does there exist a bi-infinite word  $\alpha= \dots a_{i_{-2}} a_{i_{-1}} a_{i_0} a_{i_1}a_{i_2}\dots$ such that 
\begin{equation}\label{bisol}
g(\alpha)=h(\alpha) ?
\end{equation}
\end{prob}
Note that already the equality of  bi-infinite words needs to be defined properly: for a bi-infinite word $w$, denote by $w(j)$ the letter in position $j\in \Z$ in the sequence $w$. Then $w = v$ if and only if there is a constant $\tau\in \Z$ called the \emph{shift} such that the letters $w(i)=v(i+ \tau)$ for all positions $i \in \Z$. 

An \emph{instance} of the $\Z$PCP is the pair $(g,h)$ of morphisms,  and a bi-infinite word $\alpha$ satisfying \eqref{bisol} is said to be a \emph{solution} of the instance.  The $\Z$PCP was originally proved to be undecidable in \cite{Ruo}, see \cite{bi-inf} for a somewhat simpler proof. Recently, it was proved that the $\Z$PCP is in  the class $\Sigma_2^0$ of the arithmetical hierarchy, see \cite{FHHS}. Moreover, the same work proved that $\Z$PCP \(\notin\Pi^0_1\),
using the undecidability argument of \cite{bi-inf}.
Thus, in order to locate the problem strictly at the second level, it remains
to prove that it is not in \(\Sigma^0_1\); this will follow here from the
\(\Pi^0_1\)-hardness result. In this article we sharpen the result and show that $\Z$PCP is  in $\Sigma^0_2 \setminus (\Pi^0_1 \cup \Sigma^0_1)$ and, therefore, at the level 2 of the arithmetical hierarchy.  For this we shall define  new variants of both $\omega$PCP and $\Z$PCP where the shift is fixed with a given natural number and prove that these variants are $\Pi_1^0$-complete.  We shall define these variants along the proof.  Also, we shall prove that the $\omega$PCP is undecidable for injective morphims, and that this problem is also $\Pi^0_1$-complete. Actually, we shall construct the whole chain of undecidability proof from the non-halting of the Turing machines to these variants of the PCP via the non-termination problem of the semi-Thue systems, which is also proved to be $\Pi^0_1$-complete.    

\section{Preliminaries}

Let $A$ be a finite alphabet. A \emph{word} over the alphabet $A$ is a finite sequence of symbols (also called  letters) in $A$.  We use the notation $|w|$ for the \emph{length} of a word $w$ that is equal to the number of letters in $w$.  We define the indexing of the letters of a word $v$ so that $v=v(0)\cdots v(|v|-1)$, where $v(i)\in A$ for all $i=0,\dots , |v|-1$. The \emph{empty word} is denoted by $\e$ and its length is 0. As usual, we denote the \emph{mirror image} (or the \emph{reversal}) of a word $u$ by $u^R$. The set of all words over $A$ is denoted by $A^*$.

If $v=uw$ for some word $w$, then $u$ is a \emph{prefix} of $v$, denoted by $u\le_p v$. The prefix $u$ is called \emph{proper}, denoted by $u<_p v$ if $w$ is non-empty. Similarly, if $v=wu$ for some word $w$, then $u$ is a \emph{suffix} of $v$, denoted by $u\le_s v$. The suffix $u$ is called \emph{proper}, denoted by $u<_s v$ if $w$ is non-empty. If   $v=wut$ for some words $w$ and $t$, then $u$ is a \emph{factor} of $v$. We say that words $u$ and  $v$ are \emph{comparable}, if $u\le_p v$ or $v\le_p u$.

An \emph{infinite word} $v$ (or $\omega$-word) over the alphabet $A$  is a right-infinite sequence of symbols of $A$, denoted $v=v(0)v(1)\cdots$, where the $v_i$ are letters from $A$.  We denote by $A^\omega$ the set of all infinite words over $A$.  For $v\in A^\omega$ and $n\geq 0$ an integer, we denote $v[n]=v(0)\cdots v(n)$. A \emph{bi-infinite word} over the alphabet $A$ is a two-way infinite sequence of symbols of $A$. We denote the set of bi-infinite words over $A$ by $A^\Z$. For a $w\in A^\Z$, there is an indexing such that $w=\cdots w(-2)w(-1)w(0)w(1)w(2)\cdots $ such that $w(i)\in A$ for all $i\in \Z$. Note that this indexing is not unique. Therefore, we say that two bi-infinite words $w$ and $v$ are equal, that is $w=v$ if there exists a (constant) \emph{shift} $s\in\Z$ such that $w(i)=v(i+s)$ for all $i\in \Z$. Note carefully that for image $h(w)$ of a bi-infinite word $w=\cdots w(-2)w(-1)w(0)w(1)w(2)\cdots $ under a morphisms $h$, we assume that the indexing satisfies the following condition: 
\[
\text{ in } h(w) \text{ the position 0 is the position of } h(w(0))(0), 
\]
that is, in the image the letter with index zero corresponds to the first letter of the image of the letter with index zero in $w$.

Our proofs use several constructions and methods known for the special kind of rewriting systems called semi-Thue systems.  A \emph{semi-Thue system} $T$ is a pair  $(\Gamma,R)$ where  $\Gamma= \{ a_1, a_2, \dots , a_n\}$ is a finite alphabet,  the elements of which are called \emph{generators} of~$T$, and the relation $R \subseteq \Gamma^*
\times \Gamma^*$ is  the set of \emph{rules} of $T$. We write $u \derive{}_{T} v$, if there exists a rule $(x,y) \in R$ such that $u=u_1xu_2$ and $v=u_1yu_2$ for some words $u_1,u_2\in \Gamma^*$. We denote by $\derive{}^*_{T}$ the reflexive and transitive closure of $\derive{}_{T} $, and by $\derive{}^+_{T}$ the transitive closure of $\derive{}_{T} $. Note that the index $T$ is omitted from the notation, when the semi-Thue system studied is clear from the context. If $u\derive{}^* v$ in $T$, we say that there is a derivation from  $u$ to $v$ in $T$. Also, the rules $(x,y)\in R$ are sometimes denoted by $x\derive{} y$. 

\begin{prob}[Non-Termination]
Given a semi-Thue system $T=(\Gamma,R)$ and a word $w\in \Gamma^*$, does there exists an infinite derivation starting from $w$ in $T$. 
\end{prob}
Indeed, we say that a semi-Thue system $T$ \emph{terminates} on $w$ if all derivations starting from $w$ in $T$ are finite.

\EXTRA{
In the \emph{word problem} for semi-Thue systems, is it asked, for a given semi-Thue system $T$ and words $w$ and $u$, whether $w\derive{}_T^* u$. 
}

Our sequence of reductions begins from the classical Turing machines and so called halting problem.  

\EXTRA{A \emph{Turing  machine}, TM for short,  $\mathcal{M}$ is of the form $\mathcal{M}=(Q, \Si, \Ga, \delta, q_0, F)$, where $Q$ is a finite set of states, $\Si$ is a finite input alphabet, $\Ga$ is a finite tape alphabet satisfying $\Si  \subseteq \Ga$, containing a special blank symbol $\Box \in \Ga \setminus \Si$,  $q_0$ is a unique \emph{initial state},  $\delta$ is a \emph{transition} mapping from $Q \times \Ga$ to subsets of $Q \times \Ga \times \{L, R, S\}$, and $F\subseteq Q$ is the set of \emph{accepting states}. We assume that in a TM, the transition mapping is a partial function as the TM's are assumed to be deterministic.}     

A \emph{Turing machine}, TM for short, is a 7--tuple
\[
\M=(Q,{\Sigma},\Gamma,\delta,q_0,\B, F)\,,
\]
where $Q$ is a finite set of states,  $q_0$ is the initial state,  $F\subset Q$
is the set of halting states, ${\Sigma}$ is the input alphabet, $\Gamma$ is the tape
alphabet with ${\Sigma}\subseteq  \Gamma$,  and $\delta$ is a partial function
 $Q\times\Gamma\to Q\times\Gamma\times\{L,R\}$ called the \emph{transition function}
 where $L$ and $R$ are special direction symbols and
$\B\in\Gamma$ is the blank symbol. 

Note that the TM  $\M$ is deterministic and that we allow $\delta$ to be  a partial function. 

Each \emph{transition} of $\M$ is of the
form $\delta(p,a) = (q,b,D)$ where $D$ refers to the direction of the move, that is, values $L$ and $R$ refer to left move
and right move, respectively.

A configuration of a TM $\M$ in its computation is defined as a word
$u(q,a)v\in\Gamma^*(Q\times \Gamma)\Gamma^+$ where $q \in Q$ and $u =
\eps$ or $u$ begins with a nonblank letter, $v = \B$ or $v$ ends with a nonblank
letter. A configuration corresponds to the global state of the Turing machine where $uav$ is the shortest content of the tape containing the square pointed by the read-write head and all nonblank symbols of tape and  $\M$ is reading the symbol
$a$ on its position in state $q$.

A \emph{step} in a computation $\gamma \vdash_\M \gamma'$
yielding from a configuration $\gamma$ to the next one $\gamma'$ in $\M$ is
defined as follows. Let $\gamma=a_1a_2 \cdots a_{i-1} (p, a_i) a_{i+1}\cdots a_n$. Note that if $i=n-1$, then it is possible that $a_n=\B$, and similarly, if $i=1$, then it is possible that $a_i=\B$. 

\smallskip
\noindent
(L) Consider a left transition $\delta(p,a_i)=(q,b,L)$. Then
\[
\gamma \  \vdash_\M \
\begin{cases}
 \w{a}{i-2}(q,a_{i-1})ba_{i+1}\cdots a_n &\text{if }1 < i \text{ and }a_n\ne \B,\\
 (q,\B) ba_{i+1}\cdots a_n & \text{if }i=1,\\
\w{a}{i-2}(q,a_{i-1})b &\text{if }i=n-1 \text{ and } a_n=\B\,. \end{cases}
 \]

\noindent
(R) For a right transition $\delta(p,a_i)=(q,b,R)$, let
\[
\gamma \ \vdash_\M \
\begin{cases}
\w{a}{i-1}b(q,a_{i+1})\cdots a_n &\text{if }i<n-1 \text{ and } a_1\ne\B\\
 \w{a}{i-1}b(q,a_{i+1})\B  &\text{if }i=n-1,\\
b(q,a_{i+1})a_{i+2}\cdots a_n & \text{if }i=1 \text{ and } a_1=\B\,.\end{cases}
\]

We make three general assumptions on TM's. We assume that  
\begin{enumerate}
\item[(M1)]  $F=\set{\halt}$, so there is a unique halting state, 
\item[(M2)] for all $a\in \Gamma$,  $\delta(\halt,a)$ is not defined, that is, when the state $\halt$ is reached, the TM halts,  
\item[(M3)] TM never returns to the initial state $q_0$ after the first move,

\item[(M4)] for all $a\in \Gamma$ and $q\in Q\setminus \set{q_0, \halt}$, $\delta(q,a)$ is defined, and 
\item[(M5)] TM never writes $\B$ to the tape.

\end{enumerate}
All five assumptions can be fulfilled without losing generality, by adding a new unique halting state and, if needed, a so called garbage state and new transitions for these, and adding a new symbol $\B'$ for written blanks on the tape  and new state $q_0'$ together with needed transitions for these to mimic transitions reading $\B$ and returning to $q_0$.
Based on this, for each configuration with $q\in Q\setminus\set{\halt}$, 
$\gamma = u(q,a)v$, there exists exactly one configuration
$\gamma'$ such that $\gamma \vdash_\M \gamma'$.

Let $\vdash_\M^*$ or \, $\vdash^*$, for short, be the reflexive and transitive
closure of the relation $\vdash_\M$. Thus $\gamma \vdash^* \gamma'$ if and only
if there exists a finite sequence
\begin{equation}\label{tm-comput}
\gamma=\gamma_1 \vdash \gamma_2 \vdash \dots \vdash \gamma_k=\gamma'
\end{equation}
of configurations for some $k \ge 1$ including the possibility that $\gamma =
\gamma'$. Such a sequence is called a \emph{computation} of $\M$. It is a
\emph{halting} computation if the state in $\gamma'$ is the unique final state $\halt$.

Our undecidability proof employs the \emph{halting problem} of Turing machines
on empty input; see, e.g., \cite{Manna}.

\begin{theorem}\label{missa}
For a given TM $\M$, it is undecidable whether or not $\M$ halts on empty input,
that is, whether or not $(q_0,\B) \vdash^* u(\halt ,a)v$, where $u,v\in
\Gamma^*$, $a\in \Gamma$.
\end{theorem}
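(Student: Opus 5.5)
This is the classical undecidability of the halting problem on empty input, specialised to machines satisfying (M1)--(M5). The plan is to reduce from the undecidable general halting problem $K=\{\langle \M',x\rangle : \M' \text{ halts on } x\}$, which is proved by the standard diagonal argument. I would include that argument only in outline: if a machine $D$ decided $K$, we could build $D'$ that on input $\langle \M'\rangle$ runs $D$ on $\langle \M',\langle\M'\rangle\rangle$ and loops exactly when $D$ answers ``halts''. Running $D'$ on $\langle D'\rangle$ then gives a contradiction.

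For the reduction, take a pair $\langle \M',x\rangle$ and construct effectively a machine $\M_x$. Started in configuration $(q_0,\B)$, it first writes $x$ on the tape using $|x|$ dedicated states and returns the head to the first symbol of $x$. It then hands control to a copy of $\M'$. The map $\langle\M',x\rangle\mapsto \M_x$ is computable, and $\M_x$ halts on empty input if and only if $\M'$ halts on $x$. Hence halting on empty input is undecidable.

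It remains to check that $\M_x$ can be normalised to satisfy (M1)--(M5) without changing whether it halts, using the modifications the paper already lists.
\begin{enumerate}[(i)]
\item For (M1) and (M2), add a fresh halting state $\halt$, and redirect every undefined pair $(p,a)$ of the old machine (its former halting situations) to a transition entering $\halt$.
\item For (M3), add a fresh start state $q_0'$ that mimics the moves of $q_0$. Every transition that would return to $q_0$ is redirected to $q_0'$.
\item For (M5), add a symbol $\B'$ that the machine writes instead of $\B$. Every transition reading $\B$ is duplicated to also read $\B'$.
\item For (M4), any remaining undefined pairs $(q,a)$ with $q\ne q_0,\halt$ send the machine to a garbage state that moves right forever.
\end{enumerate}
After (i), the only undefined pairs left in the original part are those involving $q_0$ and $\halt$, so the garbage state in (iv) only fills the table for the new auxiliary states. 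It therefore never changes the halting behaviour.

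The main obstacle is this bookkeeping: each modification must preserve halting exactly. The delicate points are the ones in (i) and (iv) just described, together with the check that treating $\B'$ exactly as $\B$ gives computations in step-by-step correspondence with the original ones, up to renaming blank symbols. With these checks done, the theorem follows.
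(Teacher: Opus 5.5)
Your proposal is correct and essentially follows the paper. The paper does not reprove this classical fact; it only cites it and remarks that (M1)--(M5) can be assumed without loss of generality using the modifications you list: a new unique halting state, a garbage state, a substitute symbol for written blanks, and a copy $q_0'$ of the initial state.

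One small slip is in (ii): $q_0$ should stay the initial state, and $q_0'$ should be the copy that returning transitions are redirected to. As you phrase it, the new start state $q_0'$ would itself be re-entered, which violates (M3). For your $\M_x$ this step is vacuous anyway, since its writing states, including the initial one, are never re-entered.
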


We shall also need in the sequel the notion of a Turing machine reading infinite words. We now recall these notions. 

\EXTRA{The {\it first infinite ordinal} is $\om$.
 An $\om$-{\it word} over an alphabet $\Si$ is an $\om$-sequence $a_1a_2a_3 \cdots$, where for all 
integers $ i\geq 1$, ~
$a_i \in\Si$.     
 The {\it set of } $\om$-{\it words} over  the alphabet $\Si$ is denoted by $\Si^\om$.
An  $\om$-{\it language} over an alphabet $\Si$ is a subset of  $\Si^\om$.  For an $\om$-word $\sigma=a_1a_2a_3\cdots$, we denote the prefix $a_1\cdots a_n$ by $\sigma[n]$.
}

Assume that a {Turing 
machine} $\mathcal{M}$ is of the form $\mathcal{M}=(Q, \Si, \Ga, \delta, q_0, F)$, where $F\subseteq Q$ is the set of {accepting states}. Turing machines reading  infinite words have been considered in~\cite{CG78b,Staiger97}.
A Turing machine $\mathcal{M}$ accepts a word 
$\sigma\in \Si^\om$ with the $2$\emph{-acceptance  condition} if and only if there is an infinite run of $\mathcal{M}$ on input $\sigma$ 
visiting infinitely often states from $F$. The $2$-acceptance condition is also now known as the B\"uchi acceptance condition. 
On the other hand,  a Turing machine $\mathcal{M}$ accepts a word 
$\sigma\in \Si^\om$ with  $2'$-\emph{acceptance  condition} if and only if there is an infinite run of $\mathcal{M}$ on $\sigma$ 
visiting only finitely often the accepting states in  $F$. The $2'$-acceptance condition is also now known as the co-B\"uchi acceptance condition.  Moreover, a Turing machine $\mathcal{M}$ reading infinite words over a finite alphabet $\Si$ accepts a word 
$\sigma\in \Si^\om$ for $1'$-acceptance  condition iff there is an infinite run of $\mathcal{M}$ on $\sigma$ 
visiting only states in the set $F \subseteq Q$ of accepting states.  

We require in this article that an accepting  run should be infinite on the input $\sigma\in \Si^\omega$, as in \cite{Staiger97}, and not that it is complete (i.e. we do not require that all the cells of the right-infinite tape of the Turing machine are visited nor that all letters of $\sigma$ are read), or even non-oscillating, as in \cite{CG78b}. We refer the interested reader to \cite{Fin-ambTM} and papers cited in \cite{Staiger97,Fin-ambTM}  for a comparison between these modes of acceptance of infinite words by Turing machines.

We assume the reader to be familiar with the arithmetical hierarchy on subsets of  $\mathbb{N}$, as a general reference we give \cite{rog,Odifreddi1}. We now recall the definition of  the arithmetical   hierarchy   on subsets of  $\Si^\om$ for a finite alphabet $\Si$, see \cite{Staiger97}. 
 An $\om$-language $L\subseteq \Si^\om$  belongs to the class  $\Sigma^0_n$ if and only if there exists a recursive relation  $R_L\subseteq \mathbb{N}^{n-1}\times \Si^\star$  such that
\[
L = \{\sigma \in \Si^\om \mid \exists x_1 Q_2 x_2\ldots Q_n x_n  \quad (x_1,\ldots , x_{n-1}, 
\sigma[x_n+1])\in R_L \}, 
\]
where $Q_i$  for $i=2,\dots, n$ is one of the quantifiers $\fa$ or $\exists$  (not necessarily in an alternating order). An $\om$-language  $L\subseteq \Si^\om$  belongs to the class $\Pi^0_n$ iff its complement $\Si^\om - L$  belongs to the class  $\Sigma^0_n$. The inclusion relations that hold  between the classes $\Sigma^0_n$ and $\Pi^0_n$ are  the same as for the corresponding classes of the Borel hierarchy. The classes $\Sigma^0_n$ and $\Pi^0_n$ are strictly included in the respective classes  ${\bf \Sigma}_n^0$ and ${\bf \Pi}_n^0$ of the Borel hierarchy. 

An important result is that the modes of acceptance of  $\om$-languages by deterministic  Turing machines are connected  to the classes of the  arithmetical   hierarchy.  In particular,  an $\om$-language is in the   arithmetical  class $\Pi^0_2$ (respectively, $\Sigma^0_2$) if and only if it is accepted by a {\it deterministic } Turing machine with $2$-acceptance condition, i.e. B\"uchi acceptance condition (respectively, with $2'$-acceptance condition, i.e., co-B\"uchi acceptance condition), see Corollary 2.3   in \cite{Staiger97}.

\section{From non-halting of TM's to non-termination of reversible semi-Thue systems }\label{sec:revsT}

In this section we shall present a variant of the reduction in \cite{HalHarSF} (see also \cite{HaHa-13}) from deterministic Turing machines to deterministic and reversible semi-Thue systems. The original idea of the construction is based on the construction in \cite{KS10}, and we shall use a version closer to this construction. See also \cite{HHK:PCP} for the details.

We need to start our sequence of reductions from the Turing machines to get the complexity results correct.  Note also, that the reversibility property mentioned above is employed only in the final reduction of this article in  Sec.~\ref{finalred} for the complexity of the $\Z$PCP. 

To obtain injectivity, we modify the construction in \cite{HalHarSF}.

Consider a semi-Thue system $T=(A,R)$. Let $B$ be a nonempty subset of $A$, and define
\[
s_A(B) = (A \setminus B)^* B (A \setminus B)^*
\]
be the set of words over $A$ that contain a single occurrence from the set $B$.
In \cite{HalHarSF}, semi-Thue system $T$ was defined to be $B$-\emph{deterministic} for a nonempty subset $B \subseteq A$, if each rule $(u,v) \in R$ belongs to $s_A(B) \times s_A(B)$, and
for all words $w \in s_A(B)$, there exists at most one derivation from $w$, i.e.,
if $w \derive{}_T u$ and $w \derive{}_T v$, then $u=v$. Further, $T$ is called $B$-\emph{reversible}, if
for all $u_1, u_2 \in s_A(B)$,
$u_1 \derive{}_T v$ and $u_2 \derive{}_T v$ imply $u_1=u_2$ and the
rule applied in these steps is the same.

Let $\M=(Q,{\Sigma},\Gamma,\delta,q_0,\B,\halt) $
be a given Turing machine satisfying the assumptions (M1)--(M5). We define three alphabets. Let
$$
\Delta = \Gamma \cup (Q \times \Gamma)
$$
and
$$
\Theta = \Delta
\times (\{ S \} \cup (Q \times \Gamma)),
$$
where $S$ is a new symbol. We write the second component of an element of
$\Theta$ below the first component. For example, if $q \in Q$ and $a \in
\Gamma$, then $\cv{(q,a)}{S} \in \Theta$ and $\cv{a}{(q,a)} \in \Theta$. 
Finally, let
$$
\Lambda=\Delta \cup \Theta\cup\set{\#,\$ }= (\Gamma\cup (Q\times \Gamma))\cup  (\Delta
\times (\{ S \} \cup (Q \times \Gamma)))\cup\set{\#,\$ },
$$
where $\#$ and $\$ $ are new symbols.

For TM $\M$, we define the semi-Thue system
\[
S_{\M}=(\Lambda, R)
\]
with the following idea: the special symbols~$\#$ and~$\$
$ are the endmarkers of the tape of $\M$. For the initial configuration
$(q_0,\B)$ of $\M$ we have the word $\#\cv{(q,\B)}{S}\$$.
For a symbol $\cv{(q,b)}{S}$, we simulate a step of the TM $\M$ as follows:
 \begin{itemize}\label{behaviour_SM}
     \item[(S1)] we apply the transition of $\M$ and remember the symbol $(q,b)$ in the lower part of the symbol in $\Theta$ (see rules \eqref{eq:sim1} - \eqref{eq:sim4} below). 
     \item[(S2)] Then the symbol $(q,b)$ is walked into the left marker symbol $\#$ (by the
rules \eqref{eq:left}), and 
      \item[(S3)]stored on the left-hand side of it (by the rule
\eqref{eq:forget}). 
\item[(S4)] Then the lower part of symbol in $\Delta\times (\set{S}\cup(Q\times \Gamma))$ is changed to $S$ and  it is walked back to the  configuration position, that is, to  the position with symbol $(q,b)\in Q\times \Gamma$ on the tape (using the rules
\eqref{eq:right}). 
 \end{itemize} 
Note that for writing on the left and right boundary, we need special rules (\eqref{eq:sim2} and \eqref{eq:sim4}), especially adding $\B$ if needed.

The rules in $R$ are defined as follows: Assume that
\begin{align*}
&b,b' \in \Gamma,  \ q\in Q, \  a \in \Gamma\cup\set{\#},  \ c\in  \Gamma\cup\set{\$},\\
&a,c,b' \ne \B, \  q'\ne q_0,\   \text{ and } \  y\in \Delta,  \ z\in \Delta\cup \set{\$}.
\end{align*}
The rules in $R$ are
\begin{align}
  a\cv{b}{S} z
      &\derive{}  a b\cv{z}{S}\, ,&z\notin\set{\$,\B, (q_0,\B)},\ \label{eq:right}\\
    a \cv{(q,b)}{S}c
       &\derive{} \cv{(q',a)}{(q,b)} b'c\,,
       &a\ne \#,\text{ if $\delta(q,b) = (q',b',L)$},\label{eq:sim1}\\
    \# \cv{(q,b)}{S}c
       &\derive{} \# \cv{(q',\B)}{(q,b)} b'c\,,
        &\text{if $\delta(q,b) = (q',b',L)$},\label{eq:sim2}\\
    a\cv{(q,b)}{S} c
        &\derive{}  a b' \cv{(q',c)}{(q,b)}\,,
        &c\ne\$,\text{ if $\delta(q,b) = (q',b',R)$},\label{eq:sim3}\\
    a\cv{(q,b)}{S} \$
        &\derive{} a b' \cv{(q',\B)}{(q,b)} \$\,,
        &\text{if $\delta(q,b) = (q',b',R)$},\label{eq:sim4}\\
   a \cv{y}{(q,b)}z\,,
        &\derive{}  \cv{a}{(q,b)}yz\, ,&a\ne \#\,, \label{eq:left}\\
    \# \cv{y}{(q,b)}z\,,
        &\derive{} (q,b) \# \cv{y}{S}z\,.&\label{eq:forget}
\end{align}

The difference with construction in \cite{HalHarSF} is that we use here triplets in the rules for the injectivity of the constructed morphisms. In \cite{HHK:PCP} the semi-Thue system $S_\M$ was extended to be able to prove that the word problem for $B$-deterministic and $B$-reversible semi-Thue systems is undecidable. As we are interested in (non-)termination here, we may continue with $S_\M$. Indeed, the following lemmata follow from results proved for the extended semi-Thue system $\ov{S}_\M$ in \cite{HHK:PCP}. The proofs are similar to the proofs in \cite{HalHarSF}.   

\begin{lemma}\label{inj:lemma1}
Let $\M$ be a given Turing machine. Then the semi-Thue system $S_{\M}$ is
$\Theta$-deterministic and $\Theta$-reversible.
\end{lemma}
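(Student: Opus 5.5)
The proof is a direct inspection of the rule set $R$, in three parts.

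\emph{Rule shape.} First we check that every rule $(u,v)\in R$ lies in $s_\Lambda(\Theta)\times s_\Lambda(\Theta)$. Each left-hand side in \eqref{eq:right}--\eqref{eq:forget} contains exactly one symbol of $\Theta$: the middle letter, either $\cv{b}{S}$, $\cv{(q,b)}{S}$ or $\cv{y}{(q,b)}$. The context letters $a,c,z$ range over $\Gamma\cup\set{\#,\$}$ or $\Delta\cup\set{\$}$, which are disjoint from $\Theta$. The same holds for each right-hand side, where the single $\Theta$-letter is $\cv{z}{S}$, $\cv{(q',a)}{(q,b)}$, $\cv{(q',\B)}{(q,b)}$, $\cv{(q',c)}{(q,b)}$, $\cv{a}{(q,b)}$ or $\cv{y}{S}$. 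Thus for $w\in s_\Lambda(\Theta)$, any rule applicable to $w$ must be applied at a window containing the unique $\Theta$-letter of $w$ in the prescribed position. Since every rule has length three with the $\Theta$-letter in a fixed position, the occurrence is determined.

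\emph{Determinism.} Fix the unique $\Theta$-letter $t$ of $w$ together with its left neighbour $x$ and right neighbour $x'$. We split into cases on the lower component of $t$.
\begin{itemize}
\item If the lower component is $S$ and the upper component lies in $\Gamma$, only \eqref{eq:right} applies, and its result is determined by $x,t,x'$.
\item If the lower component is $S$ and the upper component is $(q,b)\in Q\times\Gamma$, only \eqref{eq:sim1}--\eqref{eq:sim4} apply. Which one is fixed by the direction in $\delta(q,b)$, whether $x=\#$, and whether $x'=\$$. Since $\delta$ is a function, the right-hand side is unique.
\item If the lower component lies in $Q\times\Gamma$, only \eqref{eq:left} or \eqref{eq:forget} apply, distinguished by whether $x=\#$.
\end{itemize}
The side conditions make these alternatives mutually exclusive, so each $w\in s_\Lambda(\Theta)$ has at most one successor.

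\emph{Reversibility.} This is the main work. Suppose $u_1\derive{}v$ and $u_2\derive{}v$. We read off from the unique $\Theta$-letter of $v$ and its neighbours which rule was used and where, and then check that the left-hand side is recoverable.
\begin{itemize}
\item Lower component $S$: the step came from \eqref{eq:right} or \eqref{eq:forget}. Rule \eqref{eq:forget} leaves the pattern $(q,b)\#$ immediately to the left of the $\Theta$-letter, whereas in \eqref{eq:right} the left neighbour is $b\in\Gamma$. The preimage is then $\dots a\cv{b}{S}z\dots$ or $\dots\#\cv{y}{(q,b)}z\dots$ respectively.
\item Lower component $(q,b)$ and upper component in $\Gamma\cup\set{\#}$: rule \eqref{eq:left} was used, since the simulation rules produce an upper component in $Q\times\Gamma$. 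Here the exclusion $a\ne\#$ matters.
\item Lower component $(q,b)$ and upper component $(q',e)\in Q\times\Gamma$: either \eqref{eq:left} with $a=(q',e)$ was used, or a simulation rule. Rule \eqref{eq:left} moves the $\Theta$-letter one step left with upper component $a$, whereas the simulation rules create it at the head position. I expect separating these to be the delicate point. I would use that the lower component $(q,b)$ together with the letters around it determines $\delta(q,b)=(q',b',D)$. For $D=L$ the next letter is $b'$ with $b'\ne\B$ by (M5). For $D=R$ the previous letter is $b'$. One then checks consistency: a configuration word contains at most one letter from $Q\times\Gamma$, namely the head, which the $\Theta$-letter carries. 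So a letter $y\in Q\times\Gamma$ to the right, as \eqref{eq:left} would need with $y=(q',\cdot)$, cannot occur alongside the encoded head. Where it does occur, determinism of $\delta$ together with (M3), $q'\ne q_0$, and $z\ne(q_0,\B)$ should pin down the rule.
\end{itemize}
If a pure local argument fails in the last case, I would restrict, as the earlier constructions in \cite{HalHarSF,HHK:PCP} do, to words reachable from $\#\cv{(q_0,\B)}{S}\$$, where an invariant holds: exactly one head symbol, carried either in the tape or in the lower part. This restriction suffices for the later use of the lemma.
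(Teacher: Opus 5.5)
Your rule-shape and determinism parts are fine, but the reversibility part has a gap in exactly the case that needs work.

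Your third bullet rests on a misreading of \eqref{eq:left}. There $a\in\Gamma\cup\{\#\}$ with $a\ne\#$ and $a\ne\B$, so $a\in\Gamma\setminus\{\B\}$. Hence this rule never produces a $\Theta$-letter whose upper component lies in $Q\times\Gamma$, and the conflict between \eqref{eq:left} and the simulation rules that you call delicate does not exist. What does need checking, and what your proposal never addresses, is how to tell \eqref{eq:sim1} from \eqref{eq:sim2}, and \eqref{eq:sim3} from \eqref{eq:sim4}, from the result alone. The neighbouring letters you appeal to cannot do this ($b'c$ on the right, resp.\ $ab'$ on the left), because both rules of each pair produce them. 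The separating feature is the side conditions $a\ne\B$ and $c\ne\B$. Rules \eqref{eq:sim2} and \eqref{eq:sim4} create the upper component $(q',\B)$, whereas \eqref{eq:sim1} and \eqref{eq:sim3} create $(q',a)$ and $(q',c)$ with $a,c\ne\B$. These conditions are indispensable: if $a=\B$ were allowed and $\delta(q,b)=(q',b',L)$, both $\#\B\cv{(q,b)}{S}c$ and $\#\cv{(q,b)}{S}c$ would rewrite to $\#\cv{(q',\B)}{(q,b)}b'c$.

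With this observation the argument closes purely locally:
\begin{itemize}
\item the lower component $(q,b)$ fixes $\delta(q,b)$ and hence the direction;
\item the upper component then fixes the rule;
\item the offset of the $\Theta$-letter in that right-hand side fixes the rewritten window;
\item each rule is injective as a map from its left-hand side to its right-hand side.
\end{itemize}
This is the content of the paper's remark that every word of $s_\Lambda(\Theta)$ contains at most one right-hand side of a rule as a factor.

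Your fallback of restricting to words reachable from $\#\cv{(q_0,\B)}{S}\$$ is not an acceptable substitute. First, it proves a weaker statement than the lemma. Second, it does not cover the later use. Reversibility enters Lemma~\ref{no_biinf_sol} through Lemma~\ref{no:inf_rev_derive}, where it is applied to an encoded configuration extracted from a hypothetical bi-infinite solution. Such a configuration is not known to be reachable from the initial word. Handling these possibly unreachable words is exactly why the paper needs reversibility on all of $s_\Lambda(\Theta)$, so the single-head invariant you would rely on is not available there.
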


\begin{proof}
Clearly, both sides of rules have words from $s_{\Lambda}(\Theta)$  for all rules in $R$.  The uniqueness of the derivation for $\Theta$-determinism follows from the determinism of TM $\M$.

Also, the $\Theta$-determinism guarantees the $\Theta$-reversibility.  All  words in $s_{\Lambda}(\Theta)$ contain at most one of the right-hand sides of the rules in $R$ as a factor.                                        
\end{proof}

\begin{lemma} \label{lem:eid}
Let  $\alpha_0=(q_0,\B) \vdash \alpha_1\vdash \alpha_2\vdash \dots$ be a
computation of  the TM $\M$, where $\alpha_i= u_i x_i v_i$ with $u_i, v_i \in
\Gamma^*$ and $x_i \in Q \times \Gamma$. Then for the (unique) derivation
$$\beta_0=\# \cv{(q_0,\B)}{S} \$ \derive{}_{S_{\M}} \beta_1\derive{}_{S_{\M}}
\beta_2 \derive{}_{S_{\M}} \dots ,
$$
there are indices $0 = j_0 < j_1 < j_2 < \dots$ such that
\begin{equation} \label{eq:eid}
    \beta_{j_i} =
    x_0 \dots x_{i-1} \# u_i \cv{x_i}{S} v_i \$\,.
\end{equation}
\end{lemma}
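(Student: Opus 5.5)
We argue by induction on $i$. For $i=0$ we take $j_0=0$, and $\beta_0=\#\cv{(q_0,\B)}{S}\$$ has the form \eqref{eq:eid} with the empty word $x_0\cdots x_{-1}$ and $u_0=v_0=\varepsilon$. For the inductive step, assume $\beta_{j_i}=x_0\cdots x_{i-1}\#u_i\cv{x_i}{S}v_i\$$, where $\alpha_i=u_ix_iv_i$ is not halting (so $\alpha_{i+1}$ exists). Write $x_i=(q,b)$. By Lemma~\ref{inj:lemma1} every word of $s_\Lambda(\Theta)$ has at most one successor, so the derivation is unique. We therefore only have to exhibit one finite sequence of rule applications leading from $\beta_{j_i}$ to a word of the form \eqref{eq:eid} for $i+1$. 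That sequence follows the four phases (S1)--(S4).

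\emph{Phase (S1).} The only $\Theta$-symbol is $\cv{(q,b)}{S}$, and its upper component lies in $Q\times\Gamma$. Rules \eqref{eq:right}, \eqref{eq:left} and \eqref{eq:forget} require an upper component in $\Gamma$ or a lower component in $Q\times\Gamma$, so none of them applies, and exactly one of \eqref{eq:sim1}--\eqref{eq:sim4} applies. Which one is fixed by the direction in $\delta(q,b)$ and by whether the left neighbour is $\#$ or the right neighbour is $\$$. We then check, by the same case split as in the definition of $\vdash_\M$, that erasing the lower component produces exactly the tape content $u_{i+1}x_{i+1}v_{i+1}$. For a left move at the left end, rule \eqref{eq:sim2} inserts $(q',\B)$. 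For a right move at the right end, rule \eqref{eq:sim4} inserts $\B$. Assumption (M5) (no $\B$ is ever written) ensures that the boundary normalizations of configurations agree, apart from the trailing-blank case. That case must be checked against the third line of (L): the configuration drops the final $\B$, and we have to see how the word handles it.

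\emph{Phases (S2) and (S3).} The word now has the form $x_0\cdots x_{i-1}\#w\cv{y}{(q,b)}w'\$$. Rule \eqref{eq:left} moves the lower component $(q,b)$ one position left at each step, and this repeats until the marked symbol is adjacent to $\#$. Rule \eqref{eq:forget} then emits $(q,b)=x_i$ to the left of $\#$ and resets the lower component to $S$. This gives $x_0\cdots x_i\#\cv{y}{S}\cdots\$$.

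\emph{Phase (S4).} Rule \eqref{eq:right} moves the $S$-mark right through letters of $\Gamma$. The side conditions $z\ne\$$ and $z\notin\{\B,(q_0,\B)\}$ stop the mark at the first symbol of $Q\times\Gamma$, which is $x_{i+1}$. Assumption (M3) guarantees that this symbol is not $(q_0,\B)$, since the machine never returns to $q_0$. At this point the word is $\beta_{j_{i+1}}$ with $j_{i+1}>j_i$.

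The main obstacle is the bookkeeping at the boundaries, together with showing that the walk in (S4) stops exactly at the head position and not earlier. Stopping early could happen if a $\B$ occurred inside $u_{i+1}$, or in the trailing-blank situation. I would handle this by proving, as part of the induction hypothesis, that $\B$ occurs in $u_iv_i$ only as the final letter $v_i=\B$ or inside $x_i$; this follows from (M5) and the definition of configurations. I would then verify each of the six subcases of (L) and (R) against rules \eqref{eq:sim1}--\eqref{eq:sim4}. In the remaining phases the only thing to check is that no other rule is applicable, which is immediate from the position of the unique $\Theta$-letter.
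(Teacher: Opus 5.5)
Your outline follows the paper's proof: induction on $i$, one application of the appropriate rule among \eqref{eq:sim1}--\eqref{eq:sim4}, a walk to the left with \eqref{eq:left}, the storing step \eqref{eq:forget}, and a walk back with \eqref{eq:right}. The paper just asserts that the resulting tape word is $\alpha_{i+1}$ and does no boundary bookkeeping. You do attempt it, but the obstacle you single out, the trailing blank, is resolved the wrong way by your proposed invariant.

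In $S_\M$ a plain letter $\B$ never occurs. The initial word is $\#\cv{(q_0,\B)}{S}\$$, every letter written by \eqref{eq:sim1}--\eqref{eq:sim4} is some $b'\neq\B$, and the only blanks ever created sit inside head symbols $(q',\B)$. So when the paper's rule (R) appends $\B$ (second line, head moving onto the last cell), the matching rule \eqref{eq:sim3} or \eqref{eq:sim4} leaves the new head symbol directly in front of $\$$. Two consequences follow:
\begin{itemize}
\item Your Phase (S1) claim, that erasing the lower component yields exactly $u_{i+1}x_{i+1}v_{i+1}$, is false whenever $v_{i+1}=\B$.
\item Your invariant permits a word of the shape $\cdots\cv{x_i}{S}\B\$$, and such a word admits no rule at all. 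Each of \eqref{eq:sim1}--\eqref{eq:sim4} requires the right neighbour of the head to be a non-blank letter of $\Gamma$ or $\$$, so the simulation would stop rather than continue.
\end{itemize}

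The repair is to carry the invariant ``$\beta_{j_i}$ contains no plain $\B$'' and to prove \eqref{eq:eid} with $v_i$ read as $\varepsilon$ whenever $v_i=\B$. This convention is already implicit in $\alpha_0=(q_0,\B)$ and $v_i\in\Gamma^*$ in the statement, although it clashes with the paper's own definition of configurations ($v\in\Gamma^+$, $v=\B$ allowed). With it the boundary cases work out:
\begin{itemize}
\item The third line of (L) is simulated by \eqref{eq:sim1} with $c=\$$, or by \eqref{eq:sim2} if the head is at the left end.
\item Both right-move cases at the right end, \eqref{eq:sim3} with $c$ the last letter and \eqref{eq:sim4}, produce a head adjacent to $\$$.
\item Your (S2) and (S4) arguments go through, since \eqref{eq:left} and \eqref{eq:right} only need non-blank plain neighbours. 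The walk in (S4) then halts exactly on the unique letter of $Q\times\Gamma$, which is not $(q_0,\B)$ because $q'\neq q_0$.
\end{itemize}
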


\begin{proof}
The proof is by induction on $i$. First, the case for $i=0$ with $j_0=0$ is clear as $\beta_0$ is of the required form.
Assume next that the claim holds for $i$. We prove that
\begin{equation}
    \beta_{j_i}= x_0 \dots x_{i-1} \# u_i \cv{x_i}{S} v_i\$  \derive{}^+_{S_{\M}}
    x_0 \cdots x_i \# u_{i+1} \cv{x_{i+1}}{S} v_{i+1} \$\,.
\end{equation}
Applying the appropriate rule \eqref{eq:sim1} - \eqref{eq:sim4} gives us that 
\begin{equation}
    \beta_{j_i+1} =
    x_0 \cdots x_{i-1} \# u_{i+1} \cv{x_{i+1}}{x_i} v_{i+1} \$
\end{equation}
and then using the rule \eqref{eq:left} sufficiently many times, and after that the rule
\eqref{eq:forget}, we obtain
\begin{equation}
  \beta_{j_i+1}\derive{}^*_{S_{\M}}  x_0 \cdots x_i \# \cv{y}{S} v \$\,,
\end{equation}
where $y v =u_{i+1} x_{i+1} v_{i+1} = \alpha_{i+1}$. Applying the rule \eqref{eq:right} sufficiently
many times, we move the symbol $S$ under the symbol $x_{i+1}\in Q\times
\Gamma$ in $yv$. Therefore there exists an index $j_{i+1}$ such that
$\beta_{j_{i+1}}$ is of the form \eqref{eq:eid}. 
\end{proof}

Next theorem follows from Theorem~\ref{missa} and the previous lemma. 

\begin{theorem}\label{TM:sT}
The non-termination problem is undecidable for semi-Thue systems. Indeed, TM $\M$ does not halt on the empty input if and only if $S_\M$ is non-terminating for the initial word $\# \cv{(q_0,\B)}{S} \$$.     
\end{theorem}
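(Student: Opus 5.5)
We prove the equivalence in Theorem~\ref{TM:sT}; undecidability then follows from Theorem~\ref{missa}, since $\M\mapsto (S_\M,\#\cv{(q_0,\B)}{S}\$)$ is computable. Halting and non-halting are complementary, so undecidability of one gives undecidability of the other. The proof has two directions, and both rest on Lemma~\ref{inj:lemma1} (determinism) together with Lemma~\ref{lem:eid}.

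\emph{Non-halting implies non-termination.} If $\M$ does not halt on empty input, then by (M4) every configuration has a successor, so the computation $\alpha_0\vdash\alpha_1\vdash\cdots$ is infinite. Lemma~\ref{lem:eid} gives an infinite increasing sequence of indices $j_0<j_1<\cdots$ with each $\beta_{j_i}$ defined. Hence the derivation from $\beta_0$ is infinite.

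\emph{Halting implies termination.} Suppose the computation halts at $\alpha_k=u_k(\halt,a)v_k$. Every word reachable from $\beta_0$ contains exactly one letter of $\Theta$, because the rules preserve $s_\Lambda(\Theta)$. So by $\Theta$-determinism the derivation from $\beta_0$ is unique. It therefore passes through $\beta_{j_k}=x_0\cdots x_{k-1}\#u_k\cv{(\halt,a)}{S}v_k\$$. It remains to check that only finitely many steps follow, and this is the main obstacle, because we must go through the rule list. In $\beta_{j_k}$ the $\Theta$-letter has lower component $S$ and upper component in $Q\times\Gamma$.
\begin{itemize}
\item Rules \eqref{eq:sim1}--\eqref{eq:sim4} require $\delta(\halt,a)$ to be defined, which (M2) forbids.
\item Rule \eqref{eq:right} requires the upper part $b\in\Gamma$, not a state pair.
\item Rules \eqref{eq:left} and \eqref{eq:forget} require a lower component in $Q\times\Gamma$, not $S$.
\end{itemize}
So $\beta_{j_k}$ is irreducible, the unique derivation is finite, and $S_\M$ terminates on $\beta_0$.

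For the segments between $\beta_{j_i}$ and $\beta_{j_{i+1}}$ with $i<k$, Lemma~\ref{lem:eid} already shows they are finite: the leftward walk by \eqref{eq:left} and the rightward walk by \eqref{eq:right} each move over a finite word. Determinism guarantees that no alternative branch exists, so the derivation cannot leave the path described in Lemma~\ref{lem:eid}.
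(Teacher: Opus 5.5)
Your proposal is correct and follows the paper's route. Like the paper, you use Theorem~\ref{missa} and Lemma~\ref{lem:eid}, with $\Theta$-determinism forcing the unique derivation to stop at the irreducible word $x_0\cdots x_{k-1}\#u_k\cv{(\halt,a)}{S}v_k\$$; you spell out the rule-by-rule irreducibility check that the paper only states in a remark after the theorem, and one small fix: since (M4) excludes $q_0$, the existence of the first move should rest on the paper's standing claim that every configuration with state other than $\halt$ has exactly one successor, not on (M4) alone.
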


Note that if $\M$ halts, then it halts on configuration 
\[
u(\halt,a)v, \quad \text{ for some }a\in \Gamma. 
\]
As there is no transition in $\M$ for $(\halt,a)$, in $S_\M$ the last word of the derivation from the initial word is $\# \cv{(q_0,\B)}{S} \$$ is of the form 
\[
 x_0 \dots x_{n}\# u \cv{(\halt,a)}{S} v \$ .
\]

\begin{theorem}\label{thm:stcomp}
The non-termination problem for semi-Thue systems is $\Pi^0_1$-complete. Moreover, the termination problem for semi-Thue systems is $\Sigma^0_1$-complete.

\end{theorem}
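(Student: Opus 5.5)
The statement has two halves: membership of non-termination in $\Pi^0_1$, and $\Pi^0_1$-hardness via Theorem~\ref{TM:sT}. The claims about termination then follow by complementation.

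\emph{Upper bound.} Given $T=(\Gamma,R)$ and $w$, the set $D_n$ of words reachable from $w$ in exactly $n$ steps is finite and computable uniformly in $n$. This holds because $R$ is finite and each word has only finitely many rule applications. I claim that $T$ has an infinite derivation from $w$ if and only if $D_n\neq\emptyset$ for every $n$. The forward direction is trivial. For the converse, consider the derivation tree rooted at $w$: it is finitely branching, and the hypothesis makes it infinite, so K\"onig's lemma yields an infinite branch, which is an infinite derivation. The condition ``$\forall n\, D_n\ne\emptyset$'' has the form $\forall n\, P(n)$ with $P$ recursive, so non-termination is in $\Pi^0_1$. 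Equivalently, termination is $\exists n\, D_n=\emptyset$, which is $\Sigma^0_1$. The application of K\"onig's lemma is the only point that needs care. Without it, ``there exists an infinite derivation'' would naively be a $\Sigma^1_1$ statement, and finite branching is exactly what brings it down to $\Pi^0_1$.

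\emph{Hardness.} The halting problem on empty input, $\{\M : (q_0,\B)\vdash^* u(\halt,a)v\}$, is $\Sigma^0_1$-complete, so its complement, non-halting on empty input, is $\Pi^0_1$-complete. Theorem~\ref{TM:sT} gives a computable map $\M\mapsto (S_\M,\ \#\cv{(q_0,\B)}{S}\$)$. The assumptions (M1)--(M5) can be imposed effectively on any given machine without changing whether it halts, so this map is a many-one reduction. Under it, $\M$ does not halt if and only if $S_\M$ is non-terminating on the initial word. Hence non-termination is $\Pi^0_1$-hard, and combined with the upper bound it is $\Pi^0_1$-complete.

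The same map reduces halting to termination, so termination is $\Sigma^0_1$-hard. Together with the $\Sigma^0_1$ upper bound, termination is $\Sigma^0_1$-complete. In fact the systems $S_\M$ produced here are $\Theta$-deterministic and $\Theta$-reversible by Lemma~\ref{inj:lemma1}. Therefore both completeness results already hold for that restricted class, as the abstract claims.
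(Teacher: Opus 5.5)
Your proof is correct. The hardness half matches the paper's: non-halting on empty input is $\Pi^0_1$-complete, and Theorem~\ref{TM:sT} supplies the many-one reduction. The membership half takes a different route.

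The paper does not argue directly. It codes infinite derivations as infinite binary words and builds a deterministic Turing machine with $1'$-acceptance that accepts exactly these codes. From this it concludes that the solution set is an effective $\Pi^0_1$ subset of $\{0,1\}^\omega$. It then cites Staiger and Cenzer--Remmel for the fact that non-emptiness of such a set is a $\Pi^0_1$ question.

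You instead note that the level sets $D_n$ are finite and uniformly computable. K\"onig's lemma on the finitely branching derivation tree then gives the explicit $\Pi^0_1$ form $\forall n\, D_n\neq\emptyset$.

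Both rest on the same compactness fact, since the cited non-emptiness result is itself a K\"onig-type argument on a tree of prefixes. Your version is self-contained and shows that finite branching is the whole point. It also avoids a detail the paper leaves implicit: the recognizing machine must reject malformed codes after finitely many steps, for example an input whose last word never ends. This works only because one rewriting step changes the length by a bounded amount.

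The paper's route buys uniformity. The same Finkel-style machinery is reused for the $\Pi^0_1$ upper bounds of the injective $\omega$PCP and of $\ops$. There one would otherwise build the analogous finitely branching tree of prefix-comparable partial solutions.

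You also spell out the complementation argument for termination, which the paper dismisses as following from the first claim.
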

\begin{proof}
It is well known that the non-halting problem of  deterministic Turing machines is $\Pi^0_1$-complete. Therefore, the above construction gives that the non-termination problem for semi-Thue systems is $\Pi^0_1$-hard. 

To show that it is $\Pi^0_1$-complete we still need to show that it is in $\Pi^0_1$.

For that, we follow the steps of the construction in~\cite{Finkel} for the complexity of the $\omega$PCP. 

 Let now an instance ${\rm Ins}=((A,R), w)$ of the non-termination problem for semi-Thue systems be given by a semi-Thue system $(A,R)$ and a finite word $w\in A^\star$. Then we can easily code infinite derivations of this semi-Thue system starting from the word $w$ by infinite words over the  finite alphabet $X=A \cup \{\derive{} \}$, hence also over the alphabet $\{0, 1\}$. 
 
 Then we can easily associate, in a recursive manner, to this instance ${\rm Ins}=((A,R), w)$ a deterministic Turing machine $\mathcal{M}$ with  $1'$-acceptance  condition
  reading infinite words over the alphabet  $\{0, 1\}$ and accepting an infinite word 
$x\in \{0, 1\}^\om$ if and only if  $x$ is (the code of) a solution of the  instance $\rm{Ins}$ of the non-termination problem for semi-Thue systems. 

 On the other hand,  the set of infinite words accepted by such a deterministic  Turing machine with  $1'$-acceptance  condition is known to be an effective $\Pi_1^0$-subset of $\{0, 1\}^\om$ (see~\cite{Staiger97}). Thus the set of solutions of the non-termination problem of instance  $\rm{Ins}$ is an effective $\Pi_1^0$-set accepted by a 
deterministic Turing machine $\mathcal{M}$ with  $1'$-acceptance  condition which can be constructed from $\rm{Ins}$. 

 Staiger proved in \cite[p. 638]{Staiger93} that it is $\Pi_1^0$-complete to decide whether such a 
$\Pi_1^0$-set is non-empty. This is also stated by Cenzer and Remmel in \cite[Theorem 4.1 (ii)]{CenzerRemmel03}. 

Therefore,  the problem to determine whether a given instance $\rm{Ins}$ of  the non-termination problem for semi-Thue system has a solution 
is in the class $\Pi_1^0$. This ends the proof. 

The second claim follows from the first one. 
\end{proof}

Note that the previous two theorems could have also been stated for the special case of $\Theta$-deterministic and $\Theta$-reversible semi-Thue systems. These stronger special properties of semi-Thue system $S_{\mathcal{M}}$ will become useful later in our constructions.

For that purpose, we prove the following lemma. 
\begin{lemma}\label{no:inf_rev_derive}
Let $\beta\in (Q\times \Gamma)^*\# \Delta^* \Theta \Delta^*\$ $. Then there is no infinite reverse derivation starting from $\beta$ with the rules of $S_\M$
\end{lemma}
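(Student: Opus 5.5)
We argue by a potential function that strictly decreases along reverse steps, or more precisely by showing that any reverse derivation from $\beta$ has length bounded in terms of $\beta$. The natural measure is the length of the prefix $x_0\cdots x_{k-1}\in (Q\times\Gamma)^*$ to the left of $\#$, together with the position of the $\Theta$-symbol.

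First observe that every rule of $S_\M$ maps words of the shape $(Q\times \Gamma)^*\# \Delta^* \Theta \Delta^*\$$ to words of the same shape. Reversely, a word $\beta'$ with $\beta'\derive{}\beta$ also has this shape, since the left-hand sides of the rules are again in $s_\Lambda(\Theta)$ and only involve the markers in the same roles. By Lemma~\ref{inj:lemma1} ($\Theta$-reversibility), each word has at most one predecessor, so a reverse derivation from $\beta$ is unique. We must show that it terminates.

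Classify the reverse steps by rule type.
\begin{itemize}
\item Reversing \eqref{eq:forget} removes one letter of $(Q\times\Gamma)$ before $\#$. This can happen at most $k$ times, where $k$ is the number of such letters in $\beta$, since no rule (forward or backward) adds letters there except \eqref{eq:forget} forward.
\item Between two reverse applications of \eqref{eq:forget}, the length of the word does not grow beyond one extra symbol per simulated step, and the tape part has bounded length. The lower component of the $\Theta$-symbol is either $S$ or some $(q,b)$.
\item Reverse \eqref{eq:right} moves an $S$-marked symbol left. Reverse \eqref{eq:left} moves a $(q,b)$-marked symbol right.
\item Reverse \eqref{eq:sim1}--\eqref{eq:sim4} turns a $(q,b)$ lower component back into $S$.
\end{itemize}
Within one phase between reverse \eqref{eq:forget} steps, the pattern in reverse is therefore: $S$-walk left (bounded by the tape length), then possibly the reverse of a simulation rule, then a $(q,b)$-walk right (bounded), then reverse \eqref{eq:forget}. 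The point is that the $(q,b)$-walk right ends only at $\$$-boundedness or at a reverse \eqref{eq:forget}. The $S$-walk left must end at $\#$, since the reverse of \eqref{eq:right} requires $a\in\Gamma\cup\{\#\}$ to the left. Once the symbol sits next to $\#$ with lower $S$, the only possible reverse step is reverse \eqref{eq:forget}, which requires a letter $(q,b)$ immediately before $\#$.

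The main obstacle is to exclude cycling inside a phase, that is, a reverse simulation step followed by moves that regenerate an $S$ without consuming the prefix. Here I would use the fact that a reverse \eqref{eq:sim} step requires lower component $(q,b)$ and produces lower $S$. The only way to obtain a lower $(q,b)$ again in reverse is via reverse \eqref{eq:forget}, because reverse \eqref{eq:left} and reverse \eqref{eq:sim} keep or remove it, and reverse \eqref{eq:right} keeps $S$. Hence each phase contains at most one reverse simulation step, and each phase other than the first ends with a reverse \eqref{eq:forget}. Every phase has length at most linear in the current word length, which is bounded by $|\beta|$ plus a constant times $k$.

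So the total length of any reverse derivation is at most roughly $(k+1)\cdot C(|\beta|+k)$, and in particular finite. The remaining work is to check the bookkeeping of rule \eqref{eq:forget}: it is the unique rule whose reverse consumes a prefix letter, and the lower marker after it is $S$ forward, hence $(q,b)$ after reversal. This follows directly from the form of the rules.
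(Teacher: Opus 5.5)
Your argument is correct and is essentially the paper's. Each reverse application of \eqref{eq:forget} consumes one letter of the $(Q\times\Gamma)$-prefix before $\#$, no other reverse rule touches that prefix, and between two such steps the lower component can change from $(q,b)$ to $S$ at most once (by a reverse simulation step), so only finitely many monotone walking steps fit in between. Your stated order inside a phase is reversed, however: after a reverse \eqref{eq:forget} the $(q,b)$-marked symbol walks right, a reverse simulation rule turns it back into an $S$-marked symbol, and that symbol walks left to $\#$, because reverse \eqref{eq:forget} needs lower component $S$. Your lower-component bookkeeping already gives the correct bound, so this slip does not affect the proof.
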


\begin{proof}
Clearly, by the fact that $S_\M$ is $\Theta$-deterministic and reversible, there exists at most one word $\beta_{j_1}$ (necessarily also $\beta_{j_1}\in (Q\times \Gamma)^*\# \Delta^* \Theta \Delta^*\$ $ such that 
$\beta_{j_1}\derive{} \beta$ in $S_M$. The same holds for $\beta_{j_1}$, and therefore, there  exists at most one  sequence of words $\beta_{j_i}\in (Q\times \Gamma)^*\# \Delta^* \Theta \Delta^*\$ $ such that $\beta_{j_{i+1}}\derive{} \beta_{j_i}$.  

Set $\beta=\beta_{j_0}$ and let $\beta_{j_i}=w_{j_i}\#u_{j_i}X_{j_i}v_{j_i}\$ $ with $w_{j_i}\in (Q\times \Gamma)^*$ and $X_{j_i}\in \Theta$ for $i\in \N$. Consider the behaviour of $S_\M$ described on page~\pageref{behaviour_SM} in (S1)--S(4). In reverse derivation, the symbols in $\Theta$ travel to symbol $\#$ and then, reverse to (S3), remove a symbol from left of $\#$. Then the symbols in $\Theta$ travel to the right, cancel a transition of $\M$, then travel left to the symbol \# and again remove a symbol left from it. Therefore, in the sequence $(\beta_{j_i})$, there is a finite subsequence $(\beta_{t_i})$ with $|w_{t_{i+1}}|< |w_{t_i}|$. Now, there are only finitely many reverse transitions after the word $w$ has been totally removed, implies that the sequence $(\beta_{j_i})$ is finite. 

\end{proof}

\section{Injective infinite PCP}

We shall next construct the morphisms for the proof of injective infinite PCP. 
Let $\Delta_1 = \Delta \cup \{\#, \$\}$. Therefore, 
\[
\Lambda =\Delta_1 \cup \Theta\,.
\]
Let the semi-Thue system $S_{\M}=(\Lambda, R)$ be  defined as in the previous section. We define two more alphabets which will serve as the domain and the image alphabet for the morphisms defined next, let
\[
A=\Delta_1\cup R\cup\set{d}= \Gamma \cup (Q\times\Gamma)\cup \{\#, \$,d\}\cup R ,
\]
where $R=\{t_1,\ldots, t_n\}$ is the set of rules of
$S_{\M}$, and $d$ is a new symbol. Finally, let
\[
B=\Lambda\cup\set{d}.
\]

We shall define two morphisms
\[
g, h\colon  A^* \to B^*
\]
following the idea of construction in \cite{Cl-80}. However, instead of the binary alphabet $\{a,b\}$, the alphabet will be $A$. Note that the symbols from $\Theta$ are not in $A$ and they appear only in the images of the rules $t\in R$.

First, we define two special mappings. Let $Y$ be an alphabet and let $k$ be a non-empty word over some alphabet. Let
$\ell_k, r_k \colon Y^*\to (Y\cup \{s\})^*$
be the left and right \emph{desynchronizing morphisms} defined by
$$\ell_k(a) = k a \quad \text{ and } \quad  r_k(a) = ak$$
for all letters $a \in Y$. Clearly, $\ell_k(w)\cdot k = k\cdot r_k(w)$ for all words $w$.

Define the morphisms $g,h\colon A^*\to B^*$ by 
\begin{equation} \label{inj_oPCP}
\begin{aligned}
 & g(x)= r_d(x), &&\quad h(x)= \ell_d(x) &&\text{for } x\in \Delta_1  \\  
  & g(t)=r_d(u), &&\quad h(t) = \ell_d(v)  && \text{for } t= (u\to v) \in R,    \\     
& g(d)=d,   &&\quad  h(d) = \ell_d(\#\cv{(q_0,\B)}{S}\$).  &&     \\     \end{aligned}
\end{equation}    
The next lemma shows that the morphisms $g$ and $h$ simulate the semi-Thue
system $S_{\M}$, and consequently, they simulate the given TM $\M$.

\begin{lemma} \label{lem:sim}
Let $\beta_0=\#\cv{(q_0,\B)}{S}\$ \derive{} \beta_1 \derive{} \beta_2 \derive{} \ldots$ be a
derivation in $S_{\M}$, where $\beta_j=w_1uw_2$ and
$\beta_{j+1}=w_1vw_2$ with  $t=(u\derive{} v)\in R$. Then
\[
g(w_1tw_2) = r_d(\beta_j)\quad \text{ and }\quad h(w_1tw_2) = \ell_d(\beta_{j+1}).
\]
Furthermore, if, for some word $w\in A^*$, $r_d(\beta_j) \leq_p g(w)$ or
$\ell_d(\beta_{j+1}) \leq_p h(w)$, then $w_1tw_2 \leq_p w$.
\end{lemma}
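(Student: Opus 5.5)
The first claim is a direct computation from the definition \eqref{inj_oPCP}. Since $\beta_j\in\Lambda^*$ and $w_1,w_2$ are factors of $\beta_j$, the question is whether $w_1,w_2$ lie in $\Delta_1^*$, so that $g$ and $h$ act on them as $r_d$ and $\ell_d$. This is where $\Theta$-determinism enters. By Lemma~\ref{inj:lemma1} and Lemma~\ref{lem:eid}, every word in the derivation from $\beta_0$ lies in $s_\Lambda(\Theta)$, that is, it contains exactly one symbol of $\Theta$. Every rule $u\derive{} v$ has $u,v\in s_\Lambda(\Theta)$, so the unique $\Theta$-symbol of $\beta_j$ lies inside $u$, and therefore $w_1,w_2\in\Delta_1^*\subseteq A^*$. (Strictly, the prefix $x_0\cdots x_{i-1}$ consists of letters of $Q\times\Gamma\subseteq\Delta$, so it is also covered.) Hence $w_1tw_2\in A^*$. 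Since $g=r_d$ on $\Delta_1$, $g(t)=r_d(u)$, and $r_d$ is a morphism, we get $g(w_1tw_2)=r_d(w_1)r_d(u)r_d(w_2)=r_d(w_1uw_2)=r_d(\beta_j)$. Symmetrically, $h(w_1tw_2)=\ell_d(w_1vw_2)=\ell_d(\beta_{j+1})$.

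For the second claim I will parse $g(w)$, and $h(w)$, letter by letter from the left. The key structural fact is that $g$ maps each letter $x\in\Delta_1$ to $xd$ and each rule $t=(u\to v)$ to $r_d(u)$, whose length is at least $2\cdot 2$ because $|u|\ge 2$ for all rules. Moreover, $r_d(u)$ contains a symbol of $\Theta$, while the images of letters in $\Delta_1\cup\{d\}$ contain no $\Theta$-symbols. Suppose $r_d(\beta_j)\le_p g(w)$ and write $w=a_1a_2\cdots$. I will show by induction that $a_1\cdots a_{|w_1|}=w_1$. While the next still-unmatched letter of $r_d(\beta_j)$ is a letter $x\in\Delta_1$ lying in the $w_1$ part, the next letter $a$ of $w$ has $g(a)$ beginning with $x$. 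This rules out $a=d$, since $g(d)=d$ and $x\neq d$. If $a$ were a rule $t'=(u'\to v')$, then $r_d(u')$ would have to be a prefix-comparable factor of $r_d(\beta_j)$ starting at this position. It would then force a $\Theta$-symbol at the position of the $\Theta$-symbol of $u'$, which comes before the unique $\Theta$-symbol of $\beta_j$ (the one sitting in $u$). That is a contradiction, unless the occurrence of $u'$ starts within $w_1$ and covers the $\Theta$-position. In that remaining case, both $u$ and $u'$ occur in $\beta_j$ containing the same $\Theta$-symbol. The main step I expect to need care with is exactly this: showing that the occurrence of $u'$ then coincides with $u$ and $t'=t$. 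The plan is to invoke $\Theta$-determinism, since two different applicable rules, or two occurrences of left-hand sides around the single $\Theta$-letter, would give two distinct successors of $\beta_j$. The inspection of the rules \eqref{eq:right}--\eqref{eq:forget} shows that the left-hand sides are determined by the $\Theta$-letter together with its neighbors, and hence the position of $u$ is forced. So $a_{|w_1|+1}=t$, after which the remaining letters of $w_2$ are parsed exactly as in the first stage. Because $r_d(\beta_j)$ ends with the $d$ of the last letter, the parsing consumes exactly $w_1tw_2$, giving $w_1tw_2\le_p w$.

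The case $\ell_d(\beta_{j+1})\le_p h(w)$ is handled symmetrically. Here each image begins with $d$, and the parse proceeds in blocks $dx$. The letter $d$ is excluded at any position because $h(d)=\ell_d(\beta_0)$ begins $d\#$, so it can only match where a $\#$ appears. That match is impossible at the first position of $w_1$ when $w_1$ starts with $\#$ ($j\ge0$ with a nonempty prefix before $\#$, or otherwise a length/$\Theta$ comparison with $\beta_0$). I expect this boundary case, where $h(d)$ might align with $\ell_d(\beta_{j+1})$, to be the other delicate point. I would settle it by observing that $h(d)$ contains $\cv{(q_0,\B)}{S}$, which by (M3) never reappears after the first step, and when $j+1\ge1$ the $\Theta$-letter of $\beta_{j+1}$ is different or sits in a different position.
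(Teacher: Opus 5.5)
Your proposal is correct and takes the same approach as the paper. The unique $\Theta$-letter forces $w_1,w_2\in\Delta_1^*$. In parsing $g(w)$, that letter can only come from a rule letter whose left-hand side, always of the form $x\,X\,y$ with $X\in\Theta$ in the middle and distinct for distinct rules by determinism of $\M$, fixes both $t$ and its position; this is the paper's argument written out in more detail. For the $h$-case, which the paper only calls ``similar'', the uniqueness of the occurrence of $v$ in $\beta_{j+1}$ comes from $\Theta$-reversibility (Lemma~\ref{inj:lemma1}) rather than determinism, and your separate exclusion of the letter $d$ via $\cv{(q_0,\B)}{S}$ is the right extra check.
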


\begin{proof}
Let
\[
\beta_j = x_1 \cdots x_m X y_1 \cdots y_n\,,
\]
where  $X \in \Theta$ and $x_i, y_i \in \Delta_1$ for each $i$. If
$r_d(\beta_j) \leq g(w)$, then $w$ must begin with
\[
w'=x_1 \cdots x_{m-1} t y_2 \cdots y_n,
\]
where $t=(x_m X y_1\derive{} v)\in R$, i.e.,  $u=x_mXy_1$. Clearly, $g(w') =
r_d(\beta_j)$ and $ h(w')=\ell_d(x_1 \cdots x_{m-1}v y_2 \cdots y_n). $
By the definition of $h$, the word $v$ in transition of
$x_m X y_1\derive{} v$ is uniquely
defined and it corresponds
to the rule of  $S_{\M}$ used in $\beta_j\to \beta_{j+1} $ so that
$h(w') = \ell_d(\beta_{j+1})$.

Finally, if $\ell_d(\beta_{j+1}) \leq_p h(w)$, then similarly it can be seen that $w$
begins with $w' = x_1 \cdots x_{m-1} t y_2 \cdots y_n$. This proves the claim.
\end{proof}

\begin{lemma} \label{lem:a}
Let $\M$ be a Turing machine and $S_{\M}$ be the corresponding semi-Thue
system. Then $S_\M$ is non-terminating for the initial word
 $\#\cv{(q_0,\B)}{S}\$ $ if and only if there exists an infinite word $w\in A^\omega$ such that
 $g(w)=h(w)$.
\end{lemma}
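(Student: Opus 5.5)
For the forward direction I would take the infinite derivation $\beta_0=\#\cv{(q_0,\B)}{S}\$\derive{}\beta_1\derive{}\beta_2\derive{}\cdots$ and build $w=d\,w^{(0)}w^{(1)}w^{(2)}\cdots$. Here $w^{(j)}=w_1tw_2$ is the word given by Lemma~\ref{lem:sim} for the step $\beta_j\derive{}\beta_{j+1}$, so that $g(w^{(j)})=r_d(\beta_j)$ and $h(w^{(j)})=\ell_d(\beta_{j+1})$.

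The point is that the prefix $d$ shifts the two images by one block. We have $g(d)=d$ and $h(d)=\ell_d(\beta_0)$. Using the identity $\ell_k(x)k=kr_k(x)$ inductively, I would check that
\[
g(dw^{(0)}\cdots w^{(n)})\,\ell_d(\beta_{n+1})=d\,r_d(\beta_0)\cdots r_d(\beta_n)\,\ell_d(\beta_{n+1})=\ell_d(\beta_0)\cdots\ell_d(\beta_{n+1})\,d ,
\]
and that the right-hand side equals $h(dw^{(0)}\cdots w^{(n)})\,d$. So for every $n$ one image is a prefix of the other, the lag being exactly one block $\ell_d(\beta_{n+1})$ up to a trailing $d$. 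The lengths of the images tend to infinity, since every image of a letter is nonempty. Hence the limits agree and $g(w)=h(w)$.

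For the converse, suppose $g(w)=h(w)$ with $w\in A^\omega$. First I would show that $w$ must begin with $d$. Every $g$-image of a letter other than $d$ begins with a letter of $\Lambda$, whereas every $h$-image begins with $d$. Since the first letters of $g(w)$ and $h(w)$ must agree, the first letter of $w$ is $d$, and $h(w)$ then begins with $\ell_d(\beta_0)$.

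Next I would argue by induction. Suppose $w$ has a prefix $dw^{(0)}\cdots w^{(n-1)}$ corresponding to a derivation $\beta_0\derive{}\cdots\derive{}\beta_n$, and suppose $g$ of the remaining suffix must begin with $r_d(\beta_n)$ so as to match $h$'s overhang. The delicate point is that this overhang must actually be $r_d(\beta_n)$ and not $\ell_d(\beta_n)$; I would get this by cancelling the overhang $\ell_d(\beta_n)$ against $d\,r_d(\cdots)$ through the same identity. Then the ``furthermore'' part of Lemma~\ref{lem:sim} forces the next block of $w$ to be some $w_1tw_2$ with $\beta_n=w_1uw_2$ and $t=(u\to v)\in R$. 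In particular a rule applies to $\beta_n$, so $\beta_n\derive{}\beta_{n+1}$. We also need that $w$ contains no further $d$ in the middle. This holds because in $h(w)$ every $d$ is followed by a letter of $\Lambda$, while $g(d)=d$ would force a $dd$ or misaligned pattern.

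Iterating gives an infinite derivation from $\beta_0$, which is exactly non-termination of $S_\M$. I expect the main obstacle to be the bookkeeping in this converse direction: showing that the alignment invariant, namely that the overhang is exactly one desynchronized configuration, is preserved. That is where the uniqueness statement of Lemma~\ref{lem:sim} and the single-$\Theta$ structure of the words $\beta_j$ are used.
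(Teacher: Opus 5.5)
Your proposal is correct and follows the paper's proof: the same word $w=d\,x_0t_0y_0\,x_1t_1y_1\cdots$ in the forward direction and iterated use of the ``furthermore'' part of Lemma~\ref{lem:sim} in the converse. The paper argues the converse by contradiction, ending at a configuration containing $\cv{(\halt,a)}{S}$ that no $g$-image contains; you instead extract an applicable rule directly at each stage, and you also make explicit the overhang alignment and the absence of further $d$'s, which the paper leaves implicit.

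One slip to fix: your displayed identity is false as written, because it would require $d\,\ell_d(\beta_{n+1})=\ell_d(\beta_{n+1})\,d$, which never holds. The correct relation is $g(dw^{(0)}\cdots w^{(n)})\,r_d(\beta_{n+1})=h(dw^{(0)}\cdots w^{(n)})\,d$. This still shows that $g$ of each prefix is a prefix of $h$ of it, with lag $d^{-1}\ell_d(\beta_{n+1})$, so your conclusion stands.
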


\begin{proof}
Assume first that $S_{\M}$ is not terminating for $\#\cv{(q_0,\B)}{S}\$$, that is, there exists 
$$
\beta_0=\#\cv{(q_0,\B)}{S}\$ \derive{}\beta_1\derive{} \beta_2 \derive{} \ldots\,,
$$
where  $\beta_j=x_j u_j y_j$ and $\beta_{j+1}=x_j
v_j y_j=x_{j+1}u_{j+1}y_{j+1}$ with $t_j=(u_j\derive{} v_j)\in R$ being the rule
applied in the derivation step  $\beta_j\derive{} \beta_{j+1}$. Now, for
$w=dx_0t_0y_0x_1t_1y_1\cdots $, we have
\begin{equation*}
\begin{split}
g(w)&=dr_d(\beta_0\beta_1\cdots ) \\
&=\ell_d(\#\cv{(q_0,\B)}{S}\$)\ell_d(\beta_1\beta_2 \cdots)=h(w)\,,
\end{split}
\end{equation*}
and hence $g(w)=h(w)$ as required.

Assume next that there is a  word $w\in A^\omega$ such that $g(w)=h(w)$, and, to the  contrary, that $S_\M$ terminates on $\#\cv{(q_0,\B)}{S}\$ $.

Necessarily, $w=dw'$ for some word $w'\in A^\omega$, since only $g(d)$ and $h(d)$ are prefix comparable. Now,
$$h(d)=\ell_d(\#\cv{(q_0,\B)}{S}\$)=\ell_d(\beta_0)\,,$$
and therefore, by
Lemma~\ref{lem:sim}, $x_0t_0y_0\le w'$, where $\beta_0=x_0 u_0 y_0$ and
$\beta_{1}=x_0 v_0 y_0=x_{1}u_{1}y_{1}$ and, moreover, $t_0=(u_0\derive{} v_0)\in R$ is
the rule applied in the first step $\beta_0\derive{} \beta_{1}$. Also,
$h(x_0t_0y_0)=\ell_d(\beta_1)$, and by Lemma~\ref{lem:sim},
$x_0t_0y_0x_1t_1y_1 \le w'$. By applying Lemma~\ref{lem:sim} iteratively, we obtain
a derivation
\[
\beta_0=\#\cv{(q_0,\B)}{S}\$ \derive{}\beta_1\derive{}^* \beta_n
\]
in ${S_{\M}}$. Finally, $g(dw')^{-1}
h(dw')=\ell_d(\beta_n)$, and, as $\M$ halts only in state $\halt$ and, if and only if $S_\M$ terminates, we must have 
\[
\beta_n=x_n\cv{(\halt,a)}{S} y_n,
\]
for some words $x_n$ and $y_n$ and $a\in  \Gamma$. Now the letter $\cv{(\halt,a)}{S}$ does not appear in images of $g$, so there is no way to cover $\ell_d(\beta_n)$ with images of $g$, a contradiction.
\end{proof}

Let $A$ and $B$ be finite alphabets. A morphism $h\colon A^* \to B^*$ is said to be of \emph{bounded delay $n$}, if there are no different letters $a,b \in A$ such that $h(au) \leq_p h(bv)$ for some words $u,v \in A^n$ of length $n$. Moreover, a morphism $h$ is of \emph{bounded delay} if it is so for some $n$.

Now, it can be proved that the morphisms $g$ and $h$ are of $2$-\emph{bounded delay} and, therefore, injective.

\begin{lemma} \label{lem:b}
The morphisms $g$ and $h$ are of bounded delay $2$.
\end{lemma}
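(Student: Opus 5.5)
The plan is to treat $g$ and $h$ separately, using the shape of the images. Every image $g(a)$ and $h(a)$, $a\in A$, is an interleaving of the letters of a word over $\Lambda$ with the marker $d$. Concretely, $g(x)=xd$ and $h(x)=dx$ for $x\in\Delta_1$, $g(d)=d$, $g(t)=r_d(u)$ and $h(t)=\ell_d(v)$ for $t=(u\to v)\in R$. Recall from Lemma~\ref{inj:lemma1} and its proof that every left-hand side and every right-hand side of a rule lies in $s_\Lambda(\Theta)$, so it contains exactly one letter of $\Theta$. Moreover, $\Theta$-determinism and $\Theta$-reversibility should give that distinct rules have distinct left-hand sides and distinct right-hand sides; I would check this directly from \eqref{eq:right}--\eqref{eq:forget}, since it is what the argument really uses. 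Suppose $a\neq b$ are letters and $c(au)\le_p c(bv)$ with $c\in\{g,h\}$ and $|u|=|v|=2$. In each case I aim to derive a contradiction by locating the unique $\Theta$-letter, or by comparing the images letter by letter.

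For $g$ the argument is short and in fact gives delay $0$. Among the images of $g$, only $g(d)$ starts with $d$, so $a,b\neq d$. The image $g(x)=xd$ with $x\in\Delta_1$ contains no $\Theta$-letter. The image $g(t)=x\,d\,X\,d\,y\,d$, where $u=xXy$ is the left-hand side of $t$, has its $\Theta$-letter in position $2$, and it is the only kind of image that contains a $\Theta$-letter at all. If $\{a,b\}=\{x,t\}$, then after the common prefix $xd$ the longer word continues with $X\in\Theta$. The shorter word continues with the first letter of some $g$-image, which lies in $\Delta_1\cup\{d\}$; this is a contradiction. If $a=t$ and $b=t'$ are distinct rules, then their left-hand sides differ. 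Since both $g(t)$ and $g(t')$ have length $6$ and consist of the left-hand side interleaved with $d$'s, they already differ within these $6$ letters, so they are incomparable.

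For $h$ every image begins with $d$, so the first informative letter is the second one. This is where the delay $2$ enters, because a short image can be continued by further images until the letters agree. The case split is by the pair $(a,b)$. For $x,x'\in\Delta_1$ with $x\neq x'$ the words $h(x)=dx$ and $h(x')=dx'$ differ at once. For two rules $t\neq t'$ the right-hand sides are distinct words of $s_\Lambda(\Theta)$ of length $3$ or $4$. If neither right-hand side is a prefix of the other, $h(t)$ and $h(t')$ already disagree. Otherwise one has to look at the continuation, where the position of the $\Theta$-letter and the fact that each $h$-image contributes exactly one non-$d$ letter per $d$ should force a mismatch within two more letters.

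The main obstacle is the mixed cases: $a\in\Delta_1$ against $b\in R\cup\{d\}$, and $a=d$ against $b=\#$ or against a rule beginning with $\#$. For example, $h(d)=d\#dXd\$$ with $X=\cv{(q_0,\B)}{S}$ agrees with $h(\#)$ on its first two letters. It could only be extended by $u$ if the first letter of $u$ were a rule whose right-hand side begins with $X$ followed by $\$$. Here I would use the explicit list of right-hand sides in \eqref{eq:sim1}--\eqref{eq:forget} and the restrictions there ($q'\neq q_0$, $z\neq(q_0,\B)$, and the lower components of the $\Theta$-letters). Together with the position of the unique $\Theta$-letter, these show that after at most two further letters of $u$ or $v$ the two $h$-words must disagree, because one contains a $\Theta$-letter at a position where the other has a letter of $\Delta_1$ or a second $\Theta$-letter. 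I expect this finite but fiddly inspection of the right-hand sides to be the bulk of the work. Injectivity of $g$ and $h$ then follows from bounded delay in the standard way.
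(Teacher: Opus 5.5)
There is a genuine gap, and it lies in the mixed case $a\in\Delta_1$, $b\in R$ that you leave to a finite inspection. For $R$ as literally defined, that inspection cannot succeed, because the statement fails for $h$. The side condition $z\neq(q_0,\B)$ in \eqref{eq:right} keeps the letter $\cv{(q_0,\B)}{S}$ of $h(d)$ out of the right-hand sides of \eqref{eq:right}. The analogous condition is missing in \eqref{eq:forget}, where $y\in\Delta$ is unrestricted. Hence, for any $(q,b)\in Q\times\Gamma$, the rule $t_*=\bigl(\#\cv{(q_0,\B)}{(q,b)}\$\derive{}(q,b)\#\cv{(q_0,\B)}{S}\$\bigr)$ belongs to $R$, and
\[
h(t_*)=d\,(q,b)\,d\,\#\,d\,\cv{(q_0,\B)}{S}\,d\,\$=h\bigl((q,b)\bigr)\,h(d).
\]
So $h((q,b)\,d\,w)=h(t_*\,w)$ for every word $w$, and $h$ is not even injective. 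Your closing heuristic says that a $\Theta$-letter in one word must face a letter of $\Delta_1$ or a second $\Theta$-letter in the other. It breaks here because the continuation letter $d$ supplies exactly the $\Theta$-letter $\cv{(q_0,\B)}{S}$ at the right place. The paper's own proof has the same blind spot. Its remark that only $h(d)$ contains a symbol with $q_0$ is false (e.g.\ for $t_*$), and it never compares $h(t)$ for $t$ of type \eqref{eq:forget} with the image of a letter of $\Delta_1$. The repair is to add the side condition $y\neq(q_0,\B)$ to \eqref{eq:forget}. This is harmless. $\Theta$-letters with lower component in $Q\times\Gamma$ are produced only by \eqref{eq:sim1}--\eqref{eq:sim4}, where $q'\neq q_0$, and by \eqref{eq:left}, where the upper component lies in $\Gamma$. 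So no letter $\cv{(q_0,\B)}{(q,b)}$ ever occurs in a derivation from $\#\cv{(q_0,\B)}{S}\$$, $t_*$ is never applied, and Lemmas~\ref{lem:eid}--\ref{lem:a} are unaffected.

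With that amendment your plan goes through, and it is essentially the paper's argument. For $g$ it uses the $\Theta$-letter in position~$2$ of every $g(t)$. For $h$ it is an inspection of right-hand sides using four facts:
\begin{itemize}
\item $q'\neq q_0$ and $z\neq(q_0,\B)$;
\item $a\neq\B$ in \eqref{eq:sim1};
\item the components of the $\Theta$-letters;
\item determinism of $\delta$, which separates \eqref{eq:sim1} from \eqref{eq:sim3} and \eqref{eq:sim4}.
\end{itemize}
When you carry it out, note that no right-hand side is a proper prefix of another, so two distinct rules are separated without any continuation. By contrast, a letter of $\Delta_1$ against a rule of type \eqref{eq:right}, \eqref{eq:sim3}, \eqref{eq:sim4} or \eqref{eq:forget} genuinely needs two continuation letters. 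For example, $h(a\,b)\le_p h(t)$ for $t=\bigl(a\cv{b}{S}z\derive{}ab\cv{z}{S}\bigr)$; this is why the delay is $2$ and not $1$.

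Finally, a small slip: $g$ is not of delay $0$, since $g(x)=xd\le_p g(t)$ whenever $x$ is the first letter of the left-hand side of $t$. Your argument for $g$ actually uses one continuation letter and so proves delay $1$, which suffices.
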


\begin{proof}
If a morphism $f$ is not of bounded delay, then, necessarily, for two different letters $r,s$,
either $f(r) = f(s)$ or $f(r) <_p f(s)$. We show that if $f=g$ or $f=h$, then the first case will not happen, and that if
$f(r) <_p f(s)$, then there are no letters $p,q$ such that $f(rpq)$ and $f(s)$ are comparable. This implies that $f$ is of bounded delay~2.

For the morphism $g$ the claim follows straightforwardly: for all $t\in R$, $g(t)= a\theta b$, where $\theta\in \Theta$ and $a,b\in \Delta_1$.
Since $\theta\notin \Delta_1$ and the Turing machine $\M$ being
deterministic, which implies that the left hand sides of rules in \eqref{eq:right}--\eqref{eq:forget} are all different, it is immediate that $g(r) \ne g(s)$ for different letters $r$ and $s$, and if $g(r) <_p g(s)$, then there exists no letter $p$ such that $g(rp)$ and $g(s)$ are comparable.

For the morphism $h$ the proof is bit more complicated. 
First of all, for different letters $r$ and $s$, the equality $h(r)=h(s)$ is not possible based on the definition of rules in \eqref{eq:right}--\eqref{eq:forget} and the determinism of the TM $\mathcal{M}$. Note that in $h$ only the letter $d$ has symbol with $q_0$ in its image. So assume  $h(r) <_p h(s)$. There are several possible cases with $s\in R\cup \{d\}$: 

If $s=d$, then $r=\#$. Now $h(p)$ should begin with symbol $\cv{(q_0,\B)}{S}$ which is not the case in any of the right hand sides of the transitions in  \eqref{eq:right}--\eqref{eq:forget}.

Assume then that  $s\in R$. We have the following cases:

1) assume $h(s)$ contains symbol $\cv{z}{S}$ with $z\in \Delta$. Then because $h(r) <_p h(s)$, necessarily $s$ is of the form \eqref{eq:forget} with $h(s)=\ell_d((q,b)\# \cv{y}{S} z)$
where $y\in \Delta$ and $z\in \Delta\cup \{\$\}$ and $r$ is of the form \eqref{eq:right} with $h(r)=\ell_d(ab\cv{z}{S})$ where $a\in \Gamma\cup \{\#\}\setminus\{\B \}$. This is a contradiction as $(q,b)\notin \Gamma\cup \{\#\}\setminus\{\B \}$. 

2) assume $h(s)$ contains symbol $\cv{(q',x)}{(q,b)}$ as in the rules \eqref{eq:sim1}-\eqref{eq:sim4}. As the TM $\mathcal{M}$ is deterministic, then lower part $(q,b)$ appear only in rules of the type \eqref{eq:sim1} and \eqref{eq:sim2} or 
\eqref{eq:sim3} and \eqref{eq:sim4}.

Obviously, $s$ cannot be of type \eqref{eq:sim1}, as then no $r$ with prefix property exists. Consider the case where $s$ is of type \eqref{eq:sim2} and $x=\B$. Then $r=\#$, but then there exists no letter $p$ as in rules of type \eqref{eq:sim1} $x\ne \B$.

If $s$ is of type \eqref{eq:sim3} or \eqref{eq:sim4}, then $r=a$ and $p=b'$ but there exists no $q$ as there is no rule of type \eqref{eq:sim1} with $(q,b)$.

3) assume $h(s)$ contains symbol $\cv{a}{(q,b)}$, that is, it is of type \eqref{eq:left}  and $h(s)=\ell_d(\cv{a}{(q,b)}yz c)$. It is clear that no letter $r$ with the prefix property exists as the rules of type \eqref{eq:left} have letters $\cv{a}{(q,b)}$ in their image in $h$. 

\end{proof}

The following results are new. They follow from previous lemmata.

\begin{theorem}\label{thm:bounded_delay}
The $\omega$PCP is undecidable for instances of morphisms of bounded delay~2.
\end{theorem}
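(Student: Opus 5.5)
The plan is a many-one reduction from the halting problem of Theorem~\ref{missa}, assembled from the lemmata already proved. Given a Turing machine $\M$, I first normalise it so that it satisfies (M1)--(M5). As noted after those assumptions, this can be done effectively and without changing whether $\M$ halts on empty input. From $\M$ I then build the semi-Thue system $S_\M=(\Lambda,R)$ of Section~\ref{sec:revsT}, and from $S_\M$ the pair of morphisms $g,h\colon A^*\to B^*$ defined in \eqref{inj_oPCP}. Each step is clearly computable: the rules \eqref{eq:right}--\eqref{eq:forget} are read off the finite transition table of $\M$, and $g,h$ are read off $R$. So the map $\M\mapsto(g,h)$ is recursive.

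Next comes correctness of the reduction, which is a chain of equivalences. By Theorem~\ref{TM:sT}, $\M$ does not halt on empty input if and only if $S_\M$ is non-terminating on the initial word $\#\cv{(q_0,\B)}{S}\$$. By Lemma~\ref{lem:a}, the latter holds if and only if there is an infinite word $w\in A^\omega$ with $g(w)=h(w)$, that is, if and only if the instance $(g,h)$ of the $\omega$PCP has a solution. Lemma~\ref{lem:b} guarantees that every instance produced this way consists of morphisms of bounded delay~$2$. Hence a decision procedure for the $\omega$PCP restricted to bounded-delay-$2$ instances would decide the non-halting problem, and therefore the halting problem, contradicting Theorem~\ref{missa}.

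The only point needing care is the direction of Lemma~\ref{lem:a} that goes from an $\omega$-solution to non-termination. Its proof derives a contradiction from termination by using that a terminating derivation ends in a word containing $\cv{(\halt,a)}{S}$. I would double-check this against the chain above: by (M1), (M2) and (M4), the derivation from $\#\cv{(q_0,\B)}{S}\$$ in $S_\M$ is unique (Lemma~\ref{inj:lemma1}). By Lemma~\ref{lem:eid}, it can get stuck only when the simulated machine reaches $\halt$, since every other symbol in $\Theta$ always has an applicable rule. Once this is confirmed, the argument is routine.

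As a final remark for the corollary that follows, bounded delay implies injectivity. So the same reduction shows that the $\omega$PCP is undecidable for injective morphisms.
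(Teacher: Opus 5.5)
Your proposal is correct and is essentially the paper's own proof: the paper obtains the theorem directly from Lemma~\ref{lem:a}, with Theorem~\ref{TM:sT} (which you make explicit) supplying the link to non-halting, together with the bounded-delay property of Lemma~\ref{lem:b}. Your extra check of the direction from an $\omega$-solution to non-termination agrees with the proof of Lemma~\ref{lem:a}, with one small correction: uniqueness of the derivation comes from the determinism of $\M$ (Lemma~\ref{inj:lemma1}), whereas it is the normalization (M1)--(M5), with $\delta(q_0,\B)$ defined, that ensures the derivation can only stop at a symbol $\cv{(\halt,a)}{S}$.
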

\begin{proof}
The claim follows directly from Lemmata~\ref{lem:a} and~\ref{lem:b}. 
\end{proof}

It is clear that a non-injective morphism cannot be of bounded delay, and, therefore, a bounded delay morphism is necessarily injective. 

\begin{corollary}
The $\omega$PCP is undecidable for injective morphisms. 
\end{corollary}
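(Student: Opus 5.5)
The corollary follows from Theorem~\ref{thm:bounded_delay} together with the remark that bounded delay morphisms are injective. So the plan is to check that remark carefully and then transfer the undecidability.

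\textbf{Step 1: bounded delay implies injectivity.} Let $f\colon A^*\to B^*$ be of bounded delay $n$. Suppose, for contradiction, that $f(u)=f(v)$ with $u\neq v$, and choose such a pair with $|u|+|v|$ minimal.

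\emph{No common first letter.} If $u=au'$ and $v=av'$, then $f(u')=f(v')$ with $u'\neq v'$, which contradicts minimality.

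\emph{Both words nonempty.} If $u=\varepsilon\neq v$, then $f(v)=\varepsilon$. So some letter $b$ has $f(b)=\varepsilon$. Then $f(b\,b^n)=\varepsilon\le_p f(a\,b^n)$ for any letter $a\neq b$, which violates bounded delay. (If $|A|=1$, erasing letters and the claim are trivial.)

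\emph{Main case.} Write $u=au'$ and $v=bv'$ with $a\neq b$. Then $f(au')=f(bv')$, and hence also
\[
f(au'x)=f(bv'x) \quad \text{for every word } x .
\]
Choose $x$ long enough that both $u'x$ and $v'x$ have length at least $n$. Let $p$ and $q$ be their prefixes of length $n$. Since $f(ap)\le_p f(au'x)$ and $f(bq)\le_p f(bv'x)=f(au'x)$, the words $f(ap)$ and $f(bq)$ are prefixes of the same word. Hence one of $f(ap)\le_p f(bq)$ or $f(bq)\le_p f(ap)$ holds. Either way, bounded delay $n$ is contradicted for the distinct letters $a,b$.

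\textbf{Step 2: transfer the undecidability.} By Lemma~\ref{lem:b}, the morphisms $g,h$ constructed from $\M$ are of bounded delay $2$, so by Step~1 they are injective. The map $\M\mapsto(g,h)$ is computable, and by Lemma~\ref{lem:a} combined with Theorem~\ref{TM:sT}, the instance $(g,h)$ has an $\omega$-solution if and only if $\M$ does not halt on empty input. A decision procedure for the $\omega$PCP restricted to injective morphisms would therefore decide the halting problem, which is impossible by Theorem~\ref{missa}.

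\textbf{Main obstacle.} The only delicate point is in Step~1: the definition of bounded delay requires the continuations to have length exactly $n$, so we cannot use $u'$ and $v'$ directly. Padding both words with the same suffix $x$ and then truncating to prefixes of length $n$ resolves this, as does handling erasing letters separately.
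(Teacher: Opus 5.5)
Your proposal is correct and follows the paper's route: the bounded-delay-$2$ undecidability of Theorem~\ref{thm:bounded_delay} is transferred via the fact that bounded delay implies injectivity. The paper only asserts that fact as ``clear'', and your argument (pad both sides with a common suffix, then truncate to length $n$) verifies it properly.

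One small correction: your parenthetical about $|A|=1$ is wrong rather than trivial. A unary morphism with $f(a)=\varepsilon$ is vacuously of bounded delay yet not injective. This is harmless here because the morphisms $g,h$ of \eqref{inj_oPCP} are nonerasing and defined over a large alphabet.
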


\begin{theorem}\label{injPCP_comp}
The $\omega$PCP is $\Pi^0_1$-complete for bounded delay~2 and injective morphisms. 
\end{theorem}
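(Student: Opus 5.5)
The proof splits into $\Pi^0_1$-hardness and membership in $\Pi^0_1$. Each half combines results already established above.

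\emph{Hardness.} The map $\M\mapsto (g,h)$ defined in \eqref{inj_oPCP} is computable. By Theorem~\ref{TM:sT}, $\M$ does not halt on empty input if and only if $S_\M$ is non-terminating on $\#\cv{(q_0,\B)}{S}\$$. By Lemma~\ref{lem:a}, this holds if and only if the instance $(g,h)$ has an infinite solution. Lemma~\ref{lem:b} shows that $g$ and $h$ always have bounded delay~$2$, hence are injective. So we have a many-one reduction from the non-halting problem into the restricted $\omega$PCP. Non-halting is $\Pi^0_1$-complete, so the restricted problem is $\Pi^0_1$-hard.

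\emph{Membership.} We argue exactly as in the proof of Theorem~\ref{thm:stcomp}, following \cite{Finkel}. Given an instance $(g,h)$, the candidate solutions are infinite words $w\in A^\omega$, coded over $\{0,1\}$. We construct, recursively from $(g,h)$, a deterministic Turing machine with $1'$-acceptance. It reads $w$ letter by letter and maintains the current prefixes $g(w[n])$ and $h(w[n])$. It stays in accepting states as long as these two prefixes are prefix comparable, and moves to a rejecting sink as soon as they are not. We must check that comparability of all finite prefixes is equivalent to $g(w)=h(w)$.
\begin{itemize}
\item If some letter has empty image, the images $g(w)$, $h(w)$ may be finite. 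In our reduction $g,h$ are nonerasing, and bounded-delay morphisms are nonerasing anyway.
\item For nonerasing morphisms, both images are infinite, and pairwise comparability of all prefixes gives equality.
\item To handle general instances, the machine can additionally require that both lengths $|g(w[n])|$ and $|h(w[n])|$ tend to infinity. It is cleaner, though, to restrict the problem statement to the given class, where images are infinite.
\end{itemize}
The accepted set is therefore an effective $\Pi^0_1$ subset of $\{0,1\}^\omega$. By the result of Staiger \cite{Staiger93} (see also \cite{CenzerRemmel03}), non-emptiness of such sets is in $\Pi^0_1$. Hence the restricted problem lies in $\Pi^0_1$. Membership holds even for unrestricted instances, so it holds a fortiori for bounded-delay/injective ones.

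The only delicate point I expect is the erasing-morphism issue in the membership part, where an infinite $w$ could yield finite images. I would sidestep it by noting that bounded-delay morphisms are nonerasing. Alternatively, for general injective instances, I would observe that injectivity forbids $g(a)=\e$ for any letter $a$, since otherwise $g(a)=g(aa)$ would contradict injectivity.
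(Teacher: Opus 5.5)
Your proof is correct and takes the same route as the paper. Hardness comes from composing $\M\mapsto S_\M\mapsto(g,h)$ through Theorem~\ref{TM:sT} and Lemma~\ref{lem:a}, with Lemma~\ref{lem:b} placing the instances in the restricted class, and membership comes from the $1'$-acceptance argument of \cite{Finkel}, which you spell out in more detail than the paper does. The one slip is your aside on erasing instances: requiring $|g(w[n])|,|h(w[n])|\to\infty$ is a liveness condition, not a $1'$ (closed) one, so it does not support your claim that membership holds for all unrestricted instances. This does not affect the theorem, since bounded-delay and injective morphisms are nonerasing.
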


\begin{proof}
$\Pi^0_1$-hardness follows from Theorems~\ref{thm:stcomp} and Lemma~\ref{lem:a}. Also, it can be shown using the construction in~\cite{Finkel} that $\omega$PCP in these special cases is in $\Pi^0_1$. This proves the claim.
\end{proof}

\section{Fixed Shift $\omega$PCP and $\Z$PCP}

In this section we define three new variants of the PCP and give a reasoning for their undecidability. The second and the third one, the shifted $\omega$PCP and $\zps$, will be used in our path for the complexity of the $\Z$PCP. The first variant is presented only for the sake of completeness, as it is the finite version of the shifted $\omega$PCP. 

\begin{prob}[$s$-shift PCP]
Assume $s\in\N$. Given two morphisms $g,h\colon A^*\to B^*$, does there exist a nonempty word $w\in A^*$ and a word $u\in B^*$ such that $|u|=s$ and
\begin{equation}\label{solutionS}
g(w) =u h(w)\, ?
\end{equation}
\end{prob}

Note that the $s$-shift PCP, denoted by $\fsps$, is related to so called    \emph{generalized PCP} (GPCP) where an instance consists of four words $p_1,p_2,s_1,s_2\in \Si^*$ and the set of pairs $\{(u_1,v_1),(u_2,v_2),  \dots,$ $(u_n,v_n)\}$ and $s\in \N$, and the task is to decide whether or not there exists a  sequence $i_1,i_2\dots, i_k$ such that $p_1u_{i_1}u_{i_2}\cdots u_{i_k}s_1 = p_2v_{i_1}v_{i_2}\cdots v_{i_k}s_2$.

\begin{prob}[$s$-shift $\omega$PCP]
Assume $s\in \N$. Given two morphisms $g,h\colon A^*\to B^*$, does there exist an  infinite word $w\in A^\omega$ and a word $u\in B^*$ such that $|u|=s$ and
\begin{equation*}
g(w) = u h(w)  \, ?
\end{equation*}
\end{prob}

Note that the fixed shift $\omega$PCP ($\ops$) can be also presented using the shift in letters: Does there exist an infinite $w\in A^\omega$ such that 
$$
g(w)(i+s)=h(w)(i) \quad \text{ for all }i\in \N. 
$$

Recall our assumption on the indexing letters in the image of a bi-infinite word under morphisms. Let $w\in A^\Z$, and $h\colon A^*\to B^*$ be a morphism. Then 
\[
h(w)(0)= h(w(0))(0),
\]
that is, in the image $h(w)$ the position $0$ is the first letter of the image of the letter $w(0)$. 

\begin{prob}[$s$-shift $\Z$PCP]
Let $s\in \N$. Given two morphisms $g,h\colon A^*\to B^*$,  does there exist an bi-infinite word  $\alpha\in A^\Z$ such that 
\begin{equation}\label{bisolS}
g(\alpha)(i+s) =h(\alpha)(i)
\end{equation}
for all $i\in Z$?
\end{prob}
We shall denote the $s$-shift $\Z$PCP problem by $\zps$.

We will next prove that the $\ops$ is undecidable for any $s\in \N$. Clearly, for $s=0$, the claim follows from the undecidability of the $\omega$PCP. We need to define new pair of morphisms for this problem as for the $g$ and $h$ defined in \eqref{inj_oPCP} for the injective $\omega$PCP, for odd $s=2n+1$ where $n\ge 0$, the words $u=(xd)^nx$ and $w=x^\omega$ for $x\in\Delta_1$ are a solution of the $\ops$, as $g(w)=(xd)^\omega = (xd)^nx(dx)^\omega=uh(w)$.   

\begin{theorem}\label{thm:opcps}
For all $s\in\N$, the $\ops$ is undecidable.
\end{theorem}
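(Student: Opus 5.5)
We reduce the non-termination problem for $S_\M$ (Theorem~\ref{TM:sT}) to $\ops$ for each fixed $s\in\N$. The case $s=0$ is Lemma~\ref{lem:a}. For $s\ge 1$ the pair $(g,h)$ of \eqref{inj_oPCP} cannot be used directly: as noted above, its periodic structure already yields spurious solutions such as $w=x^\omega$. We therefore modify the construction so that the shift is absorbed into a forced initial segment, and then show that every solution must use it.

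\textbf{Construction.} We introduce $s$ fresh letters $e_1,\dots,e_s$ in $B$ and one fresh start letter $d'$ in $A$. We keep $g$ and $h$ on $\Delta_1\cup R\cup\set{d}$ and set
\[
g(d')=e_1\cdots e_s\, d,\qquad h(d')=h(d)=\ell_d(\#\cv{(q_0,\B)}{S}\$).
\]
We take $u=e_1\cdots e_s$ as the intended shift word. To break periodic solutions we also make the desynchronizing markers asymmetric: $g$ uses $r_d$ and $h$ uses $\ell_{d}$ as before, but the letters $e_i$ appear only in the image $g(d')$.

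\textbf{Step 1: non-termination gives a solution.} If $S_\M$ does not terminate, take the infinite word $w=d x_0t_0y_0x_1t_1y_1\cdots$ from Lemma~\ref{lem:a} and replace its first letter by $d'$. Then
\[
g(d'w')=u\,g(dw')=u\,h(dw')=u\,h(d'w'),
\]
so $(w,u)$ is a solution of $\ops$.

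\textbf{Step 2: a solution forces the simulation.} Conversely, suppose $g(w)=u\,h(w)$ with $|u|=s$.

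Since the letters $e_i$ occur in no $h$-image, $h(w)$ contains none of them. We must show that $w$ starts with $d'$ and that $u=e_1\cdots e_s$. This is the main obstacle, because a general $u$ could make the two sides align at some offset without passing through $d'$.

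Our first attempt is the counting argument below, which shows that $d'$ occurs and occurs early.

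1. Apart from $e_1\cdots e_s$, every $g$-image lies in $\Delta_1(d\Delta_1)^*d$ or $(\Delta_1\Theta\Delta_1)$-type shapes ending in $d$. Every $h$-image starts with $d$ and alternates $d$ with a non-$d$ letter.
2. Consequently, outside the $e$-block, $g(w)$ has $d$ at every other position with a fixed parity.
3. Suppose $d'$ does not occur in $w$. Then the parity of the $d$-positions in $g(w)$ must match that in $u\,h(w)$, which forces $s$ to be even.
4. If we instead choose $g(x)=xd$ but $h(x)=dd'' x$, using a doubled marker so that the period differs, then any mismatch becomes impossible. If parity alone does not suffice, we take this route and replace $\ell_d, r_d$ by markers of length $s+1$, so that no shift $0<s'\le s$ aligns them.
5. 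If $d'$ does occur, its $e$-block, absent from $h$-images, must lie inside $u$. This forces $d'$ to be the first letter of $w$ and $u=e_1\cdots e_s$.

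\textbf{Step 3: conclude.} Cancelling $u$ and the leading $d'$ yields $g(dw')=h(dw')$. The second half of the proof of Lemma~\ref{lem:a} then gives an infinite derivation of $S_\M$, which completes the reduction for every $s$.
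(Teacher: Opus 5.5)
\textbf{The gap is in Step~2:} nothing in your construction forces a solution to contain $d'$. This is exactly the obstacle you flag, and it is not overcome.

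\textbf{A spurious solution for every odd $s$.} You keep $g,h$ from \eqref{inj_oPCP} unchanged on $\Delta_1$. So for $s=2n+1$ the pair $w=x^\omega$, $u=(xd)^nx$ with $x\in\Delta_1$ still satisfies $g(w)=(xd)^\omega=u\,h(w)$. This is the very spurious solution you cite at the outset. Your instance is therefore solvable whether or not $\M$ halts, and the reduction fails for all odd $s$.

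\textbf{The parity argument cannot exclude this.} If neither $d$ nor $d'$ occurs, $g(w)$ has $d$ at odd positions and $u\,h(w)$ has $d$ at positions $s,s+2,\dots$. Alignment then forces $s$ to be \emph{odd}, not even, which is consistent rather than contradictory. Moreover $g(d)=d$ has length $1$, so the claimed fixed parity already fails once $d$ occurs.

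\textbf{The patches in item~4 also fail.} Taking $g(x)=xd$ with $h(x)=dd''x$ breaks the identity $\ell_k(v)k=k\,r_k(v)$ on which Step~1 rests. The lengths of the two images of the simulated derivation drift apart, so no fixed shift can hold. Using a common longer marker $k$ on both sides does not help for $s=1$: $x\,(kx)^\omega=(xk)^\omega$, i.e.\ $x\,h(x^\omega)=g(x^\omega)$ for every $k$.

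\textbf{What is missing} is a device that prevents any letter other than the start letter from beginning a solution. The paper does this with one fresh letter $e$ and padding. It composes all images with $r_{e^s}$ and sets $g'(d)=e^{s+1}de^s$ and $h'(d)=e\,r_{e^s}(h(d))$. For $a\neq d$ and $s\ge 1$, this gives $g'(a)(s)=e$, while $h'(a)$ begins with $d\neq e$, so $w(0)=d$. That forces $u=e^s$. Cancelling $e^{s+1}$ leaves $r_{e^s}(g(dw'))=r_{e^s}(h(dw'))$, i.e.\ $g(dw')=h(dw')$, and Lemma~\ref{lem:a} finishes the argument.

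Your item~5 (the $e$-block lies inside $u$, so $d'$ is first and $u=e_1\cdots e_s$) and your Step~3 are correct. They would go through once such a start-forcing mechanism is added. As written, however, Step~2 is unproved and false for odd $s$.
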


\begin{proof}
Assume $s\in \N$ and let the morphisms $g,h\colon A^*\to
B^*$ be as in ~\eqref{inj_oPCP}. Let $e$ be a new letter and define the morphisms $g',h'\colon A^*\to (B\cup\{e\})^*$ 
so that 
\begin{equation*}
h'(d)=er_{e^{s}}(h(d))=e r_{e^{s}}(\ell_{d}(\#\cv{(q_0,\B)}{S}\$)),\quad  g'(d)=e^{s+1}r_{e^{s}}(d)=e^{s+1}de^{s},\\ 
\end{equation*}
and for all $y\in \Delta_1\cup R$, $y\ne d$, set $h'(y)=r_{e^{s}}(h(y))$ and $g'(y)=r_{e^{s}}(g(y))$. 

Firstly, if there is a solution $u\in A^s$ and $w\in A^\omega$ to $(g',h')$ as an instance of $\ops$, then $w=dw'$ for some $w'\in A^\omega$. This follows from the fact that for all $a\in A$, $a\ne d$, $g'(a)(s)=e$, but $h'(a)(0)\in B$. Now $g'(d)$ is comparable with $e^{s} h'(d)$ implying that $u=e^s$. Based on a previous sections, it is obvious (modulo $r_{e^s}$) that there exists a required infinite word $w$ for the solution of $\ops$ if and only if TM $\M$ does not halt on the empty input. Thus the solution constructed from the infinite derivation of $S_\M$ is the
only possible solution compatible with the forced initial letter $d$.
\end{proof}

Moreover, in the case where such a solution exists, the above argument
shows that the solution must begin with \(d\). Thereafter the equality \(g'(w)=e^s h'(w)\), together with the determinism of \(M\) and hence the
\(\Theta\)-determinism of \(S_M\), forces at each step the unique next rule
of \(S_M\) applicable to the current configuration. Thus the solution
constructed from the infinite derivation of \(S_M\) is the only possible
solution of this form.

Similarly to Theorem~\ref{injPCP_comp} we can prove the following

\begin{theorem}
For all $s\in\N$, the $\ops$ is $\Pi^0_1$-complete.  
\end{theorem}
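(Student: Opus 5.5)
We prove membership in $\Pi^0_1$ and $\Pi^0_1$-hardness separately.

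\emph{Hardness.} Fix $s\in\N$. The reduction is the one in the proof of Theorem~\ref{thm:opcps}, composed with the reduction of Theorem~\ref{TM:sT}. Given a TM $\M$ satisfying (M1)--(M5), we build $S_\M$, then the morphisms $g,h$ of \eqref{inj_oPCP}, and then the padded morphisms $g',h'$ built with the new letter $e$. Each stage is effective, and the proof of Theorem~\ref{thm:opcps} shows that $(g',h')$ has an $\ops$ solution if and only if $\M$ does not halt on the empty input. The map $\M\mapsto(g',h')$ is therefore a many-one reduction from the non-halting problem, which is $\Pi^0_1$-complete, to $\ops$. Alternatively, the first part of that proof composes with Theorem~\ref{thm:stcomp}: it shows that an $\ops$ solution of $(g',h')$ exists exactly when $S_\M$ does not terminate on $\#\cv{(q_0,\B)}{S}\$$. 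The one point to recheck is that the ``only if'' direction holds for every $s$, including odd $s$, where the unpadded instance has a spurious solution $x^\omega$. This is handled by the observation in that proof that for every letter $a\ne d$ we have $g'(a)(s)=e$ while $h'(a)(0)\in B$, so every solution must start with $d$.

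\emph{Membership.} We follow \cite{Finkel}, as in the proof of Theorem~\ref{thm:stcomp}. Fix an instance $(g,h)$ and $s$. A candidate solution is a pair $(u,w)$ with $u\in B^s$ and $w\in A^\omega$. There are only finitely many $u$, and a finite union of $\Pi^0_1$ sets is $\Pi^0_1$, so it suffices to treat a fixed $u$.

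For fixed $u$, let $M_u$ be a deterministic Turing machine that reads $w$ letter by letter. It keeps the unmatched overhang between $g(w[n])$ and $u\,h(w[n])$, and it moves to a rejecting sink as soon as these two words become prefix-incomparable. It accepts with the $1'$-condition, that is, it accepts exactly the infinite runs that never enter the sink. Hence $M_u$ accepts $w$ if and only if every finite prefix keeps the two images comparable. We must check that this is equivalent to $g(w)=u\,h(w)$. This requires that both sides are infinite words. If instead $|g(w[n])|$ stays bounded along $w$, both sides may be finite, and prefix-comparability no longer implies equality as infinite words.

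We handle this by the standard convention of \cite{Finkel}: $g(w)$ and $h(w)$ are required to be infinite. Either we add to the acceptance condition that both images keep growing, or, more simply, we note that the instances produced by the hardness reduction have nonerasing morphisms. Either way, the solution set is an effective $\Pi^0_1$ subset of $A^\omega$. By the result of Staiger \cite{Staiger93} and Cenzer--Remmel \cite{CenzerRemmel03}, non-emptiness of such an effectively given $\Pi^0_1$ set is decidable in $\Pi^0_1$, and taking the union over the finitely many $u$ gives $\ops\in\Pi^0_1$.

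The main obstacle is this treatment of erasing morphisms in the membership part. If morphisms are allowed to be erasing, the condition ``both images are infinite'' is a $\Pi^0_2$-type condition on $w$. It can be avoided only because \cite{Finkel} places the same restriction on the $\omega$PCP; the paper's proof of Theorem~\ref{injPCP_comp} relies on this in the same way, and we would do so explicitly here.
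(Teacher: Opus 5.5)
Your argument is correct and is essentially the paper's own, which only says ``similarly to Theorem~\ref{injPCP_comp}'': hardness comes from the padded reduction of Theorem~\ref{thm:opcps} composed with Theorems~\ref{TM:sT}/\ref{thm:stcomp}, and membership from the $1'$-acceptance construction of \cite{Finkel} together with Staiger/Cenzer--Remmel, and you usefully make explicit two things the paper leaves implicit, namely the finite union over $u\in B^s$ and the nonerasing convention. Two small corrections: your first option for erasing morphisms (adding ``both images keep growing'' to the acceptance condition) is not a $1'$-condition, so only the nonerasing restriction keeps you in $\Pi^0_1$, as your last paragraph concedes; and for $s=0$ the initial $d$ is forced not by the letter $e$ but by $g'(a)(0)\in\Delta_1$ differing from $h'(a)(0)=d$ for every $a\neq d$.
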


Now the undecidability of the $\zps$ follows  from Theorem~\ref{thm:opcps}. 

\begin{theorem}\label{thm:zpcps}
For all $s\in\N$, the $\zps$ is undecidable.
\end{theorem}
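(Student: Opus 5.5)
We reduce the $\ops$ instance $(g',h')$ built in the proof of Theorem~\ref{thm:opcps} to an instance $(g'',h'')$ of the $\zps$ with the same $s$. The key is a ``left end'' gadget: the new instance should have a bi-infinite solution exactly when $(g',h')$ has a right-infinite solution starting with $d$ and shift $e^s$. Let $f$ be a new letter in both the domain and image alphabets, and put $g''(f)=h''(f)=f$. A left-infinite block $\cdots fff$ then matches itself under any shift $s$, because $f^{\Z}$ shifted by $s$ is still $f^{\Z}$. The positions need care: $g''(f^k)$ and $h''(f^k)$ have the same length, so a shift of $s$ letters must be produced at the junction between the $f$-block and $d$.

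To produce it, modify $g'(d)$ so that it starts with exactly $s$ more letters than $h'(d)$ before the synchronised part. The definition already does this: $g'(d)=e^{s+1}de^s$ and $h'(d)=e\,r_{e^s}(\ell_d(\cdots))$, so $g'(d)=e^s\cdot e\,d\,e^s$ while $h'(d)$ begins with $e\,d\,e^s$. The equation $g'(w)=e^sh'(w)$ says the $g'$-image is ahead by $e^s$. Now take the bi-infinite word $\alpha=\cdots fff\,d\,w'$, where $dw'$ is the $\ops$ solution and position $0$ is at $d$. Then $g''(\alpha)=\cdots fff\,e^s\cdots$ and $h''(\alpha)=\cdots fff\cdots$, with the $d$-images aligned at position $0$. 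The condition $g''(\alpha)(i+s)=h''(\alpha)(i)$ fails on the $f$ region, since at index $-1$ we would need $e$ against $f$. To fix this, either set $h''(f)=f$ and $g''(f)=f$ and prepend the extra $e^s$ to $h''(d)$ instead, or, more cleanly, put $g''(d)=e\,d\,e^s$ and $h''(d)=e\,r_{e^s}(\cdots)$ with $g''(f)=h''(f)=f$, and absorb the shift into an extra letter $c$ with $g''(c)=f^s$ and $h''(c)=\e$. I will check the indexing convention and commit to whichever variant makes positions match exactly. This gives the forward direction: nonhalting of $\M$ yields a $\zps$ solution.

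For the converse, take a bi-infinite solution $\alpha$ of $(g'',h'')$. The first step is to show that $\alpha$ contains $d$. If it does not, $\alpha$ uses only the letters of $\Delta_1\cup R$ (together with $f$). The $e$-padding of $g'$ versus $h'$ then forces a mismatch, since $g'(a)(s)=e$ but $h'(a)(0)\in B$; this mirrors the argument in Theorem~\ref{thm:opcps}, and an all-$f$ word has to be ruled out by choosing the gadget so that an $f$-region cannot satisfy the shift $s$ on its own. Next, $d$ occurs only once, and everything to the right of it is an $\ops$ solution of $(g',h')$, which by Theorem~\ref{thm:opcps} and Lemma~\ref{lem:a} gives an infinite derivation of $S_\M$. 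The remaining alternative is a bi-infinite simulation that never meets the $d$ start. I expect this to be the main obstacle. It would be an infinite computation of $S_\M$ running back into the past, and Lemma~\ref{no:inf_rev_derive} is exactly the tool for it: a reverse derivation from any word in $(Q\times\Gamma)^*\#\Delta^*\Theta\Delta^*\$$ is finite. So a bi-infinite parse must contain a starting point, which can only be $h''(d)$, because only there can a configuration word appear without a $g$-predecessor. Making this rigorous requires showing that any bi-infinite solution, read through the desynchronising morphisms, factorises into consecutive configurations $\beta_j$ with $\beta_j\derive{}\beta_{j+1}$ in both directions. I would argue this using bounded delay (Lemma~\ref{lem:b}) and the $\Theta$-reversibility of Lemma~\ref{inj:lemma1}, with Lemma~\ref{lem:sim} applied backwards.
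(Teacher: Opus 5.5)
\paragraph{Gap in the left-end gadget.}
The left-end gadget makes every instance positive. With $g''(f)=h''(f)=f$, the bi-infinite word $f^{\Z}$ satisfies $g''(f^{\Z})(i+s)=f=h''(f^{\Z})(i)$ for all $i\in\Z$. So every instance your reduction outputs is a positive instance of the $\zps$, whether or not $\M$ halts. The variant with the junction letter $c$ ($g''(c)=f^s$, $h''(c)=\e$) does not change this, since $f^{\Z}$ is still a solution. You notice the danger yourself. However, your remedy, choosing the gadget so that an $f$-region cannot satisfy the shift $s$ on its own, contradicts the very property you built the gadget for: that $\cdots fff$ matches itself under the shift $s$. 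That property is exactly what makes $f^{\Z}$ a solution, and you exhibit no alternative gadget. The forward construction is also left uncommitted. So the converse direction, which carries the whole content of the reduction, is not established. The left-infinite half of a solution must not be an inert block that can survive without the rest of the word.

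\paragraph{How the paper avoids this.}
The paper mirrors instead of padding. It adds a primed copy $A'$ of the domain alphabet with $g_1(y')=\ell_{f^s}(h(y)^R)$ and $h_1(y')=\ell_{f^s}(g(y)^R)$ for $y\neq d$. It also adds a joining letter $d'$ with $g_1(d')=\ell_{f^s}(h(d)^R)f$ and $h_1(d')=f^s d f e^s$. The intended solution is $(w')^R w$, where $w$ is the $\ops$ solution of $(g',h')$ and $w'$ is its primed copy. The left half is therefore itself a reversed simulation of $S_\M$, forced leftwards from $d'$ just as the right half is forced rightwards from $d$. Because the padding letters $e\neq f$ differ, images of letters from $A$ and from $A'$ are never $s$-shift comparable, so the two copies can meet only at the junction $d'd$.

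\paragraph{Side remark.}
The reverse-derivation argument via Lemma~\ref{no:inf_rev_derive}, which you single out as the main obstacle, is not what is needed here. With a fixed shift, the rigid positional pattern of the padded images already excludes solutions that avoid $d$; this is the global form of your observation $g'(a)(s)=e$, $h'(a)(0)\in B$. The paper uses reversibility only later, for the ordinary $\Z$PCP, where the shift is free.
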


\begin{proof}
Let the morphisms $g$, $h$, $g'$ and $h'$ be as in the proof of Theorem~\ref{thm:opcps}. Let $A'$ be a copy of the alphabet $A$, that is,  $A'=\{x'\mid x\in A\}$. Define the morphisms $g_1$ and $h_1$, so that for all $y\in A$, 
$g_1(y)=g'(y)$ and $h_1(y)=h'(y)$. Then, let $f$ be a new letter, and define 
$$
h_1(d') =f^sdfe^s , \quad g_1(d') = \ell_{f^s} (h(d)^R)f = \ell_{f^{s}}(r_{d}(\$\cv{(q_0,\B)}{S}\#))f
$$
and for $y\in A$, $y\ne d$ define
$$
g_1(y') =\ell_{f^s}(h(y)^R), \quad h_1(y') = \ell_{f^s} (g(y)^R).
$$
The idea here is that to the right the solution of the instance $(g_1,h_1)$ of the 
$\zps$ is a solution $w$ of the instance $(g',h')$ of $\ops$ and to left $(w')^R$ where $w'$ is the primed copy of the word $w$ (that is, for all $i\in \N$, $w'(i)=w(i)'$). The claim follows from the following two observations: 
\begin{enumerate}
    \item No image of letter from $A$ is $s$-shift comparable with an image of letter from $A'$ (and vice versa) as the $e\ne f$. 
    \item By the proof of Theorem~\ref{thm:opcps}, the only possible $s$-shift to the right begins with $d$, and similarly to the left with $d'$.
 \end{enumerate}   
    Therefore, 
$$
g_1(w'^R) g_1(w)=h_1(w'^R) h_1(w)
$$
where \(w \in A^\omega\) is the solution forced by the unique infinite
derivation of \(S_M\) from the initial word
\(\#\cv{(q_0,\B)}{S} \$\) and $w'\in (A')^\omega$ is the primed copy of $w$.  
\end{proof}

In~\cite{FHHS} it was proved that the $\zps$ is in $\Pi^0_1$ and as $\Pi^0_1$-hardness of $\zps$ follows from the above, that is, the original TM $\M$ is non-halting if and only if the instance $(g_1,h_1)$ of the 
$\zps$ has a solution, we have that 

\begin{theorem}
For all $s\in\N$, the $\zps$ is $\Pi^0_1$-complete.  
\end{theorem}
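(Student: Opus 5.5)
The statement has two halves: membership of $\zps$ in $\Pi^0_1$ and $\Pi^0_1$-hardness. Membership is quoted from \cite{FHHS}. If I had to reprove it, I would use a compactness (König's lemma) argument. For fixed $s$ and a given instance $(g,h)$, a bi-infinite solution exists if and only if, for every $n$, there is a finite word $v\in A^{2n+1}$, centered at position $0$, whose $g$- and $h$-images agree with shift $s$ on every position where both are defined and lie in a window of radius $n$ around the centre. Each such finite condition is decidable, so the whole property is a single universal quantifier over a recursive predicate. The converse direction comes from a König argument on the finitely branching tree of locally consistent centred words: an infinite path through the tree is a bi-infinite word satisfying \eqref{bisolS}.

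For hardness I would reduce from the non-halting problem of Turing machines on empty input, which is $\Pi^0_1$-complete. The map is the computable one $\M\mapsto (g_1,h_1)$ built in the proof of Theorem~\ref{thm:zpcps}, obtained by composing $\M\mapsto S_\M$, then $(g,h)$ from \eqref{inj_oPCP}, then $(g',h')$, then $(g_1,h_1)$. Every step is clearly effective and uniform in $\M$ once $s$ is fixed. What remains is to check both directions of the equivalence: $\M$ does not halt if and only if $(g_1,h_1)$ has a solution with shift $s$.

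\emph{Non-halting gives a solution.} If $\M$ does not halt, Theorem~\ref{TM:sT} gives an infinite derivation of $S_\M$. Lemma~\ref{lem:a}, transported through $r_{e^s}$ as in Theorem~\ref{thm:opcps}, turns it into $w=dw''$ with $g'(w)=e^sh'(w)$. The bi-infinite word $(w')^R\cdot w$ is then a solution: at the junction, $g_1(d')$ and $h_1(d')$ are built from $h(d)^R$ and from $d$ padded with $f$'s and $e$'s, so the two sides glue with exactly the offset $s$. I would verify this by writing out the images of $d'd$.

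\emph{A solution forces non-halting.} This is the main obstacle. Given any bi-infinite solution $\alpha$, I would first use the letters $e$ and $f$ to show that images of $A$-letters and $A'$-letters can never be aligned with each other under shift $s$. Hence $\alpha$ splits as a left part over $A'$ and a right part over $A$. Here I must rule out a solution lying entirely in $A$ or entirely in $A'$ as well as alternations of the two blocks, and that seems the hardest case. The approach I would try first is the argument of Lemma~\ref{no:inf_rev_derive}. Reading a solution over $A$ leftward corresponds to a reverse derivation of $S_\M$. By $\Theta$-reversibility such a reverse derivation is unique and finite, so it must terminate at $d$, i.e.\ at the initial word. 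The symmetric argument applies to $A'$ read rightward, terminating at $d'$. Consequently $\alpha$ contains the block $d'd$, and the part of $\alpha$ to the right of $d$ is a solution of the $\ops$ instance $(g',h')$. By Theorem~\ref{thm:opcps} that part encodes an infinite derivation, so $\M$ does not halt.
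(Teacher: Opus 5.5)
Your membership argument and the direction from non-halting to a solution (gluing $(w')^R$ and $w$ at $d'd$) are fine and agree with the paper. The paper cites \cite{FHHS} for membership and reuses the instance $(g_1,h_1)$ of Theorem~\ref{thm:zpcps}.

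The gap is in the converse, exactly at the step you call hardest. You want to exclude solutions lying in $A$ (or in $A'$) by reading them leftward as a reverse derivation of $S_\M$ and invoking Lemma~\ref{no:inf_rev_derive}. But $(g',h')$ uses only one-letter desynchronization. Nothing forces a stretch of a solution to encode words of $(Q\times\Gamma)^*\#\Delta^*\Theta\Delta^*\$$, or even to contain a letter of $R$. The paper says so explicitly: $g_1(x)=xe^sde^s$ and $h_1(x)=de^sxe^s$ are conjugate, so $x^{\Z}$ with $x\in\Delta_1$ is a solution of $(g_1,h_1)$ as an ordinary $\Z$PCP instance, yet it corresponds to no derivation at all. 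This is precisely why Section~\ref{finalred} builds a new instance with $d^4/f^4$ desynchronization and overlined alternation before reversibility can be used. Note also that your argument for this direction never uses the value of $s$. If it were valid, it would show that $(g_1,h_1)$ has no $\Z$PCP solution whenever $\M$ halts, which is false. Even granting encoded configurations, ``the reverse derivation is finite, hence stops at $d$'' would still need a reason why the solution cannot continue leftward past a configuration that has no predecessor.

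The missing idea is the rigidity of the fixed shift combined with the $e^s$ and $f^s$ padding.
\begin{itemize}
\item Every letter other than $d$ and $d'$ has $g_1$- and $h_1$-images whose length is divisible by $s+1$, made of blocks of one base letter and $s$ padding letters at a fixed offset. So such letters never change the relative phase of $g_1(\alpha)$ and $h_1(\alpha)$ modulo $s+1$.
\item By the indexing convention this phase is $0$ at $\alpha(0)$. Shift $s$, however, demands phase $\equiv s\not\equiv 0$ on padded blocks (for $s\ge 1$; for $s=0$ the parity coming from $r_d$ versus $\ell_d$ plays the same role).
\item This is the observation $g'(a)(s)=e\neq h'(a)(0)$ from the proof of Theorem~\ref{thm:opcps}. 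It is what kills $x^{\Z}$, which matches only with shifts $\equiv s+1 \pmod{2s+2}$.
\item Only $d$, with $|g'(d)|\equiv 0$ and $|h'(d)|\equiv 1 \pmod{s+1}$, and symmetrically $d'$, can correct the phase. This forces the right-hand part to begin with $d$ and the left-hand part to end with $d'$.
\end{itemize}
After that, it is the forward forcing of Lemma~\ref{lem:sim}, not reversibility, that identifies the right-hand part with the encoded derivation from $\#\cv{(q_0,\B)}{S}\$$. Replace your reversibility step by this counting argument.
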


Unfortunately, the rather trivial proofs of this section do not help us on proving that the $\Z$PCP is $\Pi^0_1$-hard. For that we need to use the reversibility property of the constructed semi-Thue system $S_\M$. Indeed, the instance $(g_1,h_1)$ described above has a trivial solution $x^\omega$ for $x\in \Delta_1$ to the $Z$PCP. Such solutions need to be eliminated by forcing the solution to contain a letter from $R$ together with symbols $\#$ and $\$$, to force the solution to follow the derivation of $S_\M$. Then the reversibility will force the used computation to be from the initial configuration $\#\cv{(q_0,\B)}{S}\$ $.

\section{Another construction for the $\Z$PCP}\label{finalred}

We return to the deterministic and reversible semi-Thue system $S_\M$ defined in Sec.~\ref{sec:revsT}. Recall the definition of $\Lambda$, that is 
\begin{equation*}
\begin{split}
&\Lambda=\Delta \cup \Theta\cup\set{\#,\$ }, \quad\text{ where }\\    
 &\Delta = \Gamma \cup (Q \times \Gamma) \text{ and } \\
 &\Theta = \Delta \times (\{ S \} \cup (Q \times \Gamma)).
\end{split}
\end{equation*}
First, we take two copies of alphabet $\Delta$, that is, $\Delta_i=\{x_i \mid x\in \Delta\}$ for $i=1,2$. Then consider the alphabet 
$$
C=\Delta_1\cup \Delta_2\cup \Theta\cup \set{\#,\$} \cup R,
$$
and take an (overlined) copy of $C$, that is, $\ov{C}=\set{\ov{x}\mid x\in C}$. 
The idea behind the alphabets is the following: a configuration 
$$
(q_0,\B)\cdots (q_m,a_m)\# b_1\cdots b_k \cv{X}{Y} b_{k+1}\cdots b_n\$
$$
of the semi-Thue system $S_\M$  will appear in the images of the morphisms in the form (excluding the desynchronizations at this point) 
\begin{equation}\label{form_nodes}
(q_0,\B)_1\cdots (q_m,a_m)_1\#(b_1)_1\cdots (b_k)_1 \cv{X}{Y} (b_{k+1})_2\cdots (b_n)_2\$,\end{equation}
i.e., symbols from $\Delta_1$ are printed only before a symbol $\cv{X}{Y}\in \Theta$ and after the symbol from $\Theta$ only symbols from $\Delta_2$ are printed. The correct appearance of symbols in~\eqref{form_nodes} is forced using $d^4$ and $f^4$ as desyncronization patterns. In the images of morphism $h$ the words $d^4$ and $f^4$ are forced between the letters from  $C$ and $\overline{C}$, respectively. Note also that using a standard technique here in $h$, the images of non-overlined letters are overlined and images of overlined are non-overlined, whereas in $g$ the overlined are mapped to overlined letters and non-overlined to non-overlined letters. This forces the overlined and non-overlined configurations of the form~\eqref{form_nodes} alternate in the image of any (infinite) solution. 

Now the desyncrononization in $g$ is built in the following way (see Table~\ref{table:morph}): let $k$ be the desyncronizing symbol (that is, $k\in \{d,f\}$). Assume that we need to print a desyncronized word of a configuration of the form~\eqref{form_nodes} with images of $g$ having $k^4$ between the symbols of $C$ (or $\ov{C}$). Indeed, we need to print symbols $\#$ (or $\ov{\#}$), $\$$ (or $\ov{\$}$) and a symbol from $\Theta$ (or $\ov{\Theta}$) with $k^4$ desynchronization. Assume that before the part we are now trying to cover with $g$, a similar configuration with desyncronization is just printed by an image of $g$. Therefore, we may assume that a single $k$ from image of $d$ (the initial configuration desyncronized) or from image of $\$$ or $\ov{\$}$ is appearing in the image. The image of the form~\eqref{form_nodes} is split into parts: 
\begin{itemize}
    \item before symbol $\#$ or $\ov{\#}$:  we need to get $k^3$ to match the single $k$, letters $x_1\in \Delta_1$ such that $x\in Q\times \Gamma$  are used implying again single $k$ to the end. To eventually have a configuration of the form~\eqref{form_nodes} printed as an image, at some point we need to have symbol $\#$ or $\ov{\#}$ to the image.
    \item Symbol $\#$ or $\ov{\#}$ is printed: if it is printed as an image of  
    a symbol $\#$ or $\ov{\#}$, and then after it the image ends with $k^2$. Then symbols $x_1\in \Delta_1$ with $x\in \Delta\setminus (Q\times \Gamma)$ are necessarily used to get $k^4$ between the letters. Anyway, at some point we need have image of a letter in $R$ or $\ov{R}$ to reach $\$$ or $\ov{\$}$ eventually. This implies that also a unique symbol from $\Theta$ or $\ov{\Theta}$ will appear in the image. Note also that an image of a symbol from $R$ or $\ov{R}$ is also another possibility to get $\#$ or $\ov{\#}$ to the image. 
    \item Based on the above, a unique symbol from $\Theta$ or $\ov{\Theta}$ appears in the image as an  image of a symbol from $R$ or $\ov{R}$. Such a unique symbol is necessarily printed in order to reach $\$$ or $\ov{\$}$ with $k^4$ desynchronization, as the $k^2$ overflow we have before the symbol must be changed to $k^3$. Assume now that $k^3$ appears as an overflow in the end of the image. 
    \item Based on the $k^3$ overflow, we need a letter whose image has a single $k$ as a prefix. Therefore, between a symbol from $\Theta$ or $\ov{\Theta}$ and $\$$ or $\ov{\$}$ only letters in $ \Delta_2$ can be used to have $k^4$ between the letters, or  
    \item symbol $\$$ or $\ov{\$}$ is printed and at the end in order to cover a configuration of the form~\eqref{form_nodes}, and we have again single $k$ as an overflow.
\end{itemize}

Here is the detailed definition of the morphisms 
$$
g,h\colon C\cup \ov{C}\cup \set{d} \to (C\setminus R)\cup (\ov{C}\setminus \ov{R}) \cup\set{e,f},
$$
where $\ov{X}$ denotes the overlined copy of the alphabet $X$. 
\begin{table}[ht]
\noindent \centering \caption{Morphisms $g$ and $h$: In the right half of the table, the displayed formulae are the overlined
copies of the corresponding formulae in the left half, with $d$ and $f$
interchanged as indicated.} \label{table:morph}
\scalebox{0.72}{
\hspace{-1.2cm}\begin{tabular}{|c|c|c|c|c|}    \hline
 $x$ &   $g(x)$ & $h(x)$ &$g(\ov{x})$ & $h(\ov{x})$ \\   \hline &&&& \\ [-1em]
 $x=(q,b)_1$, $(q,b)\in (Q\times \Gamma)$  & $ dddxd$ & $\ell_{f^4}(\ov{x})$ &$fff\ov{x}f $& $\ell_{d^4}(x)$\\   \hline &&&& \\ [-1em]
  $x=\#$ & $ ddd\#dd$ & $\ell_{f^4}(\ov{\#})$ &$fff\ov{\#}ff$ & $\ell_{d^4}(\#)$\\    \hline &&&& \\ [-1em]
 $x=x_1$, $x\in \Delta\setminus (Q\times \Gamma)$  & $ ddx_1dd$ & $\ell_{f^4}(\ov{x_1})$ &$ff\ov{x_1}ff $& $\ell_{d^4}(x_1)$\\   \hline
 $x= a\cv{b}{S} z \derive{}  a b\cv{z}{S}$ in \eqref{eq:right}&   $dd r_{d^4}\left(a_1 \cv{b}{S}\right) z_2ddd$       & $\ell_{f^4}\left(\ov{a_1 b_1 \cv{z}{S}}\right)$ & $ff r_{f^4}\left(\ov{a_1 \cv{b}{S}}\right) \ov{z_2} fff$       & $\ell_d\left(a_1 b_1 \cv{z}{S}\right)$  \\ \hline
$  x= a \cv{(q,b)}{S}c\derive{} \cv{(q',a)}{(q,b)} b'c$ in \eqref{eq:sim1} &   $ddr_{d^4}\left(a_1 \cv{(q,b)}{S}\right)c_2 ddd$   & $\ell_{f^4}\left(\ov{\cv{(q',a)}{(q,b)} b'_2c_2}\right)$        & $ffr_{f^4}\left(\ov{a_1 \cv{(q,b)}{S}}\right)\ov{c_2} fff$   & $\ell_{d^4}\left(\cv{(q',a)}{(q,b)} b'_2c_2\right)$ \\ \hline
$x= \# \cv{(q,b)}{S}c\derive{} \# \cv{(q',\B)}{(q,b)} b'c $ in \eqref{eq:sim2} &  $ddd r_{d^4}\left(\# \cv{(q,b)}{S}\right)c_2 ddd$  & $\ell_{f^4}\left(\ov{\# \cv{(q',\B)}{(q,b)} b_2'c_2} \right)$  & $fff r_{f^4}\left(\ov{\# \cv{(q,b)}{S}}\right)\ov{c_2} fff$  & $\ell_{d^4}\left(\# \cv{(q',\B)}{(q,b)} b_2'c_2 \right)$ \\    \hline
 $x=a\cv{(q,b)}{S} c\derive{}  a b' \cv{(q',c)}{(q,b)}$ in \eqref{eq:sim3} & $ddr_{d^4}\left(a_1\cv{(q,b)}{S}\right)c_2 ddd$   & $\ell_{f^4}\left(\ov{a_1 b'_1 \cv{(q',c)}{(q,b)}}\right)$ &$ffr_{f^4}\left(\ov{a_1\cv{(q,b)}{S}}\right)\ov{c_2} fff$   & $\ell_{d^4}\left(a_1 b_1' \cv{(q',c)}{(q,b)}\right)$   \\ \hline
$x=a\cv{(q,b)}{S} \$\derive{} a b' \cv{(q',\B)}{(q,b)} \$ $ in \eqref{eq:sim4} &    $ddr_{d^4}\left(a_1\cv{(q,b)}{S}\right) \$ f $ & $\ell_{f^4}\left(\ov{a_1 b_1' \cv{(q',\B)}{(q,b)} \$} \right)$ &$ffr_{d^4}\left(\ov{a_1\cv{(q,b)}{S}}\right) \ov{\$} d $ & $\ell_{d^4}\left(a_1 b_1' \cv{(q',\B)}{(q,b)} \$ \right)$  \\ \hline
$x=a \cv{y}{(q,b)}z\derive{}  \cv{a}{(q,b)}yz$ in \eqref{eq:left} & $ddr_{d^4}\left(a_1 \cv{y}{(q,b)}\right) z_2ddd$   & $\ell_{f^4}\left(\ov{\cv{a}{(q,b)}y_2z_2}\right)$ &$ffr_{f^4}\left(\ov{a_1 \cv{y}{(q,b)}}\right) \ov{z_2}fff$   & $\ell_{d^4}\left(\cv{a}{(q,b)}y_2z_2\right)$ \\
    \hline
$x=\# \cv{y}{(q,b)}z\derive{} (q,b) \# \cv{y}{S}z$ in \eqref{eq:forget}  & $dddr_{d^4}\left(\# \cv{y}{(q,b)}\right) z_2ddd$  & $\ell_{f^4}\left(\ov{(q,b) \# \cv{y}{S}z_2}\right) $ & $fffr_{f^4}\left(\ov{\# \cv{y}{(q,b)}}\right) \ov{z_2}fff$  & $\ell_{d^4}\left((q,b) \# \cv{y}{S}z_2\right) $   \\
    \hline &&&& \\ [-1em]
$x=x_2$, $x\in \Delta$& $ dx_2ddd$ & $\ell_{f^4}(\ov{x_2})$ &$f\ov{x_2}fff $& $\ell_{d^4}(x_2)$\\   \hline  &&&& \\ [-1em]
 $x=\$ $ & $d\$ f $ &  $\ell_{f^4}(\ov{\$})$ & $f\ov{\$}d$ &  $\ell_{d^4}(\$)$ \\ \hline 
$x=d$ & $ed$ & $ e \ell_{d^4}\left(\#\cv{(q_0,\B)}{S}\$\right) $ & &  \\ \hline
\end{tabular}}
\end{table}

To recall the idea shortly:  the construction is similar to the construction in \eqref {inj_oPCP} except that the desynchronoization is done by $d^4$ and $f^4$ which gives tools to force the symbols $\#$, $\cv{X}{Y}\in \Theta$ and $\$$ to appear exactly once, in the correct order, in one configuration in the image. The overlining on the other hand forces that in a solution to PCP or $\Z$PCP both overlined and non-overlined letters must appear, and the overlined and non-overlined configurations alternate in the image.  

It is rather straightforward to prove the following variants of lemmata~\ref{lem:sim} and \ref{lem:a} for these new morphisms $g$ and $h$. We shall just outline the differences implied by the new methods in the construction.

For $j=1,2$, define $(\cdot)_j\colon\Delta\cup\{\#,\$\}\to
\Delta_j\cup\{\#,\$\}$ by
\[
(x)_j=
\begin{cases}
x_j, & \text{if } x\in\Delta,\\
x, & \text{if } x\in\{\#,\$\}.
\end{cases}
\]

The following lemma follows directly from the definitions of $g$ and $h$.

\begin{lemma} \label{lem:simrev}
Let $\beta_0=\#\cv{(q_0,\B)}{S}\$\derive{} \beta_1 \derive{} \beta_2 \derive{} \ldots$ be a
derivation in $S_{\M}$, where 
$$ \beta_j=w'uw''=x\#yXz\$ \quad \text{ and }\quad  \beta_{j+1}=w'vw''=w\#qYp\$ 
$$ 
with  $t=(u\derive{} v)\in R$, for some $x,y,z,w,q,p \in \Delta^*$ for $j=0,1,2, \dots $, and $X,Y\in \Theta$. 
Now, for $j=0$
\[
g(d) =ed \quad \text{ and }\quad h(d) = e\ell_{d^4}(\beta_0),
\]
and for $j=1,2,\dots$, with $\beta_j=w'uw'' $  
\begin{align*}
&g((w')_1t(w'')_2) = ddd r_{d^4}\left((x\#y)_1 X (z)_2\right)\$f, \\ 
&h((w')_1t(w'')_2) = \ell_{f^4}\left(\ov{ (w\#q)_1Y(p)_2\$} \right) , \\
&g \left(\ov{(w')_1t(w'')_2}\right) = fff r_{f^4}\left(\ov{(x\#y)_1 X (z)_2}\right)\ov{\$}d, \quad \text{ and }\\
&h\left(\ov{(w')_1t(w'')_2}\right) = \ell_{d^4}\left( (w\#q)_1Y(p)_2\$\right).  
\end{align*}
\end{lemma}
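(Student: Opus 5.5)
The plan is to verify the identities by splitting $\beta_j$ into its three regions and computing the images letter by letter from Table~\ref{table:morph}. The case $j=0$ is read off directly from the row for $d$. For $j\ge1$, first write $\beta_j=x\#yXz\$$, where $X\in\Theta$ is the unique letter of $\Theta$. This is possible by Lemma~\ref{lem:eid} and the $\Theta$-determinism (Lemma~\ref{inj:lemma1}). Every rule of $S_\M$ has the form $u=a\,X\,c$, where $a\in\Delta\cup\{\#\}$ is the letter immediately left of $X$ and $c\in\Delta\cup\{\$\}$ is the letter immediately right of it. So $w'$ is the prefix of $\beta_j$ ending just before $a$, and $w''$ is the suffix starting just after $c$.

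The word $(w')_1t(w'')_2$ then factors as follows: letters $(q,b)_1$ from the stored part left of $\#$; possibly $\#$ itself; letters $x_1$ with $x\in\Delta\setminus(Q\times\Gamma)$; the rule letter $t$; letters $x_2$; and finally $\$$, unless $\$$ is already absorbed in $t$, as in rule \eqref{eq:sim4}.

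For $g$, apply the table row by row and track the overflow of $d$'s at each junction.
\begin{itemize}
\item Each $(q,b)_1$ contributes $ddd\,x\,d$.
\item $\#$ contributes $ddd\#dd$.
\item Each $x_1$ contributes $dd\,x_1\,dd$.
\item The rule letter contributes $dd\,r_{d^4}(a_1X)\,c_2\,ddd$, or $ddd\,r_{d^4}(\#X)\,c_2\,ddd$ when $a=\#$.
\item Each $x_2$ contributes $d\,x_2\,ddd$.
\item $\$$ contributes $d\$f$.
\end{itemize}
Concatenating, at every junction the trailing $d$'s of one image and the leading $d$'s of the next add up to exactly $d^4$: $1+3$, $2+2$, $3+1$. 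The result is $ddd\,r_{d^4}((x\#y)_1X(z)_2)\$f$. This is an induction on the number of letters, with invariant ``overflow $1$ in the $Q\times\Gamma$ zone, $2$ in the $\Delta_1$ zone after $\#$, $3$ in the $\Delta_2$ zone after $X$''.

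The two degenerate cases need a separate look. When $a=\#$ (rules \eqref{eq:sim2} and \eqref{eq:forget}), the $\#$ is inside the rule image, which supplies the leading $ddd$. When $c=\$$ (rule \eqref{eq:sim4}), the rule image ends with $\$f$ directly.

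For $h$ the computation is simpler. $h$ maps every letter $y$ to $\ell_{f^4}$ of a word over $\ov{\Delta_1\cup\Delta_2\cup\Theta\cup\{\#,\$\}}$, and $\ell_{f^4}$ is a morphism. Hence $h((w')_1t(w'')_2)=\ell_{f^4}(\ov{(w')_1\,v'\,(w'')_2})$, where $v'$ is the indexed form of $v$ from the table. It remains to check that $(w')_1v'(w'')_2=(w\#q)_1Y(p)_2\$$. This holds because in each rule row, letters left of the new $\Theta$-symbol $Y$ carry index $1$ and letters right of it carry index $2$. For example, in \eqref{eq:sim1} and \eqref{eq:left} the $\Theta$-symbol moves left, and the rows write $b'_2,c_2$ or $y_2,z_2$. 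In \eqref{eq:sim3} and \eqref{eq:right} it moves right, and the rows write $a_1b_1$.

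The overlined identities follow by the symmetric computation with $d$ and $f$ interchanged, using the right half of the table.

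The main obstacle is the case analysis on how the indices $1,2$ of the letters adjacent to $t$ interact with $w'$ and $w''$. When the $\Theta$-symbol moves left, the letter $a$, which was in the $\Delta_1$ zone of $\beta_j$, becomes part of the $\Delta_2$ zone of $\beta_{j+1}$ after $Y$. I will check row by row that this reindexing is absorbed inside $t$, so that $(w')_1$ and $(w'')_2$ need no reindexing. I will also check that the boundary rows (\eqref{eq:sim2}, \eqref{eq:sim4}, \eqref{eq:forget}) have the correct leading and trailing runs of $d$ (respectively $f$); I expect any inconsistencies in the table to show up at exactly these rows.
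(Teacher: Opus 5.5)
Your plan is the direct letter-by-letter computation the paper has in mind; the paper's only justification is that the lemma ``follows directly from the definitions''. Your overflow invariant ($1$ before $\#$, $2$ after $\#$, $3$ after the $\Theta$-letter) is the right bookkeeping. The step that fails is your handling of the degenerate cases.

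\textbf{Where the junction invariant breaks.} You assume the left context of the $\Theta$-letter is $\#$ only in \eqref{eq:sim2} and \eqref{eq:forget}, and the right context is $\$$ only in \eqref{eq:sim4}. But the rules in Section~\ref{sec:revsT} take $a\in\Gamma\cup\{\#\}$, $c\in\Gamma\cup\{\$\}$ and $z\in\Delta\cup\{\$\}$. So $a=\#$ also occurs in \eqref{eq:right}, \eqref{eq:sim3} and \eqref{eq:sim4}, and a right context $\$$ also occurs in \eqref{eq:sim1}, \eqref{eq:sim2}, \eqref{eq:left} and \eqref{eq:forget}. For these instances the printed rows of Table~\ref{table:morph} give:
\begin{itemize}
\item a leading $dd\#$, so only $d^3$ separates the last $(q,b)_1$ from $\#$, or the word starts with $dd\#$ rather than $ddd\#$ when $w'=\varepsilon$;
\item a trailing $\$ddd$ instead of $\$f$.
\end{itemize}

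\textbf{These cases are unavoidable.} The first step out of $\beta_0=\#\cv{(q_0,\B)}{S}\$$ has $a=\#$ and $c=\$$, so one of two things happens.
\begin{itemize}
\item If it is \eqref{eq:sim2}, then $g(t_0)=ddd\,r_{d^4}(\#\cv{(q_0,\B)}{S})\$ddd$. Hence $g(dt_0)=h(d)\,ddd$, while $h(t_0)$ begins with $f$.
\item If it is \eqref{eq:sim4}, then $g(dt_0)$ begins with $eddd\#$, while $h(d)$ begins with $edddd\#$.
\end{itemize}
Later steps fail in the same way. After a first right move the derivation uses $b'\cv{(q',\B)}{(q_0,\B)}\$\to\cv{b'}{(q_0,\B)}(q',\B)\$$, which is rule \eqref{eq:left} with $z=\$$. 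It then uses $(q_0,\B)\#\cv{b'}{S}(q',\B)\$\to(q_0,\B)\#b'\cv{(q',\B)}{S}\$$, which is rule \eqref{eq:right} with $a=\#$. The claimed $g$-identity fails for both.

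\textbf{What is actually missing.} The identities cannot be verified from the table as printed, because for these rule instances they are false. The fix is to the construction, not to the bookkeeping. Split the rule letters by context:
\begin{itemize}
\item use a leading $ddd$ (resp.\ $fff$) whenever the left context is $\#$;
\item use a trailing $\$f$ (resp.\ $\ov{\$}d$) whenever the right context is $\$$.
\end{itemize}
Lemma~\ref{lem:gcover} would then need adjusting too, since it also assumes $\$$ comes only from \eqref{eq:sim4} or from $\$$ itself. With such a table your zone-by-zone induction goes through unchanged, and your $h$-side argument is fine.

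You should also flag, rather than take at face value, three misprints in the table:
\begin{itemize}
\item $\ell_d$ for $\ell_{d^4}$ in the overlined half of row \eqref{eq:right};
\item $r_{d^4}$ for $r_{f^4}$ in the overlined half of row \eqref{eq:sim4};
\item $(q,b)$ for $(q,b)_1$ in row \eqref{eq:forget}.
\end{itemize}
Your ``symmetric computation'' for the overlined identities depends on these being corrected.
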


\begin{lemma}\label{lem:gcover}
If $$ddd r_{d^4}(u)\$ \leq_p g(w)$$ for some nonempty word $u\in \left(\left(C\cup \ov{C}\cup  \set{d}\right)\setminus \set{\$,\ov{\$}} \right)^*$, then there exists $w'\in (Q\times\Gamma)^*\# (\Delta\setminus (Q\times\Gamma))^*$ or $w'\in (Q\times\Gamma)^*$, and $w''\in \Delta^*$ or $w''\in \Delta^*\$ $ such that,
\begin{enumerate}
    \item $ddd r_{d^4}(u)\$f=g((w')_1t(w'')_2) $, and  
    \item $(w')_1t(w'')_2 \leq_p w$.
\end{enumerate}
\end{lemma}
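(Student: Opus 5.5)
The plan is a letter-by-letter parsing of $g(w)$ from the left. The argument is driven by the number of trailing $d$'s each image of $g$ leaves as ``overflow'', much as in the proof of Lemma~\ref{lem:sim}, but with the $d^4$ desynchronization of Table~\ref{table:morph} as the bookkeeping device. Only non-overlined letters and $d$ can contribute, since overlined letters have images beginning with $f$ and $g(d)=ed$ begins with $e$. Write the target as $ddd\,r_{d^4}(u)\$=ddd\,u(0)\,d^4\,u(1)\,d^4\cdots u(|u|-1)\,d^4\$$, so that exactly four $d$'s separate consecutive letters of $C$ after the initial $ddd$.

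\emph{First letter.} The prefix $ddd$ followed by a letter of $C$ is produced only by images beginning with $ddd$: $(q,b)_1$, $\#$, or rules of type \eqref{eq:sim2} and \eqref{eq:forget}. Images beginning with $dd$ or $d$ followed by a symbol, or with $e$, are excluded. Hence $w$ begins with such a letter, which fixes the first letter of $u$.

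\emph{Induction on overflow.} Maintain the invariant that after a prefix $p$ of $w$, $g(p)$ is a prefix of the target ending in $c\,d^k$ with $c\in C$ and $k\in\{1,2,3\}$. The next image of $g$ must then start with $d^{4-k}$ followed by a letter of $C$, which sharply restricts it:
\begin{itemize}
\item If $k=1$ (after $(q,b)_1$), the next letter begins with $ddd$, so it is $(q,b)_1$, $\#$, or a rule of type \eqref{eq:sim2} or \eqref{eq:forget}.
\item If $k=2$ (after $\#$ or some $x_1$ with $x\notin Q\times\Gamma$), the next letter begins with $dd$ and a symbol, so it is some $x_1$ with $x\in\Delta\setminus(Q\times\Gamma)$ or a rule of the remaining types.
\item If $k=3$ (after a rule image ending in $ddd$, or after $x_2$), the next letter begins with $d$ and a symbol, so it is $x_2$ or $\$$.
\end{itemize}
Reading off this automaton, the sequence of letters is $(w')_1$ with $w'\in(Q\times\Gamma)^*$ or $(Q\times\Gamma)^*\#(\Delta\setminus(Q\times\Gamma))^*$. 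This is followed by exactly one $t\in R$, necessarily, since the target contains a symbol of $\Theta$: the $\Theta$-symbols of $u$ appear in $g$-images only inside rule images, and $\$$ is only reachable at overflow $3$, which requires a rule image to be passed. After $t$ come letters $x_2$ and finally $\$$.

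\emph{Terminating cases.} Rule \eqref{eq:sim4} ends in $\$f$ directly, giving the case $w''\in\Delta^*\$$ in the statement. Otherwise the word closes with the letter $\$$, giving $g(\$)=d\$f$ with $w''\in\Delta^*$ and the trailing $\$$ supplied as a letter. In both cases the image read equals $ddd\,r_{d^4}(u)\$f$, which gives item~1, and item~2 holds by construction.

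\emph{Main obstacle: uniqueness at each step.} Each step must be uniquely determined, i.e.\ two candidate images of equal overflow type must not both be prefixes of the remaining target. For the letters $x_1$, $x_2$, $(q,b)_1$ this is immediate from the visible symbol. For rules it uses that distinct rules have distinct left-hand sides (Lemma~\ref{inj:lemma1}, as in Lemma~\ref{lem:b}), so $r_{d^4}(a_1X)z_2$ identifies $t$. A second check is that a rule cannot be followed by a letter $(\cdot)_1$, nor $x_2$ by $x_1$. This holds because the overflow after a rule or after $x_2$ is $3$, and the only images beginning with a single $d$ are $x_2$ and $\$$. This also yields the uniqueness of the $\Theta$-symbol.
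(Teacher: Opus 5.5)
Your argument is correct. It rests on the same data as the paper's proof, namely the number of leading and trailing $d$'s in each non-overlined $g$-image of Table~\ref{table:morph}, but it is organized the other way round.

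The paper works backwards from the end of the target:
\begin{itemize}
\item it first asks how the final $\$$ is covered, either by a rule of type~\eqref{eq:sim4} or by $g(\$)=d\$f$;
\item in the second case it strips the maximal $\Delta_2$-suffix and notes that a rule letter $t$ must precede it;
\item it then splits on whether $g(t)$ begins with $ddd$ (types \eqref{eq:sim2}, \eqref{eq:forget}) or with $dd$, which gives the shape of $w'$;
\item item~2 is obtained by appealing to the uniqueness of the suitable rule $t$.
\end{itemize}
Your three-state overflow automaton, run forward over the letters of the given $w$, yields the shape of $w'$, the shape of $w''$ and the fact that exactly one rule letter occurs, all at once. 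Because you read the actual letters of $w$, and the first $\$$ produced must be the target's $\$$ (as $u$ contains none), item~2 holds by construction, with no uniqueness argument needed.

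There are two slips, neither fatal.

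First, your terminating cases are swapped. A rule of type~\eqref{eq:sim4} gives $w''=\varepsilon\in\Delta^*$. Closing with the letter $\$$ gives $w''\in\Delta^*\$$: the $\$$ must belong to $w''$, otherwise $g((w')_1t(w'')_2)$ would not end in $\$f$.

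Second, your ``main obstacle'' paragraph is unnecessary for your own proof, and its local claim is false. The image $g(\#)=ddd\#dd$ is a proper prefix of $g(t)$ for every $t$ of type \eqref{eq:sim2} or \eqref{eq:forget}. Likewise $g(a_1)=dda_1dd$ is a proper prefix of $g(t)$ for every rule whose left-hand side begins with a letter $a\in\Gamma$. So two candidates of the same overflow type can both be prefixes of the remaining target.

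If you want uniqueness of the parse, which the paper does invoke, you need one symbol of lookahead. After $g(\#)$ or $g(a_1)$ the overflow is $2$, and every image beginning with exactly $dd$ continues with a symbol of $\Gamma_1\cup\{\#\}$, never with a $\Theta$-symbol. By contrast, $g(t)$ continues with $d^4X$ for some $X\in\Theta$, so the next target symbol decides between the letter and the rule.
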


\begin{proof}
Assume that $u\in \left(\left(C\cup \set{d}\right)\setminus \set{\$} \right)^*$, $u\ne \varepsilon$ and that $ddd r_{d^4}(u)\$f \leq_p g(w)$. Now $d^4\$ f \le_s ddd r_{d^4}(u)\$ f $. There are two ways to cover the $\$ $ in $g(w)$, either (a) by image of $t\in R$ of type~\eqref{eq:sim4} or (b) by an image of $\$ $. 

Let us consider the first case (a). Then 
$$
ddd r_{d^4}(u)\$ f = ddd r_{d^4}(u') d^{-2} \cdot dd g(t)
$$ 
implying that to match the $ddd$ in the beginning, $u'$ must be of the form $(u'')_1\# (u''')_1$, where $u''\in (Q\times \Gamma)^*$ and $u'''\in (\Delta\setminus(Q\times \Gamma))^*$. Now setting $w'=u'$ and $w''=\varepsilon$ gives that $g((w')_1t(w'')_2)= ddd r_{d^4}(u)\$f $ 

For the case (b), $u$ has a suffix $ z \in \Delta_2^*$, and 
$$
ddd r_{d^4}(u)\$ f = ddd r_{d^4}(u')d^{-1} \cdot d r_{d^4}(z)\$ f.
$$
Assume that $z$ is the maximal suffix from $\Delta_2^*$ of $u$.  To match the prefix $ddd$, the word $u'$ must have a letter $t\in R$ as a suffix ($t$ is not of type~\eqref{eq:sim4}). We have again two cases, either 
\begin{enumerate}
    \item[(i)] $t$ is of the form~\eqref{eq:sim2} or~\eqref{eq:forget}, or 
    \item[(ii)] $t$ is of the form~\eqref{eq:right}, \eqref{eq:sim1}, \eqref{eq:sim3} or~\eqref{eq:left}.     
\end{enumerate}
For case (i), $ddd r_{d^4}(u)\$ f = ddd r_{d^4}(u'')d^{-3} \cdot g(t)\cdot dr_{d^4}(z)\$ f$
for some $u''$. Necessarily $u''=(x)_1$ where $x\in (Q\times \Gamma)^*$. Now setting 
$w'=x$ and $w''=z'\$$ where $z=(z')_2$, we have that $ddd r_{d^4}(u)\$f=g((w')_1t(w'')_2) $. 

Finally, for case (ii), $ddd r_{d^4}(u)\$ f = ddd r_{d^4}(u'')d^{-2} \cdot g(t)\cdot dr_{d^4}(z)\$ f$
for some $u''$. Necessarily $u''=(x)_1\# (y)_1$ where $x\in (Q\times \Gamma)^*$ and $y\in (\Delta_1\setminus (Q\times \Gamma))^*$. Now setting 
$w'=x\#y$ and $w''=z'\$$ where $z=(z')_2$, we have that $ddd r_{d^4}(u)\$f=g((w')_1t(w'')_2) $. 

Claim 2 follows from the above and the uniqueness of suitable rule $t$. 
\end{proof}

The following lemma is symmetric with the previous one. 

\begin{lemma}\label{lem:gcover2}
If $$fff r_{f^4}(u)\ov{\$} \leq_p g(w)$$ for some nonempty word $u\in \left(\left(C\cup \ov{C}\cup  \set{d}\right)\setminus \set{\$,\ov{\$}} \right)^*$, then exists $w'\in (Q\times\Gamma)^*\# (\Delta\setminus(Q\times\Gamma))^*$ or $w'\in (Q\times\Gamma)^*$, and $w''\in \Delta^*$ or $w''\in \Delta^*\$ $ such that,
\begin{enumerate}
    \item $fff r_{f^4}(u)\ov{\$}d=g(\ov{(w')_1t(w'')_2}) $, and  
    \item $\ov{(w')_1t(w'')_2 }\leq_p w$.
\end{enumerate}
\end{lemma}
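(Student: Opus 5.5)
The plan is to run the proof of Lemma~\ref{lem:gcover} again under the overline symmetry of Table~\ref{table:morph}. That symmetry exchanges $C$ with $\ov{C}$ and $d$ with $f$, and it is the one described in the caption. Concretely, in every row of the $g(\ov{x})$ column the pattern of desynchronizing letters is exactly the pattern of the $g(x)$ column with $d$ replaced by $f$ and every letter of $C$ replaced by its overlined copy. So the same length bookkeeping on runs of $f$ carries over: $fff$ before $(Q\times\Gamma)_1$ letters and before $\ov{\#}$, $ff$ before $\Delta_1$ letters and rule letters of types \eqref{eq:right}, \eqref{eq:sim1}, \eqref{eq:sim3}, \eqref{eq:left}, and a single $f$ before $\ov{\Delta_2}$ and $\ov{\$}$. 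The one row that is not literally symmetric is \eqref{eq:sim4}, where the table prints $r_{d^4}$. I will read it as $r_{f^4}$, since only this is consistent with the $f^4$ pattern and with Lemma~\ref{lem:simrev}.

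First I reduce to the overlined alphabet. The prefix $fff$, and more generally every run of $f$'s, occurs only in the $g$-images of overlined letters. The exceptions are $g(\$)=d\$f$, which contributes a single trailing $f$, and $g(d)=ed$, which contains none. Neither of these can produce the block $fff\,\ov{y}\,f^4$ or an overlined letter. Hence the part of $w$ covering $fff r_{f^4}(u)\ov{\$}$ consists of overlined letters, and I may assume $u\in(\ov{C}\setminus\{\ov{\$}\})^*$.

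Next I locate the covering of $\ov{\$}$. It is covered either (a) by $g(\ov t)$ with $t$ of type \eqref{eq:sim4}, or (b) by $g(\ov{\$})=f\ov{\$}d$. In case (a), matching the leading $fff$ leftwards forces the remaining prefix to be $\ov{(u'')_1\#(u''')_1}$ with $u''\in(Q\times\Gamma)^*$ and $u'''\in(\Delta\setminus(Q\times\Gamma))^*$; here $w''=\e$. In case (b), let $\ov{z}$ be the maximal $\ov{\Delta_2}$-suffix of $u$. The letter preceding it must be $\ov t$ for a rule $t$ not of type \eqref{eq:sim4}, because only images of rule letters end in $fff$ and so absorb the single leading $f$ of $g(\ov{x_2})$. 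I then split exactly as in (i)/(ii) of Lemma~\ref{lem:gcover}. If $t$ is of type \eqref{eq:sim2} or \eqref{eq:forget}, $g(\ov t)$ starts with $fff$, which forces the prefix to lie in $\ov{(Q\times\Gamma)_1}^*$. Otherwise $g(\ov t)$ starts with $ff$, which forces the prefix to be $\ov{(x)_1\#(y)_1}$. Setting $w'$ and $w''=z'\$$ as before yields the identity $fff r_{f^4}(u)\ov{\$}d=g(\ov{(w')_1t(w'')_2})$; the final $d$ comes either from $g(\ov{\$})$ or from $g(\ov t)$ in case (a).

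Claim~2 then follows because every letter in the factorization is uniquely determined. The letters of $\Delta_1$, $\Delta_2$, $\ov{\#}$ and $\ov{\$}$ are read off directly from the image. The rule letter is unique because $g$ is injective on rule letters: its image determines the left-hand side $a\,X\,c$, and $\Theta$-determinism of $S_\M$ (Lemma~\ref{inj:lemma1}) gives at most one rule per left-hand side.

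I expect the main obstacle to be the first reduction step together with the \eqref{eq:sim4} row. Some care is needed to show that images of non-overlined letters, in particular the trailing $f$ of $g(\$)$ and the $e$-prefixed $g(d)$, cannot fill part of the $f$-pattern. This will go by checking the lengths of maximal $f$-runs adjacent to letters of $C\cup\ov{C}$.
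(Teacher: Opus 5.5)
Your proposal is correct and follows the paper's route: the paper proves this lemma only by calling it the overlined, $d\leftrightarrow f$ mirror of Lemma~\ref{lem:gcover}, and your reading of the \eqref{eq:sim4} entry as $r_{f^4}$ is the intended one, since it is what Lemma~\ref{lem:simrev} requires. One small slip to fix: $g(\ov{x_2})=f\ov{x_2}fff$ also ends in $fff$, so what forces the preceding letter to be some $\ov{t}$ is the maximality of the run of $\ov{\Delta_2}$-letters of $w$ immediately before $\ov{\$}$, not the claim that only rule images end in $fff$.
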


\begin{lemma} \label{lem:arev}
Let $\M$ be a Turing machine and $S_{\M}$ be the corresponding semi-Thue
system. Then $S_\M$ is non-terminating for the initial word
 $\#\cv{(q_0,\B)}{S}\$ $ if and only if there exists an infinite word $w\in \left( C\cup \ov{C}\cup \set{d}\right)^\omega$ such that
 $g(w)=h(w)$.
\end{lemma}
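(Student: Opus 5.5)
The plan follows the proof of Lemma~\ref{lem:a}, replacing Lemma~\ref{lem:sim} by Lemmata~\ref{lem:simrev}, \ref{lem:gcover} and~\ref{lem:gcover2}.

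\emph{The ``only if'' direction.} Suppose $S_\M$ is non-terminating. By $\Theta$-determinism (Lemma~\ref{inj:lemma1}) the infinite derivation $\beta_0\derive{}\beta_1\derive{}\cdots$ is unique, and every $\beta_j$ has the form $x\#yXz\$$ with $X\in\Theta$. Write $\beta_j=w'_ju_jw''_j$ with rule $t_j=(u_j\derive{} v_j)$. Set
\[
w=d\,(w'_0)_1t_0(w''_0)_2\;\ov{(w'_1)_1t_1(w''_1)_2}\;(w'_2)_1t_2(w''_2)_2\cdots,
\]
alternating non-overlined and overlined blocks. Lemma~\ref{lem:simrev} gives the image of each block under $g$ and $h$. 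The $g$-image of block $j$ is $\beta_j$ desynchronized by $d^4$ (resp.\ $f^4$), preceded by $ddd$ (resp.\ $fff$) and followed by the overflow $\$f$ (resp.\ $\ov{\$}d$). The $h$-image of block $j$ is $\beta_{j+1}$ desynchronized by $f^4$ (resp.\ $d^4$) and overlined oppositely. I would then check by induction on $n$ that $g(d\cdot\text{blocks}_{<n})$ is a proper prefix of $h(d\cdot\text{blocks}_{<n})$, with the remainder equal to the desynchronized $\beta_n$ minus its first $k$ (where $k\in\{d,f\}$ is the current desynchronizing letter). The trailing single $k$ of each $g$-block supplies the first $k$ of the next $h$-block, which is exactly what the bookkeeping $ddd$, $\$f$, $\ov{\$}d$ is designed for. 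The lengths of both images tend to infinity, so $g(w)=h(w)$.

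\emph{The ``if'' direction.} Assume $g(w)=h(w)$ and, for contradiction, that $S_\M$ terminates. Only $g(d)=ed$ and $h(d)$ begin with $e$, so $w=dw_1$, and after $g(d)$ the $h$-overflow is $ddd\,r_{d^4}(\#\cv{(q_0,\B)}{S})\$$ (up to the $\ell/r$ conversion). Since $g(w)=h(w)$, this overflow is a prefix of $g(w_1)$. Lemma~\ref{lem:gcover} then forces $w_1$ to begin with $(w'_0)_1t_0(w''_0)_2$, and $t_0$ is the unique applicable rule by Lemma~\ref{lem:sim}-style uniqueness and determinism. By Lemma~\ref{lem:simrev}, the new $h$-overflow is the $f^4$-desynchronized, overlined $\beta_1$ ending in $\ov{\$}$, so Lemma~\ref{lem:gcover2} applies next. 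Alternating the two lemmata, we reconstruct the derivation $\beta_0\derive{}\beta_1\derive{}\cdots\derive{}\beta_n$.

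Since $S_\M$ terminates, $\beta_n$ contains $\cv{(\halt,a)}{S}$, possibly overlined. This letter occurs in no $g$-image, because no rule has $(\halt,a)$ in its left-hand side, by (M2). Hence the overflow cannot be covered by $g$, a contradiction.

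\emph{Main obstacle.} The delicate step is applying Lemmata~\ref{lem:gcover}/\ref{lem:gcover2} at each stage. The hypothesis requires the overflow to be exactly of the form $ddd\,r_{d^4}(u)\$$ with $u$ avoiding $\$$. So I must verify, from the table, that after each block the remaining $h$-word has precisely this shape: the leading $k^3$ and the correct alphabet split $\Delta_1$/$\Theta$/$\Delta_2$. I also need that no alternative $g$-factorisation exists in which overlined and non-overlined letters mix within a configuration. I expect this to follow because $h$ swaps overlining while $g$ preserves it, and because $d\neq f$ appear in disjoint positions. It must nevertheless be checked case by case against the table, including the boundary rules \eqref{eq:sim2}, \eqref{eq:sim4} and \eqref{eq:forget}, whose $g$-images start with $ddd$ or end with $\$f$.
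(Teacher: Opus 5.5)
Your proposal is correct and follows essentially the same route as the paper. For one direction you build the alternating word $d\,(w'_0)_1t_0(w''_0)_2\,\ov{(w'_1)_1t_1(w''_1)_2}\cdots$ and check it via Lemma~\ref{lem:simrev}; for the other you use the forced initial $d$, alternate Lemmas~\ref{lem:gcover} and~\ref{lem:gcover2} to reconstruct the derivation, and conclude from the fact that $\cv{(\halt,a)}{S}$ occurs in no $g$-image. The shape checks you flag as the main obstacle are what Lemma~\ref{lem:simrev} (the form of each $h$-overflow) and Lemmas~\ref{lem:gcover} and~\ref{lem:gcover2} (the forced $g$-factorisation) supply in the paper, so nothing beyond the paper's argument is needed.
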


\begin{proof}
Assume first that $S_{\M}$ is not terminating for $\#\cv{(q_0,\B)}{S}\$$, that is, there exists 
$$
\beta_0=\#\cv{(q_0,\B)}{S}\$ \derive{}\beta_1\derive{} \beta_2 \derive{} \ldots\,,
$$
where  $\beta_j=x_j u_jX_ju'_j y_j$ and $\beta_{j+1}=x_j
v_jY_jv'_j y_j=x_{j+1}u_{j+1}X_{j+1}u'_{j+1}y_{j+1}$ with $t_j=(u_jX_ju'_j\derive{} v_jY_jv'_j)\in R$ being the rule
applied in the derivation step  $\beta_j\derive{} \beta_{j+1}$ with $X_j,Y_j\in \Theta$. Now, for
$$
w=d(x_0)_1t_0(y_0)_2\ov{(x_1)_1t_1(y_1)_2}(x_2)_1t_2(y_2)_2\ov{(x_3)_1t_3(y_3)_2}\cdots,
$$
we have by Lemma~\ref{lem:simrev} that
\begin{equation*}
\begin{split}
&g(w)\\
&=ed\cdot ddd r_{d^4}\left((x_0u_0)_1X_0(u'_0y_0)_2\right)\$ f\cdot fff r_{f^4}\left(\ov{(x_1u_1)_1X_1(u'_1y_1)_2}\right)\ov{\$ } d \cdot\\
& ddd r_{d^4}\left(x_2u_2)_1X_2(u'_2y_2)_2\right)\$ f\cdot fff r_{f^4}\left(\ov{(x_3u_3)_1X_3(u'_3y_3)_2}\right)\$d\cdots   \\
&= e\ell_{d^4}\left(\#\cv{(q_0,\B)}{S}\$\right) \ell_{f^4}\left(\ov{(x_1u_1)_1X_1(u'_1y_1)_2\$}\right) \cdot\\  &\ell_{d^4}\left((x_2u_2)_1X_2(u'_2y_2)_2\$\right)\ell_{f^4}\left(\ov{(x_3u_3)_1X_3(u'_3y_3)_2} \right) \cdots\\
&=e\ell_{d^4}\left(\#\cv{(q_0,\B)}{S}\$\right) \ell_{f^4}\left(\ov{(x_0v_0)_1Y_0(v'_0y_0)_2\$}\right)\cdot \\ & \ell_{d^4}\left(x_1v_1)_1Y_1(v'_1y_1)_2\$ \right)\ell_{f^4}\left(\ov{(x_2v_2)_1Y_2(v'_2y_2)_2\$} \right) \cdots \\
&=h(w)\,,
\end{split}
\end{equation*}
and hence $g(w)=h(w)$ as required.

Assume next that there is a  word $w\in \left( C\cup \ov{C}\cup \set{d}\right)^\omega$ such that $g(w)=h(w)$ and, to the  contrary, that $S_\M$ terminates on $\#\cv{(q_0,\B)}{S}\$ $.

Necessarily, $w=dw'$ for some word $w'\in \left( C\cup \ov{C}\cup \set{d}\right)^\omega$, since only $g(d)$ and $h(d)$ are prefix comparable. Now,
$$h(d)=e\ell_{d^4}(\#\cv{(q_0,\B)}{S}\$)=e\ell_{d^4}(\beta_0)\,,$$
and therefore, $ddd r{d^4}(\#\cv{(q_0,\B)}{S})\$\le_p g(d)^{-1}g(w)$. As the only ways to have $\$$ as an image of $g$ have suffix $\$ f$, it means that actually 
$ddd r{d^4}(\#\cv{(q_0,\B)}{S})\$f\le_p g(d)^{-1}g(w)$. By Lemma~\ref{lem:gcover}, there exist $x_0$, $y_0$ and $t_0$ such that $(x_0)_1t_0(y_0)_2\le_p d^{-1}w$, where $\beta_0=x_0 u_0 y_0$ and $\beta_{1}=x_0 v_0 y_0$ and, moreover, $t_0=(u_0\derive{} v_0)\in R$ is the rule applied in the first step $\beta_0\derive{} \beta_{1}$. Also,
$h((x_0)_1t_0(y_0)_2)=\ell_{f^4}\left(\ov{(x'\# x'')_1Y(y')_2\$} \right)$, where $x'\#x''Yy'\$=\beta_1$. Now $fff r_{f^4}\left(\ov{(x'\# x'')_1Y(y')_2}\right)\ov{\$} \le g(d(x_0)_1t_0(y_0)_2)^{-1}g(w)$ and by Lemma~\ref{lem:gcover2}, 
exists $x_1$, $y_1$ and $t_1$ such that $(x_1)_1t_1(y_1)_2\le_p (d(x_0)_1t_0(y_0)_2)^{-1}w$, where $\beta_1=x_1 u_1 y_1$ and $\beta_{2}=x_1 v_1 y_1$ and, moreover, $t_1=(u_1\derive{} v_1)\in R$ is the rule applied in the second step $\beta_0\derive{} \beta_{1}\derive{} \beta_{2}$.

By applying Lemmas~\ref{lem:gcover} and \ref{lem:gcover2} iteratively, we obtain
a derivation
\[
\beta_0=\#\cv{(q_0,\B)}{S}\$ \derive{}\beta_1\derive{}^* \beta_n
\]
in ${S_{\M}}$. Let $\beta_n=x_n\cv{(\halt,a)}{S} y_n\$ $. Based on the above, at some  point in the image $h(w)$ we have either  
 \[
\ell_{d^4}\left((x_n)_1\cv{(\halt,a)}{S} (y_n)_2\$\right) \quad\text{ or }\quad
\ell_{f^4}\left(\ov{(x_n)_1\cv{(\halt,a)}{S} (y_n)_2\$}\right).
\]
Now the letter $\cv{(\halt,a)}{S}$ or $\ov{\cv{(\halt,a)}{S}}$ do not appear in images of $g$, so there is no way to cover the above words with images of $g$, a contradiction.
\end{proof}

The next lemma is the key step in the proof of $\Pi^0_1$-hardness of ZPCP. It follows from our desynchronization with $d^4$ and $f^4$, overlined symbols and, finally, from the reversibility of $S_\M$.

\begin{lemma}\label{no_biinf_sol}
The instance $(g,h)$ does not have a solution as an instance of $\Z$PCP. 
\end{lemma}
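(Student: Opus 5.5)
Assume, for a contradiction, that $\alpha\in (C\cup\ov{C}\cup\set{d})^\Z$ is a solution of the instance $(g,h)$, so that $g(\alpha)=h(\alpha)$ up to some shift $\tau$. The plan is to show that $\alpha$ must encode a bi-infinite sequence of configurations of $S_\M$. The forward direction can then be reconciled with the construction, but the backward direction cannot: by Lemma~\ref{no:inf_rev_derive} it would require an infinite reverse derivation, which does not exist.

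\textbf{Step 1: the solution contains no $d$-letter and has the right shape.} The letter $e$ occurs only in $g(d)$ and $h(d)$, and in each of them it is the first letter. In a bi-infinite solution, an occurrence of $d$ in $\alpha$ produces an $e$ in $h(\alpha)$, followed by the long word $\ell_{d^4}(\beta_0)$. In $g(\alpha)$ this $e$ must be matched by the $e$ of $g(d)=ed$. Comparing these forces the other occurrence of $e$ to fall at a misaligned position, so no $d$ can occur, and hence $\alpha\in (C\cup\ov C)^\Z$.

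Next, $h$ maps non-overlined letters to overlined ones and vice versa, while $g$ preserves overlining. Consequently the $g$-image and the $h$-image cannot both consist of blocks of a single type. Both types therefore appear, and the blocks of the two types alternate.

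The delicate case is a solution built only from letters of $\Delta_1\cup\Delta_2$, which resembles the trivial $x^\omega$ solution. I would exclude it by counting desynchronizing letters. The $h$-images always carry $k^4$ (with $k\in\set{d,f}$) before each letter. The $g$-images of $\Delta_1$-letters give overflow $k^2$ between symbols and need a suffix switch. The $g$-images of $\Delta_2$-letters give the pattern $k,\dots,k^3$. A block of a single type can never switch to the other type without producing $\$f$ or $\ov{\$}d$. Hence $\#$, some rule letter from $R\cup\ov R$, and $\$$ must all occur, and the analysis behind Lemmas~\ref{lem:gcover} and~\ref{lem:gcover2} applies.

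\textbf{Step 2: parse $\alpha$ into configuration blocks.} Fix an occurrence of $\$$ or $\ov{\$}$ in $g(\alpha)$. Lemmas~\ref{lem:gcover} and~\ref{lem:gcover2}, applied to the right, show that $\alpha$ factors, from some point on, as $(w'_j)_1t_j(w''_j)_2$ with alternating overlines, and that consecutive blocks correspond to successive configurations $\gamma_j\derive{}\gamma_{j+1}$. Each $\gamma_j$ lies in $(Q\times\Gamma)^*\#\Delta^*\Theta\Delta^*\$$. Since the factorization is unique (bounded-delay type arguments) and the structure is translation invariant, the same parsing extends to the left. This produces a bi-infinite sequence $\dots\derive{}\gamma_{-1}\derive{}\gamma_0\derive{}\gamma_1\derive{}\cdots$ in $S_\M$.

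The main obstacle is making this leftward extension rigorous: one must show that the $h$-image of a block lying to the left is exactly what covers the $g$-image of the block, and that nothing else can sit between blocks. I would first try the symmetric argument, matching $h$-images of the blocks against the $g$-prefix $ddd$ or $fff$ of the next block. Here the $\Theta$-reversibility of Lemma~\ref{inj:lemma1} ensures that the preceding configuration is uniquely determined.

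\textbf{Step 3: contradiction.} The sequence $(\gamma_{-j})_{j\ge0}$ is an infinite reverse derivation starting from $\gamma_0\in (Q\times\Gamma)^*\#\Delta^*\Theta\Delta^*\$$. This contradicts Lemma~\ref{no:inf_rev_derive}. Hence no bi-infinite solution exists.
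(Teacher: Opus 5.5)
Your route is the paper's: rule out $d$, use the overlining to force alternating blocks in both directions, note that every switch between $d^4$- and $f^4$-desynchronized parts of $g(\alpha)$ passes through an image ending in $\$f$ or $\ov{\$}d$, read off configurations via Lemmas~\ref{lem:gcover} and~\ref{lem:gcover2}, and contradict Lemma~\ref{no:inf_rev_derive}. The gap is in Step~1. Your claim that comparing the $e$'s ``forces the other occurrence of $e$ to fall at a misaligned position'' gives nothing when $d$ occurs exactly once in $\alpha$, since then there is no other occurrence. That is precisely the case that matters. If $\M$ does not halt, the part of $\alpha$ to the right of the $d$ can be completed exactly as in Lemma~\ref{lem:arev}, so the contradiction has to come from the left.

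The missing observation, which is how the paper argues, is the following. The $e$ of some $h(d)$ must sit opposite the $e$ of some $g(d)$. So the letters immediately to the left of these two $e$'s are the last letter of a $g$-image and the last letter of an $h$-image. By Table~\ref{table:morph}, every $g$-image ends in $d$ or $f$, while every $h$-image ends in a letter of $(C\setminus R)\cup(\ov C\setminus\ov R)$. These cannot coincide, which excludes every occurrence of $d$ at once, however many there are. You could instead run your Steps~2--3 on the left-infinite part preceding the $d$, but as written you do not.

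A smaller point concerns your Step~2. For the leftward extension you do not need ``uniqueness of factorization'' or ``translation invariance''. This matters because bounded delay was proved only for the earlier morphisms in \eqref{inj_oPCP}, not for the morphisms of Table~\ref{table:morph}. Since the blocks alternate infinitely often in both directions, the local argument via Lemmas~\ref{lem:gcover} and~\ref{lem:gcover2} can be applied at each of the infinitely many switches to the left, which is what the paper does. $\Theta$-reversibility is used only through Lemma~\ref{no:inf_rev_derive}.
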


\begin{proof}
Assume, to the contrary, that there is a word
$\alpha \in (C\cup \overline C\cup\{d\})^\mathbb{Z}$ and a shift
$\tau \in \mathbb{Z}$ such that
$g(\alpha)(i+\tau)=h(\alpha)(i)$ for all $i\in\mathbb{Z}$.

We first show that the letter $d$ cannot occur in $\alpha$. Indeed, the
letter $e$ occurs only in the images of $d$, and in both $g(d)$ and $h(d)$
it occurs as the first letter. Hence any occurrence of $d$ in $\alpha$ gives,
after aligning the corresponding occurrences of $e$, a right-infinite suffix
which must satisfy the same forced comparison as in the proof of
Lemma~\ref{lem:arev}. Thus the right-hand side of such an occurrence of $d$ is forced, exactly as
in the proof of Lemm~\ref{lem:arev}, to follow the unique derivation of $S_M$ starting
from $\#\cv{(q_0,\star)}{S}\$$. In particular, no further
occurrence of $d$ appears on this right-hand side. Looking to the left of the
aligned occurrence of $e$, the images of the letter immediately preceding
$d$ would have to be suffix-comparable under $g$ and $h$. Indeed, no image ending immediately before an occurrence of $e$ can have
a suffix matching the corresponding suffix of the other morphism, since
the only occurrences of $e$ are the initial letters of $g(d)$ and $h(d)$,
and the preceding images end in letters from
$(C\setminus R)\cup(\overline C\setminus \overline R)\cup\{d,f\}$. A direct inspection
of Table~\ref{table:morph} shows that this never happens for any letter of
$C\cup\overline C\cup\{d\}$. This contradiction proves that $d$ does not
occur in $\alpha$.

Thus $\alpha\in (C\cup\overline C)^\mathbb{Z}$. We record the two
consequences of Table~1 that will be used below. First, the morphism $g$
preserves the overlining of letters, whereas $h$ changes overlined letters
to non-overlined ones and non-overlined letters to overlined ones. Therefore
a one-sided infinite factor entirely over $C$ or entirely over $\overline C$
cannot occur in a solution. Hence $\alpha$ can be written in the form
\[
   \alpha=\cdots \delta_{-1}\gamma_{-1}\delta_0\gamma_0
              \delta_1\gamma_1\cdots ,
\]
where, for every $i\in\mathbb{Z}$, $\delta_i\in \overline C^+$ and
$\gamma_i\in C^+$, or with the roles of $C$ and $\overline C$ interchanged.

Second, the only way, in an image under $g$, to pass from a block
desynchronized by $d^4$ to a block desynchronized by $f^4$ is through an
image ending in $\$f$; similarly, the only way to pass from a block
desynchronized by $f^4$ to a block desynchronized by $d^4$ is through an
image ending in $\$d$. Consequently, whenever an image under $h$ changes
from an $f^4$-desynchronized block to a $d^4$-desynchronized block, or
conversely, the corresponding image under $g$ must contain the final marker
$\$$ of a simulated configuration.

Consider now a factor
\[
   h(\gamma_i\delta_{i+1}\gamma_{i+1})
      = \ell_{f^4}(u_i)\ell_{d^4}(v_{i+1})\ell_{f^4}(u_{i+1})
\]
of $h(\alpha)$, where $u_i,v_{i+1},u_{i+1}$ are nonempty words over the
corresponding non-rule alphabets. By the previous paragraph, the matching
factor of $g(\alpha)$ contains a factor of the form
$ddd r_{d^4}(u)\$f$ or of the form $fff r_{f^4}(u)\$d$. Applying
Lemma~8 or Lemma~9, respectively, we obtain a factor
$(w')_1t(w'')_2$ of $\alpha$, with $t\in R$, such that its image under $g$
is exactly the encoded form of one configuration of $S_M$, and its image
under $h$ is the encoded form of its successor configuration.

Since $\alpha$ is bi-infinite, the same argument can be applied repeatedly
to the left. Thus, from one encoded configuration
$x\#yXz\$\in (Q\times\Gamma)^*\#\Delta^*\Theta\Delta^*\$$,
we obtain an infinite sequence of reverse derivation steps in $S_M$:
\[
   \cdots \longrightarrow \beta_{-2}
          \longrightarrow \beta_{-1}
          \longrightarrow \beta_0=x\#yXz\$ .
\]
This contradicts Lemma~3, which states that no infinite reverse derivation
can start from such a word. Therefore the assumed bi-infinite solution
$\alpha$ cannot exist.
\end{proof}

Using the idea of the proof of Theorem~\ref{thm:zpcps}, that is, by taking a reverse copy of the morphisms $g$ and $h$ with totally new alphabets and then combining them with $g$ and $h$ to get morphisms $g_1$ and $h_1$, Lemma~\ref{no_biinf_sol} directly implies the following lemma. More precisely, take a disjoint copy $D$ of the alphabet
$C\cup\overline C\cup\{d\}$ and define on $D$ a reversed copy of the
morphisms $g$ and $h$, replacing each image by its reversal and using
fresh desynchronizing letters. The two copies are joined only through the
forced initial markers corresponding to the letter $d$. By Lemma~\ref{no_biinf_sol}, no
bi-infinite solution can be contained entirely in one copy. Hence any
solution of the combined instance must contain the joining point, and its
right-hand side is exactly the infinite computation produced in Lemma~10,
while its left-hand side is the reversed copy of the same computation.

\begin{lemma}\label{lem:12}
Let $\M$ be a Turing machine and $S_{\M}$ be the corresponding semi-Thue
system. Then $S_\M$ is non-terminating for the initial word
 $\#\cv{(q_0,\B)}{S}\$ $ if and only if there exists a bi-infinite word $\alpha$ such that
 $g_1(\alpha)=h_1(\alpha)$.
\end{lemma}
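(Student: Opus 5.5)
We prove Lemma~\ref{lem:12} in two directions, using the joined instance $(g_1,h_1)$ described just above. It consists of the morphisms $g,h$ of Table~\ref{table:morph} on $C\cup\ov{C}\cup\set{d}$, together with their reversed copies on a disjoint alphabet $D$ with fresh desynchronizing letters. In the reversed copy, each image $g(x)$, $h(x)$ is replaced by its mirror image and the letters are renamed. As in the proof of Theorem~\ref{thm:zpcps}, the two copies are glued at the initial letter $d$ and its copy $d'$: the images of $d'd$ under $g_1$ and $h_1$ must match around a common occurrence of $e$, and the images of $d'$ play the role of the reversed $h(d)$ and $g(d)$ from Theorem~\ref{thm:zpcps}. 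So that mirror images in $D$ only ever meet other mirror images, the reversed copy uses its own marker letter instead of $e$, placed at the end of the images.

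\emph{If direction.} Assume $S_\M$ is non-terminating from $\#\cv{(q_0,\B)}{S}\$$. By Lemma~\ref{lem:arev} there is an infinite word $w=dw_0$ with $g(w)=h(w)$, namely the word built in that proof from the unique infinite derivation. Let $w^\dagger$ be the copy of $w$ in $D$, and put $\alpha=(w^\dagger)^R\cdot w$, so that the position $0$ of $\alpha$ is the letter $d$. On the right, $g_1(w)=h_1(w)$ holds by Lemma~\ref{lem:arev}. On the left, since the copy consists of mirror images, $g_1((w^\dagger)^R)$ is the reversal of $g(w)$ up to renaming, and the same holds for $h_1$. These left-infinite words are equal, with the overflow of $h(d)$ over $g(d)$ placed on the left of the junction. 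It remains to check that the two overflows at the junction have the same length, so that a single shift $\tau$ works on both sides. This is the same bookkeeping as in Theorem~\ref{thm:zpcps}: the initial segments $e d$ and $e\ell_{d^4}(\beta_0)$, together with their mirrored counterparts, are chosen so that the offsets coincide.

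\emph{Only if direction.} Assume $g_1(\alpha)(i+\tau)=h_1(\alpha)(i)$ for all $i$. By observation~1 in the proof of Theorem~\ref{thm:zpcps}, no image of a letter of one copy can be aligned with an image of a letter of the other copy. Hence any maximal factor of $\alpha$ lying in one copy is aligned only with images from the same copy, apart from boundary effects at the gluing letters. If $\alpha$ were entirely in $C\cup\ov{C}\cup\set{d}$, it would be a $\Z$PCP solution of $(g,h)$, which Lemma~\ref{no_biinf_sol} rules out. If $\alpha$ were entirely in $D$, reversing $\alpha$ would give a $\Z$PCP solution of $(g,h)$ again, also ruled out. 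So both copies occur in $\alpha$, and a switch between the copies happens only at a matched pair $d'd$. After that junction, the argument in the proof of Lemma~\ref{lem:arev} (the iteration of Lemmas~\ref{lem:gcover} and~\ref{lem:gcover2}) forces the right-hand part to simulate the derivation of $S_\M$ from $\beta_0$. Since the right-hand part is infinite, that derivation is infinite; otherwise the halting letter $\cv{(\halt,a)}{S}$ would appear in $h_1(\alpha)$ and could not be covered by $g_1$. Thus $S_\M$ is non-terminating.

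\emph{Main obstacle.} The hardest step is the junction analysis: showing that the copies can switch only at a $d'd$ pair, and that at most one switch occurs. The approach is the one already used in Lemma~\ref{no_biinf_sol}. The marker $e$, and its fresh mirrored analogue, occur only at the outer end of the images of $d$ and $d'$, so aligning them forces the letters around the gluing point. Then a direct inspection of Table~\ref{table:morph} shows that no other letter pair from different copies is suffix- or prefix-comparable under $g_1$ and $h_1$. Once this holds, a second junction to the right would require another occurrence of $d$ in the forced derivation, and none exists. This leaves exactly the two cases handled above.
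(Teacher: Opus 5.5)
Your proposal is correct and follows essentially the same route as the paper. It uses a mirrored copy on a disjoint alphabet glued at $d'd$, Lemma~\ref{no_biinf_sol} to rule out solutions lying in a single copy, the forcing argument of Lemma~\ref{lem:arev} to the right of the junction, and $\alpha=(w^\dagger)^R w$ for the converse direction; your junction analysis is in fact more explicit than the paper's sketch, and the only superfluous step is the ``overflow bookkeeping'', since $g(w)=h(w)$ holds position by position, so the mirrored left halves coincide exactly and $\tau=0$ works on both sides without adjustment.
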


We now state the main result of this section. 

\begin{theorem}
The $\Z$PCP is $\Pi_1^0$-hard. Moreover, $\Z\text{PCP}\in \Sigma^0_2 \setminus (\Pi^0_1 \cup \Sigma^0_1)$. 
\end{theorem}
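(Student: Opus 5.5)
The plan is to assemble the theorem from Lemma~\ref{lem:12}, Theorem~\ref{thm:stcomp}, and the two results quoted from \cite{FHHS}: $\Z$PCP $\in\Sigma^0_2$ and $\Z$PCP $\notin\Pi^0_1$.

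\textbf{Step 1: $\Pi^0_1$-hardness.} Given a Turing machine $\M$ satisfying (M1)--(M5), we first construct effectively the semi-Thue system $S_\M$ of Section~\ref{sec:revsT}. From it we build the morphisms $g,h$ of Table~\ref{table:morph}, and then the combined pair $(g_1,h_1)$ described before Lemma~\ref{lem:12}. Every stage is a computable transformation of the description of $\M$, so the map $\M\mapsto(g_1,h_1)$ is recursive. Theorem~\ref{TM:sT} says that $\M$ does not halt on empty input if and only if $S_\M$ is non-terminating on $\#\cv{(q_0,\B)}{S}\$$. Lemma~\ref{lem:12} says the latter holds if and only if $(g_1,h_1)$ has a bi-infinite solution. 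This gives a many-one reduction from the non-halting problem to $\Z$PCP. The non-halting problem is $\Pi^0_1$-complete, as recalled in the proof of Theorem~\ref{thm:stcomp}, so $\Z$PCP is $\Pi^0_1$-hard.

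\textbf{Step 2: locating $\Z$PCP in the hierarchy.} Suppose $\Z$PCP were in $\Sigma^0_1$. Then every $\Pi^0_1$ set would be many-one reducible to a $\Sigma^0_1$ set and hence would itself be $\Sigma^0_1$. That would give $\Pi^0_1\subseteq\Sigma^0_1$, which contradicts the strictness of the arithmetical hierarchy; concretely, the complement of the halting set would become recursively enumerable. So $\Z$PCP $\notin\Sigma^0_1$. Membership in $\Sigma^0_2$ is taken from \cite{FHHS}. The same paper shows $\Z$PCP $\notin\Pi^0_1$, via the undecidability reduction of \cite{bi-inf}: that reduction maps a $\Sigma^0_1$-complete problem into $\Z$PCP. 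Combining these gives $\Z$PCP $\in\Sigma^0_2\setminus(\Pi^0_1\cup\Sigma^0_1)$.

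\textbf{Main obstacle.} The delicate point is the correctness of Lemma~\ref{lem:12}, especially its direction ``solution implies non-termination''. A bi-infinite solution of $(g_1,h_1)$ must pass through the joining point of the two copies. If it did not, it would lie entirely in one copy, which Lemma~\ref{no_biinf_sol} excludes for the forward copy and its mirror argument excludes for the reversed copy. Near the joining point, the $e$-marker alignment forces the shift. After that, the right-hand side must follow the unique derivation of $S_\M$, exactly as in Lemma~\ref{lem:arev}. Since the lemma is stated earlier, I would simply invoke it.

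For $\notin\Pi^0_1$ I would also check that the citation applies. If desired, it can be recovered as follows: the reduction of \cite{bi-inf} produces, from $\M$, an instance that is solvable exactly when $\M$ halts. That makes $\Z$PCP $\Sigma^0_1$-hard, which again excludes membership in $\Pi^0_1$ by the strictness of the hierarchy.
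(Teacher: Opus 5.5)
Your proposal follows the paper's proof. $\Pi^0_1$-hardness comes from Lemma~\ref{lem:12}, which you explicitly compose with Theorem~\ref{TM:sT} and the $\Pi^0_1$-completeness of non-halting. Non-membership in $\Sigma^0_1$ follows from that hardness, and both $\Sigma^0_2$-membership and non-membership in $\Pi^0_1$ are cited from \cite{FHHS}. Your remarks on the recursiveness of $\M\mapsto(g_1,h_1)$ and on why hardness excludes $\Sigma^0_1$ only spell out what the paper leaves implicit.
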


\begin{proof}
The first claim follows from Lemma~\ref{lem:12}. 

By \cite{FHHS}, we have
\(\Z\text{PCP}\in\Sigma^0_2\) and
\(\Z\text{PCP}\notin\Pi^0_1\), where the latter non-membership result is
based on the construction of \cite{bi-inf}. Moreover, the
\(\Pi^0_1\)-hardness proved above implies that
\(\Z\text{PCP}\notin\Sigma^0_1\). Hence
\[
   \Z\text{PCP}\in\Sigma^0_2\setminus(\Pi^0_1\cup\Sigma^0_1).
\] 
\end{proof}

\section{Conclusion}

We have determined the exact arithmetical complexity of several infinite variants of the Post Correspondence Problem. In particular, we proved that the $\omega$PCP remains $\Pi^0_1$-complete even for injective morphisms, indeed for morphisms of bounded delay $2$, that $\ops$ and $\zps$ are $\Pi^0_1$-complete for every $s\in\N$. 

For the ordinary $\Z$PCP, we proved that it is $\Pi^0_1$-hard and belongs to
$\Sigma^0_2\setminus(\Pi^0_1\cup\Sigma^0_1)$. Thus $\Z$PCP lies properly at
the second level of the arithmetical hierarchy, although the question whether
it is $\Sigma^0_2$-complete remains open.

\paragraph{Acknowledgements.} During the preparation of this manuscript, the authors used ChatGPT
(OpenAI) to assist with language editing, proofreading, and improving the
presentation. All mathematical arguments, results, and final editorial
decisions are the responsibility of the authors.

\bibliographystyle{abbrv}
\bibliography{database.bib}

\end{document}